\newcolumntype{P}[1]{>{\centering\arraybackslash}p{#1}}
\newcolumntype{d}[1]{D..{#1}} 
\pgfplotsset{compat=1.18}
\newcommand{\E}{\mathbb{E}_g}
\DeclareMathOperator*{\plim}{plim}
\newtheorem{theorem}{Theorem}
\newtheorem{corollary}{Corollary}
\newtheorem{proposition}{Proposition}
\newtheorem{assumption}{Assumption}
\renewcommand{\p@subsection}{}
\renewcommand{\p@subsubsection}{}
\begin{document}

\title{Identification of Heterogeneous Peer Effects\thanks{We are grateful to Yann Bramoullé, Vincent Boucher, Stephan Bonhomme, and Mathieu Lambotte for their helpful comments and suggestions. We also thank participants at the Econometrics Seminars at Washington University in St. Louis, New York University, University of Exeter, and Concordia University, as well as participants at the Africa Meeting of the Econometric Society (AFES, 2026) and the World Congress of Econometrics, 2026. This research uses data from the National Longitudinal Study of Adolescent to Adult Health (Add Health), a program that is directed by Kathleen Mullan Harris and designed by J. Richard Udry, Peter S. Bearman, and Kathleen Mullan Harris at the University of North Carolina at Chapel Hill, and funded by Grant P01-HD31921 from the Eunice Kennedy Shriver National Institute of Child Health and Human Development, with cooperative funding from 23 other US federal agencies and foundations. Special acknowledgment is given to Ronald R. Rindfuss and Barbara Entwisle for assistance in the original design. Information on how to obtain Add Health data files is available on the Add Health website. No direct support was received from Grant P01-HD31921 for this research.}
}
\author{Eyo I. Herstad\thanks{Erasmus University Rotterdam and Tindbergen Institute. email: \href{mailto:herstad@ese.eur.nl}{herstad@ese.eur.nl}} \and Aristide Houndetoungan\thanks{D\'{e}partement d'\'{e}conomique, Universit\'{e} Laval. email: \href{mailto:ahoundetoungan@ecn.ulaval.ca}{ahoundetoungan@ecn.ulaval.ca}} \and Myungkou Shin\thanks{School of Social Sciences, University of Surrey. email: \href{mailto:m.shin@surrey.ac.uk}{m.shin@surrey.ac.uk}}}
\date{July 24, 2026}

\maketitle
\begin{abstract}
    We develop a model of peer effects where each peer has a separate effect depending on their rank in the distribution of peers' outcomes. Our model admits a unique equilibrium, and model parameters can be identified using peers' exogenous characteristics. To obtain a more parsimonious model of peer effects, we introduce a tractable specification based on quantile-dependent peer effect coefficients, and develop a specification test. Applying the model to several student outcomes in the Add Health data, we uncover heterogeneous and often non-monotonic spillovers that cannot be captured by existing models. Our results have direct implications for counterfactual analysis, suggesting that a student's influence in a network depends not only on network structure, but also on that student's position in the outcome distribution of their peers.
\end{abstract}

\clearpage
\onehalfspacing

\section{Introduction}\label{sec:intro}
Economic agents are often connected. These connections raise important questions about how linked units influence one another through spillovers and conformity, known as peer effects. For example, we may expect a student to perform better academically if they have a friend with high grades, or to get in trouble at school if the majority of their friends do. These types of spillovers are inherently heterogeneous. The effect of each peer's outcome on the student's outcome depends on the outcomes of all the peers of the student. However, the existing models of peer effects either assume homogeneous effects from peers through the linear-in-means (LiM) model \citep{manski1993identification} or impose strong parametric restrictions on the underlying peer effects \citep{boucher2024toward}. These models, when misspecified, may lead to inaccurate counterfactual analyses. This greatly limits the scope for peer effect estimation in empirical settings, where it's well known that peer effects are heterogeneous \citep{sacerdote2011peer,masten2018random,bramoulle2020peer}. 

Our paper fills this gap between theory and empirical reality by introducing a flexible framework for the identification and estimation of heterogeneous peer effects. Our framework, in its most general form, allows for each peer's outcome to have a differential effect on the individual, determined by their rank in the peer group of the said individual, in terms of the outcome. The framework, therefore, allows for a type of endogenous heterogeneity where the effect of one individual on another depends on the full distribution of outcomes among each individual's peers. The framework covers several special cases of endogenous peer effects that are of empirical interest, such as quantile peer effects and extreme order statistics. 

In its most general form, our model can be motivated as a regression of an individual's outcome on the distribution of their peers' outcomes. This is because ordered peer outcomes have a one-to-one relationship with the (empirical) distribution of peer outcomes. As such, our model allows for a wide range of peer effect patterns that are functions of the empirical distribution of peer outcomes. 

The generality of our model captures many economically relevant features of peer effects. Firstly, our model allows for substitutability and complementarity among peers. Perfectly substitutable peers, i.e., linear-in-means, and perfectly complementary peers, i.e., linear-in-mins, are nested as special cases of our model.\footnote{By a linear-in-mins model we mean a model where the outcome is a linear function of the minimum outcome of their peers. This model has previously been discussed in \citet{tao2014social}.} Secondly, our model allows for returns to scale in terms of the number of peers, as the total peer effect can be increasing in the number of peers. 

Importantly, our model allows for a non-monotonic marginal effect of peer outcomes. This means that we do not impose that peers with higher outcomes necessarily have a greater impact on an individual's outcome than peers with lower outcomes, or vice versa. This is a particularly distinct feature of our model, which is not shared by any existing models for heterogeneous peer effects. For example, a recent and influential paper, \citet{boucher2024toward}, models peer effects as linear in either convex or concave functions of peer outcomes. This model recommends using the Constant Elasticity of Substitution (CES) aggregator to measure the effect of peer outcomes. Since the CES function is either convex or concave, the marginal effect of peer outcomes in this model is necessarily monotonic.\footnote{This model also requires all outcomes to be strictly positive, which our model does not require.} This is not true in our model, as can be seen from a special case of it where an individual is only affected by the peer whose outcome is the median among said individual's peers. In this case, the marginal effect of a peer's outcome is zero for individuals who are not the median, meaning the extreme outcomes have no impact on the peer effect, while the median has a non-zero effect. These types of non-monotonic peer effects have been heavily discussed in theories on social interactions, such as the theory of social comparison where peer influence is driven by the outcome of a representative peer \citep{festinger1954theory}. In this framework, the outcome of this representative peer serves as a focal point for the group, and high outcome peers may cause negative spillovers due to discouragement effects \citep{rogers2016discouraged}.

This flexibility in our model does not just allow us to detect more varied patterns of peer effects, but has direct implications for policy. Often, researchers want to estimate spillovers to identify key players in the network \citep{ballester2006s}, optimal treatment assignment \citep{galeotti2020targeting}, or to design optimal network allocations to improve outcomes in a network \citep{carrell2013natural}. In all of these cases, having a misspecified model for spillovers can lead to sub-optimal or even damaging policies. For example, if a researcher uses a linear-in-means model when the true model is linear-in-median, they may mistakenly target treatment to the wrong individuals, when in reality these individuals have little influence on the median and therefore the true spillover. 

In the general version of our model, the dimension of the model parameter grows quadratically in the maximum number of peers, making interpretation and estimation challenging. We therefore show how to construct lower-dimensional restricted versions of the model, with a focus on interpolated quantiles. The quantile model allows us to capture peer effect heterogeneity by allowing peers to exert distinct influences depending on their location in the peer outcome distribution. As the researcher needs to select the number of quantiles to include, we develop a specification test that allows practitioners to decide between two potentially misspecified models that do not nest each other. We then provide guidance for researchers in how to apply this test to select model specifications in practice. Our simulations indicate that this model selection method performs well and has good power properties in realistic settings. In our empirical application, we apply the quantile version of our model and use the test to select the number of quantiles to use. 

The primary contribution of this paper is to introduce and show the identification of a general model of heterogeneous peer effects. This model allows for any type of spillovers that can be represented through the empirical peer outcome distribution. Our identification result makes use of the conditional distribution of the outcomes given the network connections and the individual characteristics, with the network assumed to be exogenous. The steps of the identification argument are as follows. First, we show that the general model admits a unique equilibrium when the total incoming spillover from all peers is bounded, in absolute value, by one for each individual. As in most nonlinear peer effects models, our identification conditions rely on a high-level rank restriction \citep{brock2007identification, yang2017social, boucher2024toward}. We show that this restriction is satisfied when the network includes individuals who have friends-of-friends who are not their direct friends. This condition is the same as that used to obtain identification in the LiM model \citep{BDF}. Under such a network structure, we show that the distribution of the exogenous characteristics of these friends-of-friends can be used as instruments. We then construct moment conditions to identify all the parameters of the model. This step relies on an instrument relevance condition closely related to the equivalent condition of the LiM model. We also provide a specification test of the model that allows practitioners to choose a model specification in a data-driven way.

Our second contribution is to use our model to show that non-monotonic peer effects are highly prevalent in real world network data. We do this by estimating a four-quantile version of our model on 11 different outcomes in the National Longitudinal Study of Adolescent to Adult Health (Add Health) dataset.\footnote{We choose four quantiles as our specification test selects four quantiles for almost all outcomes.} We then apply our estimator and find a wide range of different spillovers. We find non-monotonic marginal effects of peers for most of our outcomes, with the middle of the peer outcome distribution being the most important, consistent with psychological models of peer effects \citep{festinger1954theory,rogers2016discouraged}. For example, we find that for smoking behavior, it is the middle of the distribution that matters, while the minimum and maximum have no effect. Similarly, for academic achievement (GPA), what matters is the 66th quantile, while the maximum has a small negative effect. Existing methods such as LiM and the CES approach of \citet{boucher2024toward} would not be able to capture these spillover patterns as they show a pattern of non-monotonic marginal peer effects.

Beyond the parameter estimates, we show that the choice of peer effect model has strong effects on estimates of policy objects of interest, like student influence. Using the empirical results, we identify key players as the individuals whose removal of ties in the network yields the largest changes in the outcome distribution. Identifying such nodes is important, for example, for a social planner who is implementing targeted policies \citep{ballester2006s, lee2021key}. The results indicate that node rankings based on influence differ between the proposed model and the LiM and CES specifications, especially for outcomes that are characterized by non-monotonic peer effects, which these specifications fail to capture. Analyzing such outcomes using the LiM or CES models can distort the identification of key players, thereby reducing the effectiveness of targeted policy interventions.

The study of peer effects has been a large literature in economics \citep{bramoulle2020peer,de2020econometric}, with seminal contributions by \citet{manski1993identification,sacerdote2001peer,BDF}. A strand of this literature has focused on estimating peer effects as non-linear functions of peer outcomes, such as extreme values \citep{tao2014social,tatsi2015endogenous}, the proportion of peers above a cutoff \citep{Kang,GKN} or through quantile regressions \citep{Kang}.\footnote{Another strand of this literature has also estimated heterogeneous peer effects based on latent traits or observable
characteristics, such as race or gender \citep{boucher2022peer,comola2025heterogeneous}. These models, while conceptually related, do not directly map into our framework. } However, these contributions remain sporadic and disconnected, lacking both a unifying framework and a formal theoretical treatment of the identification and estimation challenges involved. This paper provides a rigorous treatment of peer effect heterogeneity that the prior work has postulated informally, unifying the intuitions underlying the empirical practice in a tractable framework.

The rest of this paper is structured as follows. Section \ref{sec:model} discusses the model and its implications, and Section \ref{sec:ident} presents our formal identification results, specification tests, and comparisons to other models. Section \ref{sec:sim} presents and discusses our simulation results, and Section \ref{sec:emp} applies our methodology to the Add Health dataset.

\section{Model}\label{sec:model}
We consider a many-network setup with $m$ networks in total, indexed by $g=1,\dots,m$. In the $g$-th network, there are $n_g$ units, indexed by $i=1,\dots,n_g$. We assume independence across networks, but not across units within a network. The network structure of the $g$-th network is represented by an adjacency matrix $\mathbf{A}_g = [a_{i,j,g}]$. We define $a_{i,j,g} = 1$ if the $j$-th unit in the $g$-th network is a friend (or peer) of the $i$-th unit in the $g$-th network, and $a_{i,j,g} = 0$ otherwise.\footnote{In weighted networks, $a_{i,j,g}$ can take nonnegative values (not necessarily binary) to reflect the intensity of the outgoing link from $i$ to $j$. The results derived in this paper can also be extended to such networks.} The network may be directed; i.e., $a_{i,j,g} = 1$ does not necessarily imply $a_{j,i,g} = 1$. We also assume there are no self-links, meaning $a_{i,i,g} = 0$ for every $i$ and $g$. The network adjacency matrices $\{\mathbf{A}_g\}_{g=1}^m$ are observed. In addition to the network adjacency matrices, we observe unit-level outcome $y_{i,g} \in \mathbb{R}$ and unit-level control covariates $\boldsymbol{x}_{i,g} \in \mathbb{R}^l$. In sum, the network dataset given in the econometric framework of this paper is $$
\left\lbrace \mathbf{A}_g, \{ y_{i,g}, \boldsymbol{x}_{i,g} \}_{i=1}^{n_g} \right\rbrace_{g=1}^m.
$$
The theoretical results of Section \ref{sec:model} are discussed with regard to a single network. Therefore, for notational brevity, we will suppress the network subscript $g$ for the rest of Section \ref{sec:model}.

In our model, the outcome $y_i$ of unit $i$ depends on the outcomes of all their peers, i.e., all $j$ such that $a_{i,j} = 1$. We characterize this dependence in a highly flexible way, with every peer having a separate effect on unit $i$ depending on the ranking of their outcome among unit $i$'s friends. Specifically, let $d_i = \sum_{j=1}^n a_{i,j}$ denote the number of peers of unit $i$, i.e., $d_i$ is the row-wise sum of $\mathbf{A}$. Let also $\bar{d} = \max_i d_i$ be the maximum number of peers in the data. 

Then, for each individual $i$ and $k=1,\dots,d_i$, let $\tilde{y}_{i,k}$ denote the outcome of the $k$-th lowest-performing peer of unit $i$; that is, $\tilde{y}_{i,k}$ is the $k$-th lowest value in the set $\{y_j : a_{i,j} = 1\}$.\footnote{In case of ties, any peer may be selected as the lowest without loss of generality.} Our model for $y_i$ can then be written as:
\begin{align}
    y_i = \sum_{k=1}^{d_i} \beta_{k, d_i} \tilde{y}_{i,k} + \boldsymbol{x}_i^\intercal {\gamma} + \boldsymbol{\bar{x}}_i^\intercal {\delta}+  \varepsilon_i,  \label{eq:peer_effect_model}
\end{align}
for $i = 1, \ldots, n$ where $\boldsymbol{\bar{x}}_i$ is the average of peers' control covariates\footnote{We include this average to control for \textit{contextual effects} \citep{calvo2009peer}, as it is the most common specification in the literature. However, the framework can easily be extended to have a similar structure for peers' covariates as we have for peer outcomes. Furthermore, our identification results do not impose restrictions on $\delta$, allowing for $\delta = 0$.} and $\varepsilon_i$ is an error term. The peer effect coefficient $\beta_{k, d_i}$ captures the effect of the peer with the $k$-th lowest outcome among the $d_i$ peers.\footnote{These effects depend on $d_i$ because, as the number of peers increases, the influence of each individual peer may decrease.} For example, if unit $i$ has three peers, $\beta_{2,3}$ corresponds to the effect of the peer at the median of the peer outcome distribution, whereas $\beta_{1,3}$ and $\beta_{3,3}$ correspond to the effects of the peers with the lowest and highest outcomes, respectively.

The flexibility of our peer effect model in \eqref{eq:peer_effect_model} enables us to address key questions in the peer effect literature concerning the complementarity and substitutability of peers’ contributions to $y_i$ \citep[see, for example,][]{Kang}. When only the lowest-ranked peer matters (i.e., $\beta_{1,d} \neq 0$ and $\beta_{k,d} = 0$ for all $k \geq 2$), the peer effect reflects perfect complementarity, in the sense that an individual’s outcome depends entirely on the weakest performer in the group. Conversely, when all coefficients are equal across ranks (i.e., $\beta_{k,d} = \beta_{k',d}$ for all $k,k'$), peers are perfect substitutes, meaning that every peer outcome contributes equally. 

The model of this paper nests many widely used specifications in the literature. For example, the LiM model is a special case of our model, by letting $\beta_{k, d} = \dfrac{\beta}{d}$ for all $k$, where $\beta$ is the single peer effect parameter in the LiM specification, capturing the effect of the peers’ average outcome. The linear-in-sums (LiS) model is a special case of our model as well, by letting $\beta_{k, d} = \beta$ for all $k$, where $\beta$ is the single peer effect parameter in the LiS specification. In addition, our model nests models where the peer effect is decided by an extreme value, such as the minimal and maximal peer outcome, by letting $\beta_{k,d} = 0$ for every $k > 1$ or $k < d$ \citep{tao2014social, tatsi2015endogenous}.

While our model does not nest the specification of \citet{boucher2024toward}, in which the peer effect is linear in a CES aggregation of peer outcomes, it allows us to capture similar patterns as the CES model, as well as patterns that it cannot capture. The CES specification is given by:
$$
y_i = \beta \left(\sum_{j=1}^{n} \frac{a_{i,j}}{d_i} y_j^{\rho}\right)^{\frac{1}{\rho}} + {\boldsymbol{x}_i}^\intercal \gamma + {\bar{\boldsymbol{x}}_i}^\intercal \delta + \varepsilon_i,
$$
where $\rho$ is a substitution parameter. This specification nests the LiM model and allows for non-linear effects. However, there are several patterns of peer effects that it cannot capture, but that our model can. Since the CES function is either concave or convex, the marginal effect of peer outcomes, meaning the first derivative of the CES function with respect to a specific peers outcome, is always either monotonically increasing or decreasing. Specifically, when $ \rho > 1 $, peers with higher outcomes exert greater influence on individual $ i $, whereas when $ \rho < 1 $, individual $ i $ is influenced more by peers with lower outcomes. This means that it must always be the case that a more extreme peer outcome has a larger impact than a less extreme peer outcome. As such, the CES specification cannot capture non-monotonic peer effects, which arise when individuals are most influenced by peers in the middle of the outcome distribution. 

In contrast, our model allows for both monotonic and non-monotonic marginal peer effects, as can be seen through a simple example, the linear-in-median model.\footnote{To map the linear-in-median model to our framework, we set $\beta_{k,d} = \beta^{\text{med}}$ whenever $d$ is odd and $k =\{\frac{d+1}{2}\}$, and $\beta_{k,d} = \frac{\beta^{\text{med}}}{2}$ when $d$ is even and $k \in \{\frac{d}{2}, \frac{d}{2}+1\}$, and set $\beta_{k,d} = 0$ otherwise.} In this model, a peer's outcome has a marginal peer effect equal to zero when it is so high or low that the peer is not the median, but is non-zero when it is the median outcome among $i$'s peers. Importantly, the CES function, or any aggregator discussed in \citet{boucher2024toward}, cannot allow for a pattern of marginal effects where neither the minimum nor maximum matters while the total peer effect is non-zero. These types of non-monotonic marginal effects of peer outcomes are natural in many settings with spillovers as they align with theories of social comparison, which suggest that aligning with moderately performing peers, rather than with outliers, can be socially or psychologically optimal \citep{festinger1954theory,rogers2016discouraged,diel2019inspired}. 

Imposing monotonicity, as in the CES and LiM specifications, when peer effects are non-monotonic can have important policy implications. Since peer effects generate social multipliers, they are often leveraged to increase the effectiveness of interventions. For example, optimally assigning treatment within a network can substantially increase policy impact \citep{galeotti2020targeting}. Likewise, identifying key players can facilitate the efficient diffusion of information or behaviors throughout the network \citep{ballester2006s}. The optimal targeting strategy, however, depends critically on the underlying structure of peer effects. In the example above, where peers at the tails of the outcome distribution exert less influence, the most effective individuals to target are likely to be those whose outcomes lie closer to the center of the distribution. In contrast, the CES and LiM specifications instead predict that peers at the tails of the outcome distribution are the most influential and therefore the optimal targets for intervention. 

Similarly, researchers have used peer effects to motivate interventions in the network directly. A cautionary tale of such interventions is \citet{carrell2013natural}, who estimated a linear-in-means using data on air force cadets to create optimal peer groups. However, they then found negative effects for individuals assigned to these optimal groups. The authors attribute this to endogenous network formation. Our model suggests it may also be due to the underlying peer effect model being misspecified as this would make the treatment groups non-optimal, even if the network formation behavior was not affected by treatment. 

In our empirical application, we find non-monotonic peer effect patterns in over half of the student outcomes we study. We then conduct a counterfactual analysis to identify key players within the network and find that the CES and LiM specifications systematically misidentify them. Consequently, our heterogeneous peer effect specification is not simply a more granular representation of aggregate peer effects. It instead recovers features of social interactions that are ruled out by existing models, and therefore leads to qualitatively different policy prescriptions. Accounting for this heterogeneity is essential. Not just for accurately characterizing peer influence, but more importantly, to support the design of effective network-based interventions.

Furthermore, our model allows the outcome to take any value, while the CES specification requires strictly positive outcomes. Additionally, there is no `return-to-scale' element in the CES aggregator.

\subsection{Microfoundations}\label{sec:game}
In this section, we present microfoundations for our peer effect model introduced above using a complete-information game. In this game, each individual $i$ selects a strategy $y_i$ (e.g., academic effort) and derives utility from the distribution of their peers' strategies. Preferences are represented by a standard linear-quadratic utility function, which is commonly used in the context of the linear-in-means (LiM) model \citep[see][]{ballester2006s, calvo2009peer}, in which we allow for each peer to exert a specific influence. The utility function is given by:
\begin{align}
    U_i(y_i, \; \mathbf y_{-i}) = \underbrace{\alpha_i \cdot y_i - \frac{1}{2} {y_i}^2}_{\text{private benefit}} + \underbrace{\sum_{k = 1}^{d_i}\beta_{k,d_i} \tilde y_{i,k} \cdot y_i }_{\text{social benefit}}, \label{eq:utility}
\end{align}
\noindent where $\alpha_i$ is individual $i$'s type observable to all players, and $\mathbf y_{-i} = (y_1, \dots, y_{i-1}, y_{i+1}, \dots, y_n)^\intercal$. The type $\alpha_i$ is generally specified as a linear function of observable characteristics $\boldsymbol{x}_i$ and an error term  \cite[e.g., see][]{blume2015linear}.

The utility function is additively separable into a private benefit, which depends only on individual $i$'s own effort $y_i$, and a social benefit, which is a function of both $y_i$ and the effort vector of other individuals, $\mathbf{y}_{-i}$. The private utility is linear in the individual’s type $\alpha_i$, and the private cost (or disutility) of exerting effort $y_i$, $\dfrac{1}{2} {y_i}^2$, is quadratic in $y_i$. The social benefit captures spillover effects on an individual's effort through the peer effect $\sum_{k=1}^{d_i} \beta_{k,d_i} \tilde{y}_{i,k}$, a weighted sum of ranked peer efforts with the peer effect coefficients $\{\beta_{k,d_i}\}_{k=1}^{d_i}$ as weights. The parameters $\{\beta_{k,d_i}\}_{k=1}^{d_i}$ govern the degree of substitutability among peer efforts in the peer effect construction as discussed above, and also the direction of spillovers onto individual $i$'s own effort. Specifically, when individual $j$'s effort is ranked $k$-th among $i$'s peers, the sign of $\beta_{k,d_i}$ determines the local spillover: $\beta_{k,d_i}>0$ indicates that individual $j$'s effort raises individual $i$'s incentive to exert effort, while a negative weight indicates that it discourages effort. Crucially, since the weight applied to individual $j$'s effort depends on its rank, the spillover from any given peer is not fixed but varies with the efforts of all other peers.

The utility function shown in Equation \eqref{eq:utility} assumes that social benefits take the form of \textit{spillover}. We can further extend this by adding \textit{conformity} to the social benefit as well. While the extension yields a similar reduced-form equation, the implication for counterfactual analysis is not the same \citep[see][]{boucher2016some}. In Appendix~\ref{app:conformity}, we discuss the extension where both spillover and conformity coexist. As in \cite{boucher2024toward}, we require the presence of isolated individuals (individuals without friends) to estimate the shares of spillover and conformity effects.

Individual $i$ chooses their effort $y_i$ by maximizing the utility function $U_i(y_i, \mathbf y_{-i})$. By solving the first-order condition of this maximization problem, we obtain the best response function:
\begin{equation} \label{BRF}
    BR_i(\mathbf y_{-i}) = \sum_{k = 1}^{d_i} \beta_{k,d_i} \tilde y_{i,k} + \alpha_i.
\end{equation}
By defining $\alpha_i = \boldsymbol{x}_i^{\intercal}\gamma + \boldsymbol{\bar{x}_i} \delta + \varepsilon_i$, this best response function corresponds to our peer effect specification in Equation \eqref{eq:peer_effect_model}. A strategy profile $\mathbf y = (y_1, ~\dots, ~y_n)^{\intercal}$ is a Nash equilibrium if $y_i = BR_i(\mathbf y_{-i})$ for all $i$. To establish the existence and the uniqueness of the Nash equilibrium, we impose the following condition.

\begin{assumption}\label{ass:equilibrium}
    The peer effect parameters satisfy the condition $\displaystyle\max_{d \leq \bar d}\sum_{k = 1}^{d} \lvert \beta_{k,d} \rvert < 1$.
\end{assumption}

\noindent Assumption~\ref{ass:equilibrium} ensures that a one-unit increase in peer outcomes does not lead to an increase in individual $i$'s effort greater than one, regardless of their number of friends $d_i$. This requirement is standard in peer effect models to establish the existence and uniqueness of the Nash equilibrium. In the LiM model, an equivalent condition imposes that the absolute value of the (single) peer effect parameter is strictly less than one \citep[see][]{BDF}.

\begin{proposition}\label{prop:uniqueNE} Assume that the preferences of each individual are characterized by the utility function \eqref{eq:utility}. Under Assumption \ref{ass:equilibrium}, there exists a unique Nash equilibrium $\mathbf y^{\ast} = (y_1^{\ast}, ~\dots,   y_n^{\ast})^{\intercal}$ such that $y_i^{\ast} = BR_i(\mathbf{y}^{\ast}_{-i})$.
\end{proposition}
\begin{proof}
    See Appendix \ref{app:uniqueNE}.
\end{proof}
\noindent The key part of the proof of Proposition \ref{prop:uniqueNE}  is to show that the mapping $$BR(\mathbf{y})= (BR_1(\mathbf y_{-1}),\dots,~BR_n(\mathbf y_{-n}))^{\intercal}$$ is a contraction. That is, for any two strategy profiles $\mathbf y, \mathbf y^{\prime}\in\mathbb R^n$, $\lvert BR(\mathbf y) - BR(\mathbf y^{\prime})\rvert \leq \bar\beta \lVert \mathbf y - \mathbf y^{\prime}\rVert_{\infty}$, for some constant $\bar\beta < 1$.\footnote{For any $\mathbf u = (u_1, u_2, \dots, u_n)\in\mathbb R^n$, the infinity norm of $\mathbf u$ is given by $\lVert \mathbf u \rVert_{\infty} = \displaystyle\max_{1 \leq i\leq n} \lvert u_i\rvert$.} Under this condition, $BR$ has a unique fixed point, which corresponds to the Nash equilibrium. 

\subsection{Empirical Specification}\label{sec:implementation:restriction}

The number of peer effect parameters in the fully saturated model is $\frac{\bar{d}(\bar{d}+1)}{2}$, which is challenging for estimation in most real-world applications where the maximum number of peers, $\bar{d}$, is large. This is the case in our empirical application where $\bar{d} = 10$, which means we would have 55 endogenous variables if we want to estimate the fully saturated model. As this would make both interpretation and estimation challenging, we impose restrictions on the parameter space to reduce the number of parameters. While there are many ways to do this,\footnote{For example, there are seven different ways of calculating quantiles in the literature \citep{hyndman1996sample}, all of which can be implemented using our results. The model also allows for non-quantile restrictions, such as imposing different linear slopes for different sets of peers. Since our model also nests the linear-in-means model, one could use such specifications to construct tests of the linear-in-means assumption. } we will focus on equally spaced, linearly interpolated quantiles as they are highly interpretable, easy to implement, and used extensively in economics. 

The quantile version of our model can be written as:
\begin{equation}
    y_i = \sum_{k=1}^{M} \beta^k \, \tilde{y}_{i}^k + {\boldsymbol{x}_i}^{\intercal} \gamma + {\boldsymbol{\bar{x}}_i}^\intercal \delta +\varepsilon_i, 
    \label{eq:model:q}
\end{equation}
where $\tilde{y}_{i}^k$ is the $(k-1)/(M-1)$ quantile among the peer outcomes for individual $i$ for $k=1,\dots,M$.\footnote{When $M=1$, $\tilde{y}_i^1 = \tilde{y}_{i,1}$.} For example, when there are three peers, i.e. $d_i=3$, and three quantile variables, i.e. $M=3$, we have $\tilde{y}_1^1 = \tilde{y}_{i,1}, \tilde{y}_1^2 = \tilde{y}_{i,2}$ and $\tilde{y}_1^3 = \tilde{y}_{i,3}$, corresponding to the minimum, median and maximum peer outcomes. When $d_i \neq M$, we linearly interpolate.\footnote{For example, to compute the 0.3 quantile among peer outcomes for individual $i$ such that $d_i=5$, first we find that $\tilde{y}_{i,2}$ is the 0.25 quantile and $\tilde{y}_{i,3}$ is the 0.5 quantile. Then, 0.3 quantile can be computed as $ \left( 0.8 \cdot \tilde{y}_{i,2} + 0.2 \cdot \tilde{y}_{i,3} \right)$, since 0.3 lies 20\% of the way from 0.25 to 0.5.} Appendix \ref{app:samp_quant} discusses the specifics of the linear interpolation in constructing $\{\tilde{y}_i^k\}_{k=1}^M$ in a generalized manner. 

There are several merits to the quantile model. Firstly, the quantile-based model is nested in the fully saturated version of our peer effect model \eqref{eq:peer_effect_model}, since $\{\tilde{y}_i^k\}_{k=1}^M$ are defined to be linear combinations of $\{\tilde{y}_{i,k}\}_{k=1}^{d_i}$. Therefore, the microfoundation from Section \ref{sec:game} and the identification results that will be discussed in Section \ref{sec:ident:ident} naturally extend to the quantile-based model as well. Secondly, quantiles capture important features of the ordered peer outcomes $\{\tilde{y}_{i,k}\}_{k=1}^{d_i}$, the median captures centrality, while the minimum and maximum capture tail behavior, etc. Thirdly, with equally spaced quantile levels, the peer outcomes that are not included in the restricted model lie between the chosen quantiles, making the information loss minimal. In our empirical application, we show that the quantile specification can reveal many interesting patterns of peer effects.

The choice of whether to use a quantile model or a different specification is inherently context-dependent. Even within the quantile specification, it is still necessary to pick the number and the locations of the included quantiles. We therefore develop a specification test to help practitioners make these decisions in Section \ref{sec:econometrics:test}.

\section{Identification and Inference}\label{sec:ident}
In this section, we discuss the identification of the structural parameter $\big( \{\{\beta_{k,d}\}_{k=1}^d\}_{d=1}^{\bar{d}}, \break\gamma, \delta \big)$ in terms of network-level moments, under some normalization due to uneven network sizes $n_g$. As customary in the literature, we assume independence across networks and assume that the network adjacency matrix and the network-level collection of individual-level control covariates are exogenous. To provide empirical guidance on the tractable specification in Section \ref{sec:implementation:restriction}, we develop an encompassing test. 

\subsection{Identification}\label{sec:ident:ident}
To simplify notation, we use $\E$ to denote the conditional expectation of a random variable given exogenous variables from the same network throughout the paper. For an arbitrary random variable $w_g$ in network $g$, we write
$$
\E \left[ w_g \right] := \mathbb{E} \left[ w_g | \mathbf{X}_g, \mathbf{A}_g \right]
$$
where $\mathbf{X}_g = \begin{pmatrix} \boldsymbol{x}_{1,g} & \cdots &  \boldsymbol{x}_{n_g,g} \end{pmatrix}^\intercal$. Note that $\E [w_g] = 0$ also implies $$
\mathbb{E}[w_g |  \{\boldsymbol{x}_{i,g}, \boldsymbol{\bar{x}}_{i,g}, a_{i,j,g} \}_{1 \leq i,j \leq n_g} ]=0
$$
as $\boldsymbol{\bar{x}}_{i,g}$ is a deterministic function of $\boldsymbol{x}_{i,g}$ and $\mathbf{A}_g$. Also, we vectorize the peer effect coefficients and the ranked peer outcomes
\begin{align*}
    \beta &= \begin{pmatrix} {\beta_{1,1}} & {\beta_{2,1}} & \beta_{2,2} & \cdots & \beta_{1, \bar{d}} & \cdots & {\beta_{\bar{d}, \bar{d}}} \end{pmatrix}^\intercal \\
    \tilde{\boldsymbol{y}}_{i,g} &= \begin{pmatrix} 0 & \cdots & 0 &
    \tilde{y}_{i,1,g} & \cdots & \tilde{y}_{i,d_i,g}  & 0 & \cdots & 0 \end{pmatrix}^\intercal
\end{align*}
and let $\boldsymbol{w}_{i,g} = \big( {\tilde{\boldsymbol{y}}_{i,g}}^\intercal, {\boldsymbol{x}_{i,g}}^\intercal ,{\boldsymbol{\bar{x}}_{i,g}}^\intercal\big)^\intercal \in \mathbb{R}^{\frac{\bar{d}(\bar{d}+1)}{2}+2l}$. We also denote by $\theta = (\beta^{\intercal}, \gamma^{\intercal}, \delta^{\intercal})^{\intercal}$ the vector of all parameters. The rank-dependent peer effect model \eqref{eq:peer_effect_model} can be rewritten as follows: \begin{align}
y_{i,g} &= {\tilde{\boldsymbol{y}}_{i,g}}^\intercal \beta + {\boldsymbol{x}_{i,g}}^\intercal \gamma + {\bar{\boldsymbol{x}}_{i,g}}^\intercal \delta + \varepsilon_{i,g} = {\boldsymbol{w}_{i,g}}^\intercal \theta + \varepsilon_i 
\end{align}
for $i=1,\cdots,n_g$ and in matrix form: $$
\mathbf{y}_g = \tilde{\mathbf{Y}}_g \beta + \mathbf{X}_g \gamma + \bar{\mathbf{X}}_g \delta + \boldsymbol{\varepsilon}_g = \mathbf{W}_g \begin{pmatrix} \beta^\intercal & \gamma^\intercal & \delta^\intercal \end{pmatrix}^\intercal + \boldsymbol{\varepsilon}_g
$$ 
with $\tilde{\mathbf{Y}}_g = \begin{pmatrix} \tilde{\boldsymbol{y}}_{1,g} & \cdots & \tilde{\boldsymbol{y}}_{n_g,g} \end{pmatrix}^\intercal$, $\bar{\mathbf{X}}_g = \begin{pmatrix} {\bar{\boldsymbol{x}}_{1,g}} & \cdots &  {\bar{\boldsymbol{x}}_{n_g,g}} \end{pmatrix}^\intercal$, $\mathbf{W}_g = \begin{pmatrix} \boldsymbol{w}_{1,g} & \cdots & \boldsymbol{w}_{n_g,g} \end{pmatrix}^\intercal$ and $\boldsymbol{\varepsilon}_g = \begin{pmatrix} \varepsilon_{1,g} & \cdots & \varepsilon_{n_g,g} \end{pmatrix}^\intercal$.

\begin{assumption} \label{ass:exogeneity} \textup{\textsc{(Exogenous Network and Covariates)}} For each $g=1,\dots,m$ and $i=1,\ldots,n_g$
$$ 
\E[\varepsilon_{i,g}] =\mathbb{E}\left[\varepsilon_{i,g} |\mathbf{X}_g, \mathbf{A}_g  \right] =  0.
$$  
\end{assumption}
\noindent Assumption \ref{ass:exogeneity} rules out endogenous network formation and treats the network structure as fixed, meaning it does not depend on the randomness in $\{\varepsilon_{i,g}\}_{i=1}^{n_g}$. While there have been advances in analyzing peer effect models with endogenous network formation \citep{goldsmith2013social,hsieh2016social,johnsson2021estimation,jochmans2023peer}, we instead focus on endogeneity in peer effects conditional on the network. To see this, note that we can view our model as a second step of a two-stage process where firstly, the network is formed and secondly, the peer effect is determined based on the connections established in the first step. Models of endogenous network formation address endogeneity in the first step, whereas our model addresses endogeneity in the second step by allowing peer effect coefficients to depend on $\{\varepsilon_i\}_{i=1}^n$. 

This endogeneity means that, although $\E[\mathbf{y}_g]$ admits a linear representation, the quantities entering the representation are not known functions of $(\mathbf{X}_g, \mathbf{A}_g)$. This is in contrast to the LiM model, where $\E[\mathbf{y}_g]$ is linear in $\mathbf{X}_g, \mathbf{G}_g \mathbf{X}_g, {\mathbf{G}_g}^2 \mathbf{X}_g, \dots$ with $\mathbf{G}_g = \big[ a_{i,j,g} / d_{i,g} \big]$.\footnote{When $d_{i,g} = 0$, we follow convention and set $g_{i,g} = 0$.} This is due to the rank-dependence of the peer-effect coefficients. Consider a simple network with three individuals where individual 1 is friends with both individuals 2 and 3, but individuals 2 and 3 are only friends with individual 1.\footnote{The network adjacency matrix is $$
\begin{pmatrix} 0 & 1 & 1 \\ 
1 & 0 & 0 \\
1 & 0 & 0 
\end{pmatrix}.
$$} From Assumption \ref{ass:exogeneity}, we obtain the following:
\begin{align*}
    \E \left[ y_1 \right] &= \beta_{1,2} \E \left[ y_{2} \mathbf{1}\{y_2 \leq y_3\} + y_3 \mathbf{1}\{y_2 > y_3\}  \right] \\
    & \hspace{5mm} + \beta_{2,2} \E \left[ y_3 \mathbf{1}\{y_2 \leq y_3\} + y_2 \mathbf{1}\{y_2 > y_3\}  \right] + {\boldsymbol{x}_1}^\intercal \gamma + {\bar{\boldsymbol{x}}_1}^\intercal \delta.
    \intertext{Substituting $\beta_{1,1} y_1 + {\boldsymbol{x}_2}^\intercal \gamma + {\bar{\boldsymbol{x}}_2}^\intercal \delta + \varepsilon_2$ for $y_2$ and similarly for $y_3$, we get:}
    \tilde{\beta} \E \left[ y_1 \right] &= \beta_{1,2} \E \left[ \varepsilon_{2} \mathbf{1}\{y_2 \leq y_3\} + \varepsilon_3 \mathbf{1}\{y_2 > y_3\}  \right] \\
    & \hspace{5mm} + \beta_{2,2} \E \left[ \varepsilon_3 \mathbf{1}\{y_2 \leq y_3\} + \varepsilon_2 \mathbf{1}\{y_2 > y_3\}  \right] + \sum_{i=1}^3 \left(  {\boldsymbol{x}_i}^\intercal \tilde{\gamma}_i + {\bar{\boldsymbol{x}}_i}^\intercal \tilde{\delta}_i \right)
\end{align*}
for some $\tilde{\beta}$, $\{\tilde{\gamma}_i\}_{i=1}^3$ and $\{\tilde{\delta}_i\}_{i=1}^3$. Without further assumptions on the distribution of $\{ \varepsilon_i \}_{i=1}^n$, the quantities such as $\E \left[ \varepsilon_2 \mathbf{1}\{ y_2 \leq y_3\}\right]$ cannot be written as a function of observables $(\mathbf{X}_g, \mathbf{A}_g)$. An exception occurs when $\beta_{1,2} = \beta_{2,2}$, in which case the conditional expectations cancel out, reverting back to the LiM or LiS model. By the same argument, the quantities entering the reduced-form representation of $\E[\mathbf{W}_g]$ in our model depend on the distribution of $\varepsilon_{i,g}$ in a way that resists closed-form characterization without further distributional assumptions.\footnote{We show and discuss the closed-form characterization of our model in Appendix \ref{app:reduced_form}.} For this reason, Assumption~\ref{ass:id} is stated directly in terms of $\E[\mathbf{W}_g]$ rather than in terms of low-level conditions on the primitives.

The structural parameter $\theta$ of our model $\big( \beta, \gamma, \delta \big)$, or $\big( \{\beta^k\}_{k=1}^{M}, \gamma, \delta \big)$ in a quantile model, is identified under the following conditions. 
\begin{assumption}\label{ass:id} \text{ } 
\begin{enumerate}[label=\textbf{\alph*.}]
\item \label{ass:id:iid} $\boldsymbol{\varepsilon}_g \, | \, \big( n_g, \mathbf{X}_g, \mathbf{A}_g) \sim \text{iid}$
\item \label{ass:id:nonsingular} $\displaystyle \plim_{m \to \infty} \frac{1}{m} \sum_{g=1}^m \E[\mathbf{W}_g]^\intercal \E[\mathbf{W}_g]$ is nonsingular.
\end{enumerate}
\end{assumption}

\noindent Theorem~\ref{theorem:id} formally establishes the identification result. 

\begin{theorem}
\label{theorem:id} Suppose that Assumptions \ref{ass:equilibrium}-\ref{ass:id} hold. Then, $\beta, \gamma$ and $\delta$ are identified from the moment condition below:
\begin{align}
\plim_{m \to \infty} \frac{1}{m} \sum_{g=1}^m \E[\mathbf{W}_g]^\intercal \E[\mathbf{W}_g]\theta &= \plim_{m \to \infty} \frac{1}{m} \sum_{g=1}^m \E[\mathbf{W}_g]^\intercal \E[ \mathbf{y}_g ] 
\end{align}
\end{theorem}
\begin{proof}
    See Appendix \ref{app:theoremidproof}.
\end{proof}

Part \ref{ass:id:nonsingular} of Assumption \ref{ass:id} provides a high-level condition for the identification of the rank-dependent peer effect model. Deriving a low-level sufficient condition for Part~\ref{ass:id:nonsingular} of Assumption~\ref{ass:id} is nontrivial due to the rank dependence of the peer-effect coefficients. However, we believe it will still hold in many empirically relevant cases. Note that for this assumption to not hold means that $\mathbb E_g[\tilde{\boldsymbol{y}}_{i,g}]$, $\boldsymbol{x}_{i,g}$, and ${\bar{\boldsymbol{x}}_{i,g}}$ are linearly dependent. If $\boldsymbol{x}_{i,g}$ and ${\bar{\boldsymbol{x}}_{i,g}}$ are not linearly dependent, which can be easily verified, then this must mean that
\begin{equation}\label{eq:notfullrank}
    \mathbb E_g[\tilde{\boldsymbol{y}}_{i,g}]^\intercal \check \beta + \boldsymbol{x}_{i,g} \check \gamma  + {\bar{\boldsymbol{x}}_{i,g}} \check \delta = 0,
\end{equation}
for some parameters $\check \beta$, $\check \gamma$, and $\check \delta$ such that $\check \beta \ne 0$. To see when Equation \eqref{eq:notfullrank} does not hold for all $i$ and $g$, assume that $i$ has friends of friends, denoted by $l$, who are not direct friends of $i$. Changes in the characteristics $\boldsymbol{x}_{l,g}$ have no influence on $\boldsymbol{x}_{i,g}$ and ${\bar{\boldsymbol{x}}_{i,g}}$ because $l$ is not a direct friend of $i$. Equation \eqref{eq:notfullrank} implies that the total variation in $\mathbb E_g[\tilde{\boldsymbol{y}}_{i,g}]^\intercal \check \beta$ following changes in $\boldsymbol{x}_{l,g}$ is zero. This would not hold in many contexts because individuals whose outcomes determine $\tilde{\boldsymbol{y}}_{i,g}$ are friends of $i$, and some of these friends have $l$ as a direct friend. Therefore, if contextual effects matter, changes in $\boldsymbol{x}_{l,g}$ are expected to influence the outcomes of $i$'s friends, except for special cases where the influence on each element of the vector $\tilde{\boldsymbol{y}}_{i,g}$ cancels out in $\mathbb E_g[\tilde{\boldsymbol{y}}_{i,g}]^\intercal \check \beta$. This identification argument is also employed by \cite{houndetoungan2026count} and extends the result in \cite{BDF}, who use the presence of friends of friends who are not directly friends to nonlinear models.

The presence of friends of friends who are not direct friends provides restriction exclusions to construct instruments for $\tilde{\boldsymbol{y}}_{i,g}$. The natural instrument, $\mathbb E_g[\tilde{\boldsymbol{y}}_{i,g}]$, cannot be used directly because it is unobserved and cannot be calculated without making distributional assumptions about $\epsilon_{i,g}$.\footnote{When errors are iid within a network and homoskedastic, $\E[\tilde{y}_{i,g}]$ is the optimal instrument, as shown in \citet{chamberlain1987asymptotic}. As shown in Appendix \ref{app:reduced_form}, it does not have an easily computable closed-form even when model parameters are known. Furthermore, assuming iid and homoskedastic errors is a very strong assumption in a network setting. }  However, as $\mathbb E_g[\tilde{\boldsymbol{y}}_{i,g}]$ is determined by friends' outcomes, we can use the distribution of $\boldsymbol{x}_{j,g}$ and ${\bar{\boldsymbol{x}}_{j,g}}$ for all $j$ who are direct friends of $i$ to construct instruments. The presence of friends of friends who are not direct friends ensures that ${\bar{\boldsymbol{x}}_{j,g}}$  captures relevant information excluded from $\boldsymbol{x}_{i,g}$ and ${\bar{\boldsymbol{x}}_{i,g}}$. Given the nonlinear nature of our model, we use the quantiles of these distributions, for example the deciles of $\boldsymbol{x}_{j,g}$ and ${\bar{\boldsymbol{x}}_{j,g}}$ among friends, as instruments. The strength of these instruments can be easily tested in practice using weak instrument tests. 

Given the constructed instrument, $\mathbf{Z}_g$, $\theta$ can be easily estimated using a two-stage least squares (TSLS) estimator. Consistency and asymptotic normality of the IV estimator in this setting follow from standard arguments for IV estimators. In particular, under Assumption \ref{ass:asymptotics}, the sample moment matrices formed from $\{\mathbf{Z}_g,\mathbf{W}_g\}_{g=1}^m$ converge to their population counterparts with full rank, and the sample orthogonality condition between instruments and errors holds. As a result, the TSLS estimator is consistent for $\theta$ and satisfies a central limit theorem under some regularity conditions. For completeness, Assumption \ref{ass:asymptotics} and formal asymptotic results with their proofs are provided in Appendix \ref{append:inference}.

Finally, a popular empirical practice given a network dataset with many networks is to include network fixed-effects $\alpha_g$, giving us the model: \begin{align*}
y_{i,g} &= \alpha_{g} + {\tilde{\boldsymbol{y}}_{i,g}}^\intercal \beta + {\boldsymbol{x}_{i,g}}^\intercal \gamma + {\boldsymbol{\bar{x}}_{i,g}}^\intercal \delta  + \varepsilon_{i,g}  
\end{align*}
for $i=1,\dots,n_g$ and $g=1,\dots,m$. By demeaning at the network level, we obtain 
$$
y_{i,g} - \bar{y}_{g}= \left( \tilde{\boldsymbol{y}}_{i,g} - \bar{\tilde{\boldsymbol{y}}}_{g} \right)^\intercal \beta + \left( \boldsymbol{x}_{i,g} - \bar{\boldsymbol{x}}_{g} \right)^\intercal \gamma + \left( \boldsymbol{\bar{x}}_{i,g}- \boldsymbol{\bar{\bar{x}}}_{g} \right)^\intercal \delta + \varepsilon_{i,g} - \bar{\varepsilon}_g
$$
for $i=1,\dots,n_g$ and $g=1,\dots,m$. Theorem~\ref{theorem:id} can be directly extended to a setup with network fixed-effects by replacing $\mathbf{W}_g$ with its demeaned counterpart in Part \ref{ass:id:nonsingular} of Assumption~\ref{ass:id}. 

\subsection{Encompassing Test on Empirical Specifications}\label{sec:econometrics:test}

In Subsection \ref{sec:implementation:restriction}, we propose a quantile restriction on the fully saturated rank-dependent peer effect model, for parsimony and tractability. While certain empirical contexts may naturally call for a particular empirical specification, such as the max or min specification, there are many empirical contexts that offer little basis for imposing such \textit{a priori} quantile selections. Therefore, in this subsection, we introduce an encompassing test which provides guidance on the choice of location and number of quantiles, based on \citet{smith1992non}, and develop a data-driven specification selection procedure.\footnote{This test can also be used to test non-quantile variations of our saturated model \eqref{eq:peer_effect_model}. As we use quantiles in our empirical application, we will focus our discussion on this case. }

The encompassing test we propose compares a pair of candidate specifications, treating one as the null hypothesis and the other as the alternative. Importantly, the test does not require the alternative specification to nest the null specification. Moreover, consistency of the test does not depend on the alternative specification being correctly specified. Instead, we use the alternative specification as a lens to evaluate the null specification. Therefore, the procedure can be used to compare non-nested quantile specifications, such as a three-quantile model with quantiles at $\{0,1/2,1\}$ and a four-quantile model with quantiles at $\{0,1/3,2/3,1\}$. In this example, the test remains consistent even if the true model is not the four-quantile specification as long as the true model differs from the three-quantile specification from the perspective of the four-quantile specification.

Let $h_1$ and $h_2$ denote the two empirical specifications that we compare with the encompassing test, with $h_1$ being the null specification and $h_2$ being the alternative. Given $h_1$ and $h_2$, we will have two sets of endogenous variables. Let $\{\mathbf{W}_g^1\}_{g=1}^m$ and $\{\mathbf{W}_g^2\}_{g=1}^m$ denote the explanatory variables and $\{\mathbf Z_g^1\}_{g=1}^m$ and $\{\mathbf{Z}_g^2 \}_{g=1}^m$ denote the instruments, respectively under $h_1$ and $h_2$. When $h_1$ and/or $h_2$ refer to a quantile-based model, the endogenous variables will be constructed from $\{\boldsymbol{w}_{i,g}\}_{i,g}$ per the quantile construction as discussed in Subsection \ref{sec:implementation:restriction}.

The encompassing test \citep{smith1992non} is implemented as follows: \begin{enumerate}
\item Regress $\mathbf{y}_g$ on $\mathbf{W}_g^1$, using $\mathbf{Z}_g^1$ as instruments: $$
\hat{\theta}^1 = \big(\mathbf H^1 \big)^{-1} \cdot \frac{1}{m} \sum_{g=1}^m {\mathbf{W}_g^1}^{\intercal} \mathbf{Z}_g^1 \left(\frac{1}{m} \sum_{g=1}^m{\mathbf Z_g^1}^{\intercal} \mathbf{Z}_g^1 \right)^{-1} \frac{1}{m} \sum_{g=1}^m{\mathbf{Z}_g^1}^{\intercal} \mathbf y_g
$$
where $\mathbf H^1 = \frac{1}{m} \sum_{g=1}^m {\mathbf{W}_g^1}^{\intercal} \mathbf{Z}_g^1 \big( \frac{1}{m} \sum_{g=1}^m {\mathbf Z_g^1}^{\intercal} \mathbf{Z}_g^1 \big)^{-1} \frac{1}{m} \sum_{g=1}^m {\mathbf{Z}_g^1}^{\intercal} \mathbf W_g^1$. $\hat{\theta}^1$ is the TSLS estimator for the model parameter in the null specification.
\item Construct the residuals from the null specification: $\hat{\boldsymbol\varepsilon}_g^1 = \mathbf{y}_g - \mathbf{W}_g^1 \hat{\theta}^1, \quad \forall g=1,\dots,m$.
\item Regress $\hat{\boldsymbol{\varepsilon}}_g^1$ on $\mathbf{W}_g^2$, using $\mathbf{Z}_g^2$ as instruments:  \[
\hat{\psi} = \big( {\mathbf H^2} \big)^{-1} \cdot \frac{1}{m} \sum_{g=1}^m {\mathbf{W}_g^2}^{\intercal} \mathbf{Z}_g^2 \left( \frac{1}{m} \sum_{g=1}^m {\mathbf Z_g^2}^{\intercal}\mathbf{Z}_g^2 \right)^{-1} \frac{1}{m} \sum_{g=1}^m {\mathbf{Z}_g^2}^{\intercal} \hat{\boldsymbol{\varepsilon}}_g^1
\]
where $\mathbf H^2 = \frac{1}{m} \sum_{g=1}^m {\mathbf{W}_g^2}^{\intercal} \mathbf{Z}_g^2 \big( \frac{1}{m} \sum_{g=1}^m {\mathbf Z_g^2}^{\intercal} \mathbf{Z}_g^2 \big)^{-1} \frac{1}{m} \sum_{g=1}^m {\mathbf{Z}_g^2}^{\intercal} \mathbf W_g^2$. 
\item Construct a Wald statistic based on $\hat{\psi}$. Reject the null specification when the resulting Wald statistic exceeds the critical value from an appropriate chi-squared distribution. 
\end{enumerate}

$\hat{\boldsymbol{\varepsilon}}_g^1$ is the remaining variation in $\mathbf{y}_g$ that is not explained by the patterns of peer effect allowed in the first model specification $h_1$. If there are features of the true peer effect model that the first model specification $h_1$ fails to capture but the second model specification $h_2$ does capture, $\hat{\boldsymbol{\varepsilon}}_g$ will be correlated with $\mathbf{W}_g^2$, leading to $\hat{\psi}$ being centered around a nonzero vector. 

In Appendix \ref{app:test}, under some regularity conditions, we formally show that $\hat{\psi}$ is asymptotically normal with a consistently estimable asymptotic variance matrix. This result does not require either of the models $h_1$ or $h_2$ to be correctly specified. The asymptotic theory accounts for the estimation error in $\hat{\boldsymbol{\varepsilon}}_g$ and so does our asymptotic variance estimator. If the true peer effect model cannot be well approximated to the null specification $h_1$ when assessed through the alternative specification $h_2$, $\hat{\psi}$ is centered around a nonzero vector, giving us consistency of the test.\footnote{The role of the alternative specification $h_2$ is not to provide a true model, but to provide a criterion to compare the true model with the null specification $h_1$.}

\subsection{Comparison to Other Peer Effect Estimands}

A natural question, given that our model generalizes the standard LiM model, is to what extent existing estimands recover the key parameters of interest from our model. A commonly accepted minimal standard for such estimands is that they represent weighted averages of the underlying heterogeneity, such as in the analysis of instrument variables \citep{mogstad2024instrumental} or the analysis of Difference-in-Differences estimands \citep{de2020two,goodman2021difference}. To explore this, define $\tilde{\mathbf{X}}^{\text{LiM}} = \begin{pmatrix} \mathbf{X} & \mathbf G\mathbf{X} \end{pmatrix}$ and $\tilde{\mathbf{X}}^{\text{LiS}} = \begin{pmatrix} \mathbf{X} & \mathbf A\mathbf{X} \end{pmatrix}$, with network-level counterparts $\tilde{\mathbf{X}}_g^{\text{LiM}} = \begin{pmatrix} \mathbf{X}_g & \mathbf{G}_g\mathbf{X}_g \end{pmatrix}$ and $\tilde{\mathbf{X}}_g^{\text{LiS}} = \begin{pmatrix} \mathbf{X}_g & \mathbf{A}_g\mathbf{X}_g \end{pmatrix}$, and the population regression coefficients 
\begin{align*} 
\mathbf{P} = \left( \plim_{m \to \infty} \frac{1}{m} \sum_{g=1}^m {\tilde{\mathbf{X}}_g}^\intercal \tilde{\mathbf{X}}_g \right)^{-1} \plim_{m \to \infty} \frac{1}{m} \sum_{g=1}^m {\tilde{\mathbf{X}}_g}^\intercal \mathbf{Z}_g
\end{align*} 
for the LiM and LiS models. We can then define the standard LiM estimand in a just-identified case as follows: \begin{align*} 
\beta^{\mathrm{LiM}} := \left( \frac{1}{m} \sum_{g=1}^m  \E\left[\left(  \tilde{\mathbf{Z}}_g^{\mathrm{LiM}} \right)^\intercal \mathbf G_g\mathbf{y}_g \right] \right)^{-1} \frac{1}{m} \sum_{g=1}^m  \E \left[ \left( \tilde{\mathbf{Z}}_g^{\mathrm{LiM}} \right)^\intercal \mathbf{y}_g \right]
\end{align*} 
where \[
\tilde{\mathbf{Z}}_g^{\mathrm{LiM}} = \mathbf{Z}_g - \tilde{\mathbf{X}}_g^{\mathrm{LiM}} \mathbf{P}^{\mathrm{LiM}}
\]
given some set of instruments $\{\mathbf{Z}_g\}_{g=1}^m$ and similarly for the LiS model by replacing $\mathbf G_g$ with $\mathbf A_g$.

\begin{assumption} \label{ass:miss} \quad 
\begin{enumerate}[label=\roman*] 
\item \label{ass:lim} There exist instruments $\{\mathbf{Z}_{g}\}_{g=1}^m$, known functions of $(\mathbf{X}_g, \mathbf{A}_g)$, satisfying \begin{align*} \frac{1}{m} \sum_{g=1}^m  \E\left[\left( \tilde{\mathbf{Z}}_g^{\mathrm{LiM}} \right)^\intercal \mathbf G_g\mathbf{y}_g \right] \end{align*} 
is nonsingular.
\item \label{ass:lis} There exist instruments $\{\mathbf{Z}_{g}\}_{g=1}^m$, known functions of $(\mathbf{X}_g, \mathbf{A}_g)$, satisfying \begin{align*} \frac{1}{m} \sum_{g=1}^m  \E \left[ \left( \tilde{\mathbf{Z}}_g^{\mathrm{LiS}} \right)^\intercal \mathbf A_g \mathbf{y}_g \right] \end{align*} 
is nonsingular.
\end{enumerate} 
\end{assumption} 

\noindent These conditions are the instrument relevance conditions used for the LiM model and the LiS model, as shown in \citet{BDF}. 

\begin{proposition} \label{prop:misspecification} 
Under Assumptions \ref{ass:equilibrium}, \ref{ass:exogeneity} and \ref{ass:miss}-\ref{ass:lim}, \begin{align*} 
{\beta}^{\mathrm{LiM}} = \sum_{d=1}^{\bar{d}}\sum_{k=1}^d w_{k,d}^{\textup{LiM}} d \cdot \beta_{k,d} + o_p(1)
\end{align*} 
where $\sum_{d=1}^{\bar{d}}\sum_{k=1}^d w_{k,d}^{\textup{LiM}} = 1$ and $w_{k,d}^{\textup{LiM}} \lessgtr 0$. Similarly, under Assumptions \ref{ass:equilibrium}, \ref{ass:exogeneity} and \ref{ass:miss}-\ref{ass:lis}, \begin{align*} 
{\beta}^{\textup{LiS}} = \sum_{d=1}^{\bar{d}}\sum_{k=1}^d w_{k,d}^{\textup{LiS}} \beta_{k,d} + o_p(1)
\end{align*} where $\sum_{d=1}^{\bar{d}}\sum_{k=1}^d w_{k,d}^{\textup{LiS}} = 1$ and $w_{k,d}^{\textup{LiS}} \lessgtr 0$. 
\end{proposition}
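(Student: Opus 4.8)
\section*{Proof proposal}

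The plan is to evaluate both estimands directly from the structural equation \eqref{eq:peer_effect_model}; no use of the full reduced form in Corollary \ref{cor:reduced_form} is needed. The first step is to note that $\mathbb{Z}$ (a predetermined function of $\{x_i,A_{i,j}\}$ by Assumption \ref{ass:miss}) and $\mathbb{M}_X$ are nonrandom given the conditioning set, so that $\E[\mathbb{Z}^\intercal \mathbb{M}_X \mathbb{Y}] = (\mathbb{M}_X\mathbb{Z})^\intercal \E[\mathbb{Y}]$, while Proposition \ref{prop:uniqueNE} and Assumption \ref{ass:exogeneity} give $\E[Y_i] = \sum_{k=1}^{d_i} \beta_{k,d_i}\E[\tilde{Y}_{i,k}] + x_i^\intercal\gamma$. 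Since $\mathbb{M}_X\mathbb{X}=0$ the covariate term vanishes, and regrouping the sum over $i$ by degree yields $\E[\mathbb{Z}^\intercal \mathbb{M}_X \mathbb{Y}] = \sum_{d=1}^{\bar d}\sum_{k=1}^{d} C_{k,d}\,\beta_{k,d}$, where $C_{k,d} := \sum_{i: d_i = d} (\mathbb{M}_X\mathbb{Z})_i\, \E[\tilde{Y}_{i,k}]$. This same quantity will appear in the LIM and LIS denominators, which is what drives the result.

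Next I would handle the denominators using the elementary identity that the mean (respectively, sum) of individual $i$'s peers' outcomes equals the mean (respectively, sum) of the ordered peer outcomes $\tilde{Y}_{i,1},\ldots,\tilde{Y}_{i,d_i}$. Because $\E$ passes through the fixed weights $A_{i,j}/d_i$, the $i$-th entry of $G\E[\mathbb{Y}]$ is $\frac1{d_i}\sum_{k=1}^{d_i}\E[\tilde{Y}_{i,k}]$ and that of $A\E[\mathbb{Y}]$ is $\sum_{k=1}^{d_i}\E[\tilde{Y}_{i,k}]$. Hence $\E[\mathbb{Z}^\intercal\mathbb{M}_X G\mathbb{Y}] = \sum_{d,k} C_{k,d}/d$ and $\E[\mathbb{Z}^\intercal\mathbb{M}_X A\mathbb{Y}] = \sum_{d,k} C_{k,d}$, each nonzero under the corresponding part of Assumption \ref{ass:miss}. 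Dividing, $\beta^{\text{LIM}} = \big(\sum_{d,k} C_{k,d}\beta_{k,d}\big)/\big(\sum_{d,k} C_{k,d}/d\big)$ and $\beta^{\text{LIS}} = \big(\sum_{d,k} C_{k,d}\beta_{k,d}\big)/\big(\sum_{d,k} C_{k,d}\big)$.

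The final step is to read off the weights: set $w^{\text{LIM}}_{k,d} := (C_{k,d}/d)/\sum_{d',k'}(C_{k',d'}/d')$ and $w^{\text{LIS}}_{k,d} := C_{k,d}/\sum_{d',k'}C_{k',d'}$. Each collection sums to one by construction, and substituting back gives $\sum_{d,k} w^{\text{LIM}}_{k,d}\, d\,\beta_{k,d} = \beta^{\text{LIM}}$ and $\sum_{d,k} w^{\text{LIS}}_{k,d}\,\beta_{k,d} = \beta^{\text{LIS}}$, as claimed. The weights are not sign-restricted: $(\mathbb{M}_X\mathbb{Z})_i$ generically takes both signs across $i$ (indeed $\sum_i (\mathbb{M}_X\mathbb{Z})_i = 0$ whenever $\mathbb{X}$ contains an intercept), so there is no reason for $C_{k,d}$ to share a sign with its normalizing constant.

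I do not expect a substantive obstacle — the argument is essentially bookkeeping — but the two points that need care are: (i) justifying the interchange of $\E$ with the linear maps $\mathbb{Z}^\intercal\mathbb{M}_X$, $G$ and $A$, which rests on the instrument and covariates being predetermined and on $\E[\tilde{Y}_{i,k}]$ being well-defined and finite, the latter following from the unique-equilibrium result of Proposition \ref{prop:uniqueNE} together with integrability of $\{\varepsilon_i\}_{i=1}^n$; and (ii) the observation that the LIM and LIS aggregators of the raw peer outcomes coincide with the mean and sum of the \emph{ordered} peer outcomes, which is precisely what makes the coefficients $C_{k,d}$ common to numerator and denominator and hence delivers weights that sum to one.
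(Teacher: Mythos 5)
Your proposal is correct and follows essentially the same route as the paper's proof: both substitute the structural conditional mean $\E[Y_i]=\sum_{k}\beta_{k,d_i}\E[\tilde{Y}_{i,k}]+{x_i}^\intercal\gamma$ into the estimand, annihilate the covariates with $\mathbb{M}_X$, regroup the sum by degree, and read off weights whose common building blocks (your $C_{k,d}$, the paper's $\sum_{i}\E\left[(z_i-\hat{\mathcal{A}}x_i)\tilde{Y}_{i,k}\mathbf{1}\{d_i=d\}\right]$) appear in both numerator and denominator, differing only by the factor $1/d$ between the LIM and LIS cases. The only cosmetic differences are that the paper re-derives the estimand formula from the LIM moment conditions before substituting, and adds a closing remark on the overidentified (vector-valued $z_i$) case via a weighting vector $c$, which your scalar-instrument argument omits.
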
 

\begin{proof} 
See Appendix \ref{app:missspecproof}. 
\end{proof} 

\noindent Note that, in the case of LiM misspecification, the weights are applied not directly to $\beta_{k,d}$ but to $d \cdot \beta_{k,d}$. The rank-dependent peer effect coefficient $\beta_{k,d}$ represents the impact of a single peer outcome when there are $d$ peers in total, so it should be rescaled by $d$ to interpret it as a coefficient on a `representative' or `average' peer. For instance, when $\beta_{1,d} = \dots = \beta_{d,d} = \beta$, we have: \begin{align*} 
\sum_{k=1}^d \beta_{k,d} \tilde{y}_{i,k} = d \beta \bar{y}_i, 
\end{align*} where the coefficient on the average peer outcome is $d \beta_{k,d}$.

Proposition \ref{prop:misspecification} demonstrates that both the LiM estimand $\beta^{\mathrm{LiM}}$ and the LiS estimand $\beta^{\mathrm{LiS}}$ are weighted sums of the rank-dependent peer effect coefficients $\{\beta_{k,d}\}_{k,d}$. While the weights sum to one, which is reasonable, there is no guarantee that the weights have the same sign. As a result, it is possible for all $\beta_{k,d}$ to be positive, while the LiM estimand $\beta^{\mathrm{LiM}}$ is negative, and vice versa. The expressions for the weights are provided in the Appendix.

\section{Simulation}\label{sec:sim}

In this section, we investigate the finite sample properties of our estimators and compare their performance to existing estimators, specifically the CES approach of \citet{boucher2024toward}, under various data-generating processes. To ensure our approach is as close to real empirical settings as possible, we will design the setup of the simulations to be as close as possible to the empirical application.  

As most empirical applications contain many small networks, we generate networks of size $n_g=50$ and vary the number of networks, $m$. 
To generate networks, we generate the links based on a logit model, defining
\begin{align*}
    a_{i,j,g} = \mathbf{1}\{ -3 + r_{i,j,g} \leq v_{i,j,g} \}
\end{align*}
with $v_{i,j,g} \overset{\text{iid}}{\sim} \text{logit}$ is an unobserved shock and $r_{i,j,g} \sim N(0,1)$ is a covariate. When we estimate the fully saturated version of our model, we will limit the maximum number of links in the network, $\bar{d}$, to 5. This is achieved by dropping links from individuals with more links than $\bar{d}$ until all individuals have at most $\bar{d}$ links. This procedure generates a network with a roughly uniform degree distribution, with a small share of individuals in each network having zero links. For the results in this section, we fix the networks across simulations.

Our model specification uses the quantile specification from Section \ref{sec:implementation:restriction}, with equally spaced quantiles. We primarily use four quantiles at $\{0,1/3,2/3,1\}$. This gives the outcome equation 
\begin{align*}
    y_{i,g} = \sum_{k=1}^4 \beta^k \tilde{y}_{i,g}^k + \gamma_1 x_{1,i,g} + \gamma_2 x_{2,i,g}+ \delta_1 \bar{x}_{1,i,g} + \delta_2 \bar{x}_{2,i,g} + \mu_{g} + \varepsilon_{i,g}
\end{align*}
where $\tilde{y}_{i}^k$ is the $(k-1)/3$-th quantile of peer outcomes for unit $i$, $x_{1,i,g}$ and $x_{2,i,g}$ are the two covariates for unit $i$, and $\mu_{g}$ is the network fixed effect for network $g$. The fixed effects are drawn from a uniform distribution on the interval [8, 10]. This ensures that the outcome is always positive, which is necessary for the CES-estimator to be well defined. The continuous covariate $x_{1,i,g}$ is normally distributed with a mean equal to the fixed effect and a variance equal to one, while the discrete covariate $x_{2,i,g}$ is distributed Poisson with a mean equal to $2$. We also include contextual effects $\bar{x}_{1,i,g}$ and $\bar{x}_{2,i,g}$, the average covariate values for all $i$'s peers, for each of our two covariates. The error term $\varepsilon_i$ is generated from a normal distribution with variance equal to $1$.  The coefficients on the two covariates are $\gamma_1 = 1.5$ and $\gamma_2 = -0.8$, respectively, and the coefficients on the two contextual effects are $\delta_1 = -0.5$ and $\delta_2 = 1.2$, where the values are set to be similar to the patterns we see in our empirical application.\footnote{Supplementary Appendix \ref{app:sim} shows all results from this section excluding these contextual effects, i.e., setting $\delta_1 = \delta_2 = 0$. } To avoid unnecessary repetition in our tables, we will only show results for a subset of $\beta$ for the saturated model. 

We apply both the correctly specified quantile model and the fully saturated rank-dependent model to estimate the peer effect in simulated samples. Since the fully saturated model nests any quantile model, the fully saturated model is also correctly specified. Given a quantile specification, we calculate the corresponding parameter values of $\beta_{k,d}$ as the true values used to calculate bias and MSE for the saturated model.\footnote{For $d_i = 1$, $\beta_{1,1} = \beta^0 + \beta^1 + \beta^2 + \beta^3$; for $d_i = 2$ we get $\beta_{1,2} = \beta^0 + 0.66\beta^1 + 0.33 \beta^2$ and so on.} In the saturated rank model, we use the peer covariates and contextual effects as instruments, ordered by the values of each variable. Since we have two covariates, this yields $4\sum_{d=1}^{\bar{d}}d$ instruments in total. For the quantile models we instead use the 4 quantiles of the covariates and contextual effects as our instruments, giving a total of 16 instruments.

\begin{table}[!t]
\vspace{3mm}
    \centering
    \small
    \renewcommand{\arraystretch}{1.8}
    \caption{Finite sample performance of Rank and Quantile models} \label{tab:sim_change_ng}
    \begin{tabular}{l ccccc c ccccc c}
        \hline \hline
         & \multicolumn{5}{c}{Bias } & & \multicolumn{5}{c}{MSE} \\
         \cline{2-6} \cline{8-12}       
         & (1) & (2) & (3) & (4) &(5) &  & (6) & (7) & (8)& (9) & (10) \\
        \multicolumn{5}{l}{\textbf{Panel A: Saturated model}} \\ 
        \hline 
        $\beta_{1,2}$ & 0.007 & 0.001 & 0.002 & -0.001 & 0.000 &  & 0.010 & 0.010 & 0.005 & 0.002 & 0.001 \\ 
        $\beta_{2,2}$ & 0.003 & 0.005 & 0.001 & 0.002 & 0.000 &  & 0.009 & 0.009 & 0.004 & 0.002 & 0.001 \\ 
        $\beta_{1,5}$ & -0.010 & 0.002 & 0.010 & 0.001 & -0.001 &  & 0.035 & 0.025 & 0.040 & 0.013 & 0.011 \\ 
        $\beta_{2,5}$ & 0.023 & -0.005 & -0.016 & -0.000 & 0.000 &  & 0.087 & 0.083 & 0.109 & 0.064 & 0.072 \\ 
        $\beta_{3,5}$ & -0.019 & 0.008 & 0.010 & -0.006 & 0.003 &  & 0.101 & 0.130 & 0.122 & 0.111 & 0.110 \\ 
        $\beta_{4,5}$ & 0.005 & -0.003 & 0.004 & 0.009 & -0.001 &  & 0.101 & 0.103 & 0.089 & 0.086 & 0.061 \\ 
        $\beta_{5,5}$ & 0.011 & 0.004 & -0.006 & -0.003 & -0.001 &  & 0.047 & 0.030 & 0.027 & 0.018 & 0.010 \\ 
        \multicolumn{5}{l}{\textbf{Panel B: Restricted model } } \\
        \hline
        $\beta_{\tau_1}$ & 0.005 & -0.002 & 0.003 & 0.000 & -0.000 &  & 0.019 & 0.016 & 0.014 & 0.006 & 0.003 \\ 
        $\beta_{\tau_2}$ & -0.006 & 0.004 & -0.008 & -0.001 & 0.001 &  & 0.089 & 0.109 & 0.090 & 0.041 & 0.021 \\ 
        $\beta_{\tau_3}$ & 0.004 & 0.004 & 0.007 & 0.002 & -0.000 &  & 0.104 & 0.113 & 0.078 & 0.038 & 0.023 \\ 
        $\beta_{\tau_4}$ & -0.000 & -0.003 & -0.001 & -0.001 & -0.000 &  & 0.024 & 0.018 & 0.011 & 0.006 & 0.004 \\ 
        \hline
        $n_g$ &  5 & 10 & 20 & 50& 100 & &  5 & 10 & 20 & 50& 100 \\
        $\bar{d}$ &  5 & 5 & 5 & 5& 5 & & 5 & 5 & 5 & 5& 5 \\
        \hline
    \end{tabular}
    \begin{minipage}{0.97\textwidth}   \footnotesize \vspace{3mm} \textit{Notes:}  The data generating process is the quantile model with $\beta_{\tau} = \left(-0.5,0.35,0.15,0.1\right)$. $n_g$ gives the number of networks used in the simulation and $\bar{d}$ is the maximum number of peers for each individual. All the simulations include network fixed effects. The instruments used is the distribution of the peers covariate values, as measured either by the ranked peer covariate values or the quantiles of peer covariate values. The results are based on 10{,}000 draws.
    \end{minipage}
\end{table}

Table \ref{tab:sim_change_ng} gives the results for TSLS for the saturated rank model in Panel A and the quantile model in Panel B.\footnote{In Supplementary Appendix \ref{app:sim}, we show that the OLS estimator is, as expected, biased in both the saturated and quantile versions of our model.} Each row gives the values for a given parameter $\beta_{k,d}$ or $\beta^k$. Across the columns, we increase the sample by gradually increasing the number of networks from $5$ to $100$. These sample sizes therefore go from relatively small to approaching larger network datasets, though the networks are still small. For comparison, in our empirical application we will have 141 networks (schools) with an average size of 531. 
 
In Panel $A$ we see the results for the rank model for the parameters $\beta_{k,2}$ and $\beta_{k,5}$. As expected, all of our estimators are consistent, though we do see some small biases for the smaller sample sizes. Interestingly, the parameters $\beta_{k,2}$ are generally better estimated than $\beta_{k,5}$. This is because there are effectively only two endogenous variables to estimate on the subset of $d_i = 2$, while there are five for individuals with $d_i = 5$, but there is roughly a similar number of observations since the degree distribution is roughly uniform. However, for both groups of parameters we see that estimation error, as measured by the MSE, decreases as the number of networks increases. 

The results for the saturated model also reveal an interesting dynamic where the precision of the parameters depends on the ranking, with $\beta_{2,5},\beta_{3,5},\beta_{4,5}$ having three times the MSE of the endpoints $\beta_{1,5},\beta_{5,5}$. This is because the identity of an individual's 2nd, 3rd, or 4th ranked friend is more dependent on the shocks compared to the highest or lowest ranked friend. This difference persists even as the sample size increases. 

Comparing the quantile to the rank model, we see that the quantile model generally performs better in terms of MSE, and improves faster as the sample grows.  We see a similar pattern as before when comparing the middle to the extreme quantiles, with $\beta^2,\beta^3$ having a relatively lower precision compared to $\beta^1,\beta^4$, especially at smaller sample sizes.

\subsection{Model Selection}

\begin{table}[b!]
\centering
\caption{Monte Carlo Simulations --- Encompassing tests}
\label{tab:sim_tests}
\footnotesize
\begin{threeparttable}
    \begin{tabular}{P{1.6cm}P{1.6cm}lP{1.6cm}P{1.6cm}lP{1.6cm}P{1.6cm}}
    \toprule
    \multicolumn{2}{c}{2 qtls. vs. 3 qtls.} & &\multicolumn{2}{c}{3 qtls. vs. 4 qtls.} & &\multicolumn{2}{c}{4 qtls. vs. 5 qtls.} \\ [0.5ex]
    \cline{1-2} \cline{4-5} \cline{7-8} \addlinespace[0.5ex] 5\% & 10\% && 5\% & 10\% && 5\% & 10\% \\
    \midrule
    \multicolumn{8}{c}{DGP A:   $\beta = (0, 0.05, 0.2, 0.3)$} \\[0.5ex]
    1.000    & 1.000   &    & 0.105   & 0.180  &   & 0.011  & 0.020 \\[1.5ex]
    \multicolumn{8}{c}{DGP B: $\beta =   (0.3, 0.2, 0.05, 0)$}      \\[0.5ex]
    1.000    & 1.000   &    & 0.914   & 0.953  &   & 0.013  & 0.026 \\[1.5ex]
    \multicolumn{8}{c}{DGP C: $\beta = (0,   0.275, 0.275, 0)$}     \\[0.5ex]
    1.000    & 1.000   &    & 0.994   & 0.998  &   & 0.009  & 0.021 \\[1.5ex]
    \multicolumn{8}{c}{DGP D: $\beta =   (0.275, 0, 0, 0.275)$}     \\[0.5ex]
    0.011    & 0.023   &    & 0.008   & 0.017  &   & 0.013  & 0.025 \\[1.5ex]
    \multicolumn{8}{c}{DGP E: $\beta =   (-0.05, 0.35, 0.15, 0.1)$} \\[0.5ex]
    1.000    & 1.000   &    & 1.000   & 1.000  &   & 0.009  & 0.021 \\[1.5ex]
    \multicolumn{8}{c}{DGP F (LIM model):   $\beta = 0.55$}         \\[0.5ex]
    1.000    & 1.000   &    & 0.961   & 0.984  &   & 0.442  & 0.574 \\\bottomrule
    \end{tabular}
\begin{tablenotes}[para,flushleft]
The columns labeled ``$a$ qtls. vs. $b$ qtls.``, for integers $a$ and $b$, report the share of rejections of the null hypothesis that the model with $a$ quantile levels does not perform worse than the model with $b$ quantile levels, at the significance levels indicated in the second row.
\end{tablenotes}
\end{threeparttable}
\end{table}

Implementing our estimator requires the researcher to select how many quantiles they want to include in the model. Table \ref{tab:sim_tests} shows the rejection rates for the encompassing test from Section \ref{sec:econometrics:test}. We run these tests for 6 different DGPs, four DGPs with four quantiles as the true model, one with only the min and max peer outcome mattering, and the Linear-in-Mean model.  

Focusing first on the first three rows, we see that the model always rejects the null hypothesis of 3 quantiles in the true DGP. Similarly, we almost never reject the null of 4 quantiles, with rejection rates being fairly steady at around $0.2\alpha$. This seeming contradiction of conservative coverage but high power likely stems from the large difference between our two hypotheses. This result is not driven by the structure of the peer effects, as can be seen from the remarkably stable results across DGP's A, B, C, and E. It is also not driven by having 4 as the true number of quantiles in the DGP, as DGP D shows that when the true number of quantiles is 2, the test does not reject keeping 3 quantiles over 4. 

In total, these results indicate that as long as the true DGP is some number of equi-spaced quantiles, this test is capable of distinguishing the correct number with a large degree of power. However, this is not the case when the model is misspecified, as is the case in DGP F. To capture the mean peer outcome we would need an infinite amount of quantiles, which is clearly infeasible. As one may therefore expect the test consistently prefers more quantiles.  

\subsection{Comparison to the CES Estimator}

Our next simulation compares the ability of our estimator to capture different patterns of peer effects in comparison to the other two main estimators in the literature, the LiM model and the CES estimator of \citet{boucher2024toward}. We will do this by simulating data from our DGP and then estimating the quantile version of our model, the CES model, and the LiM model on the simulated data. Since we do not estimate the saturated model in this exercise, we no longer limit the maximum number of links, $\bar{d}$, to 5. Other than that, the network formation model is the same as in the rest of this section. In most simulations, most individuals have between 0 and 3 links, with the maximum number of links being 12. This is similar to the degree distributions we see in our empirical application. 

\begin{table}[t!]
\centering
\small
\caption{Monte Carlo Simulations --- Estimation of the Quantile Specification}
\label{tab:sim_compare_estimand}
\begin{threeparttable}
    \begin{tabular}{cccclclcc}
    \toprule
    \multicolumn{4}{c}{Quantiles} & & \multicolumn{1}{c}{LIM} & & \multicolumn{2}{c}{CES} \\
    \cline{1-4} \cline{6-6} \cline{8-9} \addlinespace[0.5ex] $\beta^1$ & $\beta^2$ & $\beta^3$ & $\beta^4$ & & $\beta$ & & $\rho$ & $\beta$ \\
    \midrule
    \multicolumn{9}{c}{DGP A: $\beta = (0, 0.05, 0.2, 0.3)$} \\[0.5ex]
    -0.000    & 0.050     & 0.200     & 0.300     &  & 0.386   &  & 30.057  & 0.562   \\
    (0.005)   & (0.023)   & (0.044)   & (0.028)   &  & (0.020) &  & (8.914) & (0.011) \\[2ex]
    \multicolumn{9}{c}{DGP B: $\beta =   (0.3, 0.2, 0.05, 0)$}                        \\[0.5ex]
    0.300     & 0.200     & 0.050     & -0.000    &  & 0.715   &  & -6.383  & 0.502   \\
    (0.008)   & (0.033)   & (0.058)   & (0.033)   &  & (0.022) &  & (9.338) & (0.029) \\[2ex]
    \multicolumn{9}{c}{DGP C: $\beta = (0,   0.275, 0.275, 0)$}                       \\[0.5ex]
    0.000     & 0.275     & 0.275     & 0.000     &  & 0.476   &  & 3.057   & 0.554   \\
    (0.006)   & (0.025)   & (0.048)   & (0.030)   &  & (0.013) &  & (0.275) & (0.010) \\[2ex]
    \multicolumn{9}{c}{DGP D: $\beta =   (0.275, 0, 0, 0.275)$}                       \\[0.5ex]
    0.275     & -0.000    & -0.000    & 0.275     &  & 0.635   &  & -1.714  & 0.511   \\
    (0.006)   & (0.028)   & (0.052)   & (0.031)   &  & (0.014) &  & (0.253) & (0.014) \\[2ex]
    \multicolumn{9}{c}{DGP E: $\beta =   (-0.05, 0.35, 0.15, 0.1)$}                   \\[0.5ex]
    -0.050    & 0.350     & 0.150     & 0.100     &  & 0.434   &  & 5.004   & 0.561   \\
    (0.006)   & (0.024)   & (0.047)   & (0.029)   &  & (0.016) &  & (0.575) & (0.010) \\[2ex]
    \multicolumn{9}{c}{DGP F (LIM model):   $\beta = 0.55$}                           \\[0.5ex]
    0.112     & 0.196     & 0.098     & 0.144     &  & 0.550   &  & 1.002   & 0.550   \\
    (0.006)   & (0.028)   & (0.052)   & (0.031)   &  & (0.008) &  & (0.130) & (0.010) \\
    \bottomrule
    \end{tabular}
\begin{tablenotes}[para,flushleft] 
\footnotesize
The models are simulated and estimated 1{,}000 times. Values without parentheses represent average peer effect estimates, while those in parentheses correspond to standard errors. The instrument matrix includes the quantiles of $\boldsymbol{x}$ and $\bar{\boldsymbol{x}}$ among friends, computed at ten levels uniformly spaced between 0 and 1. DGPs A--E are generated from the proposed quantile-based model, with four quantiles at $\{0, ~1/3, ~ 2/3,~ 1\}$, and $\beta = (\beta^0, ~\beta^1, ~\beta^2, ~\beta^3)$ is the vector of peer effects at each quantile. DGP F follows the standard LIM model with only spillover effects, where $\beta = 0.55$. All estimations account for unobserved subnetwork heterogeneity using fixed effects.
\end{tablenotes}
\end{threeparttable}
\end{table}

Table \ref{tab:sim_compare_estimand} shows the results from estimating our quantile estimator and the CES estimator on the same data. DGP A and B both have monotonically increasing or decreasing peer effects. Our estimator fully captures these effects. These effects are also well captured by the CES aggregator, which is able to correctly assign $\rho > 1$ for the increasing case and $\rho < 1$ for the decreasing case. In both cases, however, it still is not able to fully capture the total effect coming from the peer effects, which equals 0.55 across all our DGPs. The LiM model similarly struggles for both these DGPs, either underestimating or overestimating the total spillover in the data. 

DGP C and D instead put all the weight on either the middle or extremes of the distribution. Importantly, our estimator is fully able to estimate the zeros in the parameter space, correctly identifying the parts of the peer outcome distributions that have no effect on an individual's own outcome. Interestingly, the LiM model performs much better for these symmetric cases, implying the LiM model is more able to approximate the true peer effects when they are symmetric, with the peers around the median having the largest weight. The CES estimator, however, struggles with these cases. Since neither of these cases has concave or convex peer effects, the assumptions of the model do not hold. This leads to the CES estimator wrongly estimating that higher peer outcomes are more important in DGP~C ($\rho > 1$), and that lower peer outcomes are more important in DGP~D ($\rho < 1$), despite this not being the case for either DGP. 

DGP E introduces peer effects of different signs, by having the lowest peer outcome have a negative effect while the other quantiles have a positive effect. Our estimator is fully able to capture this pattern, and since the majority of the weight is on the middle of the distribution, the LiM model also performs decently. The CES estimator wrongly fits a positive $\rho$, implying peers with higher outcomes are more important for the spillover. This is likely due to the sharp increase in the peer effect from the lowest outcome friend ($\beta^1$) to the $33\%$ quantile ($\beta^2$). However, it completely contradicts the true DGP where the most influential peers are those at the second quantile. 

Finally, we simulate the data from a LiM model. As expected, both the LiM model and the CES model are able to fully capture the spillover, and the CES correctly identifies that $\rho = 1$, i.e., a LiM model. Though the quantile version of our model does not nest the LiM model, it still estimates a roughly uniform effect of peers. This is an important reminder that restrictions can reduce the performance of our model if the added restrictions are not correct, though the model can still recover meaningful results under misspecifications.

\section{Empirical Application} \label{sec:emp}

In this section, we present an empirical application using the Wave I dataset from the National Longitudinal Study of Adolescent to Adult Health (Add Health). We examine several outcomes and show that the quantile model captures diverse patterns of peer effects that challenge existing specifications. To illustrate the effectiveness of our approach, we also conduct a counterfactual analysis, demonstrating that ignoring these patterns can lead to the misidentification of key players and reduce the impact of targeted policies.

\subsection{Add Health Data}
Wave I of the Add Health survey provides nationally representative and detailed information on \nth{7}--\nth{12} graders from 144 schools during the 1994--1995 school year in the United States (US). Approximately 90{,}000 students completed a questionnaire covering demographics, family background, academic performance, health-related behaviors, and friendship links. Each respondent could nominate up to five male and five female best friends within the same school.\footnote{The dataset has some limitations that merit discussion. First, some referred friend identifiers are missing and are therefore removed from the network. Recent literature proposes methods to address this issue; however, most of these approaches focus on linear-in-means models and cannot be easily extended to nonlinear settings \cite[e.g.,][]{lewbel2023social, herstad2023missing, boucher2025estimating}. Second, the observed degree is censored, as students can nominate up to ten friends. However, only 1.12\% of students reach this maximum. To maintain focus on the main objective of the paper, we do not address this censoring issue. Despite these limitations, it is worth noting that the Add Health dataset remains the most comprehensive network dataset that is currently available for studying peer effects.}

We study 11 outcomes, including grade point average (GPA), academic effort, participation in extracurricular activities, future expectations, trouble at school, smoking, drinking, risky behaviors, self-esteem, physical exercise, and fighting. All of these outcomes, except for future expectations, have been studied by \cite{boucher2024toward} using the CES model. The future expectations outcome is constructed as the sum of binary indicators for whether students believe they will live to age 35, avoid HIV/AIDS, graduate from college, and have a middle-class family income by age 30. We control for several exogenous variables, including student age, grade, sex, race, Hispanic ethnicity (Spanish-speaking), and mother's education and employment. We also control for contextual variables, defined as the average of the exogenous variables among a student’s friends. 

After removing observations with missing values, the final sample comprises approximately 75{,}000 students from 141 schools. The average number of friends per student is 3.47; 22\% of students have no friends, and approximately 64\% have four friends or fewer.

\subsection{Empirical Results}

The estimation results for the quantile, LIM, and CES models are summarized in Table~\ref{tab:appresult}, and the specification tests are presented in Table~\ref{tab:appresult:test}.\footnote{Note that the estimates for the CES model differ from those reported by \cite{boucher2024toward} because their model does not control for contextual effects.} The quantile levels are evenly spaced from 0 to 1. The $p$-values for the encompassing test in Table~\ref{tab:appresult:test} indicate that the specification with three quantiles is often rejected in favor of the one with four quantiles, whereas the specification with four quantiles is generally not rejected against the one with five quantiles. Two exceptions, however, are the outcomes \textit{trouble at school} and \textit{risky behaviour}, for which the specification with four quantiles is still rejected in favor of the one with five quantiles at the 5\% level. We consider a specification with four quantile levels throughout the main empirical analysis and present results based on the five-quantile specification in Table~\ref{tab:append:addhealth} in Appendix~\ref{append:addhealth}.\footnote{As a robustness check, we also estimate a four-quantile model on the subsample of nonisolated students; that is, students who have at least one friend (see Table~\ref{tab:append:addhealth} in Appendix~\ref{append:addhealth}). Our results are robust to the removal of isolated students.} 

The CES specification restricts peer effects to be monotone across quantile levels. To compare the quantile specification results to those of the LIM and CES models, we perform monotonicity tests (see Table~\ref{tab:appresult:test}). Specifically, we test whether peer effects are uniform, increasing, or decreasing across quantile levels. For the increasing (respectively decreasing) test, the null hypothesis is that $\beta^1 \leq \dots \leq \beta^4$ (respectively $\beta^1 \geq \dots \geq \beta^4$). We test these inequality restrictions using a Wald-type criterion \citep[see][]{kodde1986wald}.

Our instruments include the quantiles of $\boldsymbol{x}$ and $\bar{\boldsymbol{x}}$ among friends and friends of friends, computed at ten levels uniformly spaced between 0 and 1.\footnote{This procedure generates approximately 600 instruments. This is less than $1\%$ of the size of our dataset in every specification considered, meaning the many-instrument bias discussed in the literature \citep{bekker1994alternative,newey2004higher} should not be present. As a sanity check, we also compare our estimates to the OLS estimates in the Supplementary Appendix Table \ref{tab:append:OLS} and find that they are statistically significantly different from the IV estimates for every outcome. } Given that we have several endogenous variables, we assess instrument strength using the rank Wald test proposed by \citet{kleibergen2006generalized}. This test yields large test statistics, suggesting that the model does not suffer from weak instrument problems.%

\begin{table}[htbp]
\footnotesize
\centering
\caption{Empirical Results}
\label{tab:appresult}
\begin{threeparttable}
    \begin{tabular}{ccccd{1}cd{1}cc}
    \toprule
    \multicolumn{4}{c}{Quantile} && \multicolumn{1}{c}{LIM} && \multicolumn{2}{c}{CES}\\
    $\beta^1$ & $\beta^2$ & $\beta^3$ & $\beta^4$ & & $\beta$ &  & $\rho$ & $\beta$ \\
    \midrule
    \multicolumn{9}{c}{Academic   achievements (GPA)}                          \\[0.5ex]
    0.071   & 0.147   & 0.637   & -0.122  &  & 0.804   &  & 0.552    & 0.801   \\
    (0.045) & (0.081) & (0.089) & (0.054) &  & (0.048) &  & (0.683)  & (0.042) \\[1.5ex]
    \multicolumn{9}{c}{Academic effort}                                        \\[0.5ex]
    0.143   & 0.158   & 0.140   & 0.093   &  & 0.643   &  & -4.630   & 0.500   \\
    (0.032) & (0.061) & (0.054) & (0.046) &  & (0.116) &  & (4.866)  & (0.114) \\[1.5ex]
    \multicolumn{9}{c}{Extracurricular   activities}                           \\[0.5ex]
    -0.081  & 0.551   & 0.239   & -0.007  &  & 0.805   &  & -0.171   & 0.699   \\
    (0.082) & (0.121) & (0.080) & (0.021) &  & (0.053) &  & (0.397)  & (0.035) \\[1.5ex]
    \multicolumn{9}{c}{Future perception}                                      \\[0.5ex]
    0.155   & 0.116   & 0.133   & 0.078   &  & 0.590   &  & 0.621    & 0.577   \\
    (0.033) & (0.087) & (0.115) & (0.068) &  & (0.061) &  & (0.776)  & (0.059) \\[1.5ex]
    \multicolumn{9}{c}{Trouble at school}                                      \\[0.5ex]
    0.028   & 0.249   & 0.265   & 0.031   &  & 0.780   &  & 0.335    & 0.835   \\
    (0.075) & (0.112) & (0.073) & (0.041) &  & (0.109) &  & (0.361)  & (0.099) \\[1.5ex]
    \multicolumn{9}{c}{Smoking}                                                \\[0.5ex]
    -0.129  & 0.384   & 0.352   & 0.115   &  & 0.783   &  & 1.588    & 0.684   \\
    (0.084) & (0.095) & (0.055) & (0.019) &  & (0.051) &  & (0.667)  & (0.097) \\[1.5ex]
    \multicolumn{9}{c}{Drinking}                                               \\[0.5ex]
    0.110   & 0.042   & 0.223   & 0.081   &  & 0.573   &  & 0.460    & 0.638   \\
    (0.134) & (0.177) & (0.084) & (0.015) &  & (0.087) &  & (0.387)  & (0.109) \\[1.5ex]
    \multicolumn{9}{c}{Risky behaviors}                                        \\[0.5ex]
    -0.091  & 0.363   & 0.243   & 0.123   &  & 0.672   &  & 0.743    & 0.708   \\
    (0.096) & (0.150) & (0.082) & (0.021) &  & (0.060) &  & (0.364)  & (0.071) \\[1.5ex]
    \multicolumn{9}{c}{Self-esteem}                                            \\[0.5ex]
    0.110   & 0.143   & 0.229   & -0.023  &  & 0.509   &  & -9.190   & 0.380   \\
    (0.050) & (0.102) & (0.081) & (0.026) &  & (0.169) &  & (12.170) & (0.106) \\[1.5ex]
    \multicolumn{9}{c}{Physical exercise}                                      \\[0.5ex]
    0.086   & 0.155   & 0.193   & -0.004  &  & 0.614   &  & 0.841    & 0.596   \\
    (0.048) & (0.068) & (0.079) & (0.047) &  & (0.120) &  & (0.372)  & (0.077) \\[1.5ex]
    \multicolumn{9}{c}{Fighting}                                               \\[0.5ex]
    0.228   & -0.019  & 0.185   & 0.184   &  & 0.616   &  & 2.124    & 0.547   \\
    (0.078) & (0.111) & (0.068) & (0.027) &  & (0.071) &  & (0.809)  & (0.079) \\\bottomrule
    \end{tabular}
\begin{tablenotes}[para,flushleft]
Estimates are reported without parentheses, with standard errors (clustered at the subnetwork level) shown in parentheses. The first row indicates the model used: quantile, LIM, or CES. The full table, including coefficients for the control variables, is available upon request.
\end{tablenotes}
\end{threeparttable}
\end{table}

\begin{table}[htbp]
\centering
\footnotesize
\caption{Empirical Results --- Specification Tests}
\label{tab:appresult:test}
\begin{threeparttable}
    \begin{tabular}{P{2cm}P{2cm}ccccc}
    \toprule
    \multicolumn{2}{c}{Encompassing test $p$-value}& \multirow{2}{*}{\parbox{2cm}{\centering KP LM test \\ P-value}} & & \multicolumn{3}{c}{Monotoniticy test $p$-value}\\
    3 vs. 4 & 4 vs. 5 &&& Uniform & Increasing & Decreasing\\
    \midrule
    \multicolumn{7}{c}{Academic   achievements (GPA)}                    \\[0.5ex]
    0.007     & 0.102     & 0.000 &  & 0.000   & 0.000      & 0.000      \\[1.5ex]
    \multicolumn{7}{c}{Academic effort}                                  \\[0.5ex]
    0.442     & 0.113     & 0.000 &  & 0.334   & 0.003      & 0.920      \\[1.5ex]
    \multicolumn{7}{c}{Extracurricular   activities}                     \\[0.5ex]
    0.023     & 0.584     & 0.000 &  & 0.000   & 0.000      & 0.000      \\[1.5ex]
    \multicolumn{7}{c}{Future perception}                                \\[0.5ex]
    0.884     & 0.976     & 0.000 &  & 0.047   & 0.000      & 0.975      \\[1.5ex]
    \multicolumn{7}{c}{Trouble at school}                                \\[0.5ex]
    0.859     & 0.021     & 0.000 &  & 0.007   & 0.000      & 0.000      \\[1.5ex]
    \multicolumn{7}{c}{Smoking}                                          \\[0.5ex]
    0.000     & 0.103     & 0.000 &  & 0.000   & 0.000      & 0.000      \\[1.5ex]
    \multicolumn{7}{c}{Drinking}                                         \\[0.5ex]
    0.020     & 0.839     & 0.000 &  & 0.252   & 0.001      & 0.001      \\[1.5ex]
    \multicolumn{7}{c}{Risky behaviors}                                  \\[0.5ex]
    0.001     & 0.007     & 0.000 &  & 0.000   & 0.000      & 0.000      \\[1.5ex]
    \multicolumn{7}{c}{Self-esteem}                                      \\[0.5ex]
    0.337     & 0.983     & 0.000 &  & 0.000   & 0.000      & 0.004      \\[1.5ex]
    \multicolumn{7}{c}{Physical exercise}                                \\[0.5ex]
    0.983     & 0.232     & 0.000 &  & 0.005   & 0.000      & 0.021      \\[1.5ex]
    \multicolumn{7}{c}{Fighting}                                         \\[0.5ex]
    0.037     & 0.777     & 0.000 &  & 0.108   & 0.019      & 0.000     \\\bottomrule
    \end{tabular}
\begin{tablenotes}[para,flushleft]
For the encompassing test, the columns labeled `$a$ vs. $b$,' for integers $a$ and $b$, test the null hypothesis that the specification with $a$ quantile levels does not perform worse than the specification with $b$ quantile levels. The KP test refers to the $p$-value of the KP LM rank Wald test for weak instruments \citep{kleibergen2006generalized}. The columns labeled `monotonicity test $p$‑value` report the $p$‑values for tests where the null hypothesis is that peer effects are uniform, increasing, or decreasing across quantile levels \citep[see][for Wald‑type criteria for monotonicity tests]{kodde1986wald}.
\end{tablenotes}
\end{threeparttable}
\end{table}

The decomposition of peer effects across peer outcome quantiles reveals several patterns. A prevalent pattern is one in which mid-level outcomes tend to be the most influential, while students appear largely insensitive to peers at the lowest or highest ends of the outcome distribution. This pattern is observed for GPA, extracurricular activities, trouble at school, physical exercise, and self-esteem. Unsurprisingly, the monotonicity hypothesis is rejected for all of these outcomes, suggesting that the LIM and CES specifications are not suitable for studying them. Specifically, the LIM and CES specifications tend to overestimate peer effects in these cases because they cannot isolate the influence of mid‑level outcome peers without implicitly assuming that peers with the lowest or highest outcomes are also influential. Additionally, the CES specification fails to identify which part of the peer‑outcome distribution is relevant for shaping individual behavior in this context.  

In the case of extracurricular activities, the CES specification identifies a negative substitution parameter, with a standard error indicating that it is statistically different from one. This result suggests that peers with lower outcomes are more influential. Because the influence is monotonic under this specification, one might conclude that the peers at the bottom of the outcome distribution are the most influential. However, the quantile model provides a more nuanced insight: the most influential peers are those around the second quantile, while peers with the lowest outcomes exert little to no effect.

A comparable identification issue arises for self-esteem, where the CES specification yields a highly negative substitution parameter ($\rho = -9.940$) that is not statistically different from one due to the large variance of the estimate. This imprecision indicates that the specification cannot clearly identify the relevant part of the peer-outcome distribution. Our quantile specification sheds light on this imprecision, showing that peers at neither the top nor the bottom of the distribution appear to be the most influential.

For the other outcomes with this peer effect structure (GPA, trouble at school, and physical exercise), the substitution parameter estimates do not indicate a significant difference relative to the LIM model, whereas the monotonicity tests performed with the quantile specification reject the hypothesis that peer effects are uniform.

These results are consistent with theories of social comparison, which suggest that aligning with moderately performing peers, rather than with outliers, may be socially or psychologically optimal \citep{festinger1954theory}. However, the mechanisms underlying these results may vary across outcomes. One explanation is that, because homophily in exogenous characteristics drives friendship formation, peers at the top or bottom of the distribution are more likely to differ substantially from the individual, whereas peers with intermediate values tend to be more similar and more closely connected. Moreover, friendships with high-achieving students may not improve learning if such peers are unwilling to collaborate with friends they perceive as weaker; in this case, these friendships may affect other outcomes rather than academic achievement. Our results are consistent with this interpretation, as we find a negative (and nearly significant) effect of peers at the top of the distribution for the outcome academic achievement. A related pattern emerges for extracurricular activities, where highly involved peers are not necessarily more influential, as very high involvement may be negatively perceived in an academic context.

For some outcomes, we also find that, in addition to the influence of peers with intermediate outcomes, peers located at the top of the distribution exert significant effects. This pattern arises for outcomes related to potentially harmful behaviors, such as smoking, drinking, and risky behaviors. Here again, the LIM and CES specifications are unable to capture such nuanced structures. For these outcomes, the estimated substitution parameter indicates that the CES specification is not statistically different from the LIM specification. 

Furthermore, the results reveal cases in which peers with extreme outcomes are as influential as, or even more influential than, those with intermediate outcomes. This pattern is observed for future expectations, academic effort, and fighting. For these outcomes, the monotonicity assumption is not rejected. For academic effort, we observe a negative $\rho$, which is not significantly different from one. As with self‑esteem, this arises because peers at the top of the distribution are less influential, but the large variance of the estimate indicates that the most influential peers are not those at the bottom of the distribution.

The differences in peer effects across quantiles suggest that not all peers shape individual behaviour. These results carry significant policy implications, particularly for identifying key players within a network. Such identification may vary substantially across the three models, especially in cases where peers with mid‑level outcomes are the most influential.

\subsection{Measuring Student Influence}
This section studies the influence of students within their school. Influence is measured by the change in the school's average outcome when the student's outgoing and incoming links are removed. This corresponds to the change in the school's average outcome in a scenario in which the student becomes \textit{fully isolated}---that is, they have no friends and are not nominated by others. Students who are already fully isolated in the observed network have no influence, as their influence measure is zero. For students who are not fully isolated, removing their links can affect the outcome distribution at the game equilibrium because the peer set for those who nominated them will change. This measure of influence is also considered by \citet{ballester2006s} and \citet{lee2021key}, who define the key player as the student with the greatest influence.

We set the model parameters to their estimated values and compute the influence for each student. Within each school, we rank students by assigning the highest rank to the student with the largest influence. We then compare the rankings obtained from the quantile model to those from the LIM and CES models. Since the effect of removing a single student's links can be negligible in large networks, we focus on schools with fewer than 50 students.\footnote{In larger schools, a similar simulation exercise can be conducted by removing the links of a group of students rather than just one.}

Let $\mathbf{G}_g$ be the $n_g \times n_g$ adjacency matrix of school $g$, and let $\mathbf{G}_g^{(i)}$ denote the matrix obtained by setting the $i$-th row and column of $\mathbf{G}_g$ to zero. Let $y_{j,g}^{(i)}$ be the outcome of student $j$ when the school network is $\mathbf{G}_g^{(i)}$. The influence of student $i$ is measured by
$$
P_{i,g} = \dfrac{1}{n_g} \sum_{j = 1}^{n_s} \left(y_{j,g} - y_{j,g}^{(i)}\right).
$$

\begin{figure}[t!]
    \centering
    \includegraphics[scale = 1]{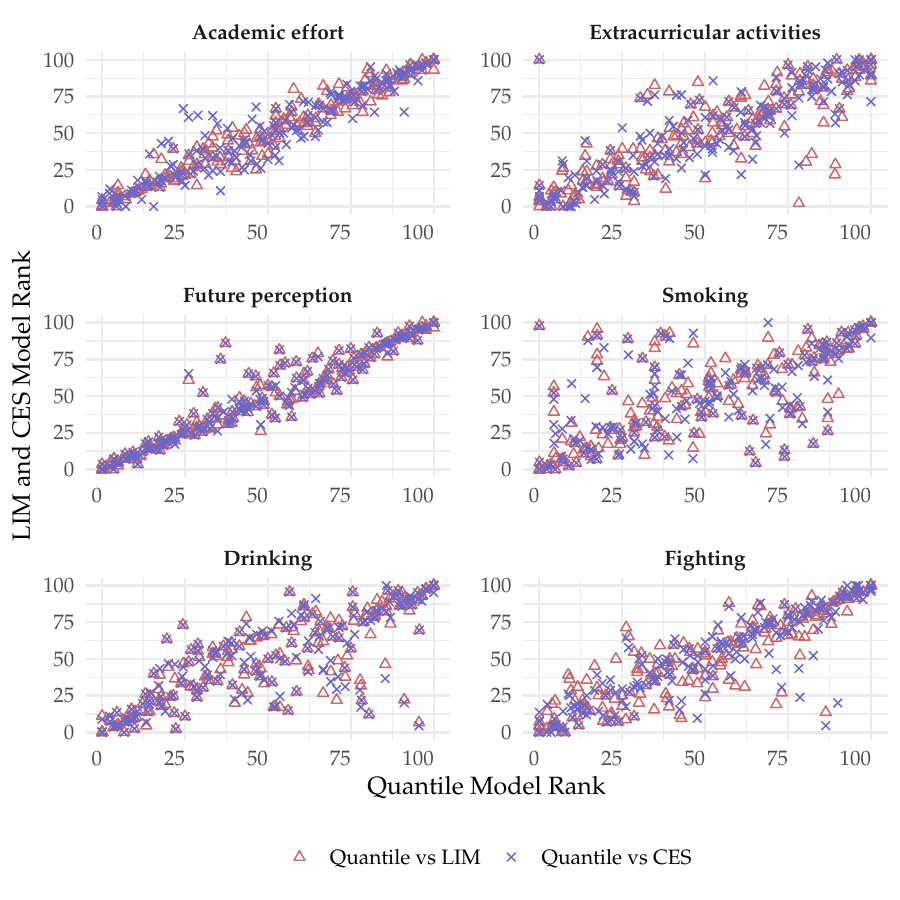}
    \caption{Influence Measure}
    \label{fig:counterfact}
    \vspace{-0.3cm}
    \footnotesize
    \justify
    The x-axis reports student ranks based on the influence measure in the quantile model, while the y-axis reports ranks based on the LIM and CES models. Each red triangle represents a student’s LIM model rank (on the y-axis) against their quantile model rank (on the x-axis). Each blue “x” marker represents a student’s CES model rank (on the y-axis) versus their quantile model rank (on the x-axis).
\end{figure}

Student ranks, normalized between 0 and 100, are presented in Figure~\ref{fig:counterfact} for selected outcomes. Rank gaps are substantial, particularly for smoking, extracurricular activities, and drinking. For these outcomes, some students who receive the highest influence scores under the CES and LIM models are assigned relatively low scores under the quantile model, and vice versa. These discrepancies reflect the non-monotonic pattern of peer effects associated with these outcomes. Under the quantile specification, peers at the top or bottom of the outcome distribution tend to receive lower ranks than under the other models, because the most influential peers are those with intermediate outcome levels. In contrast, the LIM and CES specifications cannot isolate the influence of mid-level outcome peers without simultaneously attributing substantial weight to peers at the extremes. As a result, students connected to peers with extreme outcome values are not assigned low ranks under these models; in the CES specification in particular, the most influential peers are necessarily located at the top or bottom of the peer-outcome distribution.

For the other outcomes, rank gaps are less pronounced because the pattern of peer effects is closer to a monotonic or uniform structure. In particular, for academic effort and future perception, the monotonicity assumption is clearly not rejected. For these outcomes, tests of the null hypothesis that peer effects are decreasing yield $p$-values above 0.90 (see Table~\ref{tab:appresult:test}). The hypothesis of a uniform coefficient is also not rejected, although the corresponding $p$-values are weaker. Consistent with these results, the rank predictions in Figure~\ref{fig:counterfact} are similar across the quantile, LIM, and CES specifications.

This counterfactual analysis reveals that key player status varies across the three models. In the standard LIM model, a high outcome for a student who is not fully isolated increases their influence \citep[see][]{ballester2006s}, because the student is nominated as a friend and therefore contributes substantially to the average friend outcome. Since this average is the only aggregated measure of the peer-outcome distribution that affects individuals, these peers are systematically influential. In contrast, in the quantile or CES models, a high outcome does not necessarily imply greater influence. Nevertheless, the CES model still imposes restrictions on whether these peers are influential or not. The analysis above shows that when peer effects are non-monotonic, the CES model does not fully overcome the limitations of the LIM model and may misidentify the most influential students and key players. Identifying such agents is crucial in social interactions, for example, to implement targeted policies or to assign individuals to groups, such as students in dormitories. The quantile model offers a more flexible approach for conducting such analyses.

\section{Conclusion}
This paper introduced a model that allows for flexible patterns of peer effects. In this model, how you affect your peer depends not only on your own outcome level but also on the outcomes of the other peers that they have. The construction of the peer effect in the model is flexible enough to allow us to discuss many interesting questions regarding how the composition of a peer group affects the peer effect: complementarity v. substitutability, non-monotonic marginal peer effects, etc. Existing peer effect models, such as the linear-in-means (LiM) and the CES model, tend to oversimplify the dynamics of peer influence by focusing on a single scalar parameter or assuming monotonically increasing or decreasing marginal peer effects. This oversimplification neglects the heterogeneity within peer effects, where individuals may experience varying degrees of influence depending on the structure and nature of their interactions, across different empirical contexts. This paper allows researchers to investigate richer patterns of peer interactions, with specific peers—such as the highest or lowest performers—having disproportionately larger effects on outcomes in certain contexts.

As an operational framework, we propose a quantile-based peer effect model to solve the dimensionality problem. There may exist alternative approaches with better finite sample properties, which warrants future research. For example, the model can benefit from having an estimator that has data-driven smoothness on the saturated peer effect coefficients. 

In addition, this paper has only considered settings with exogenous networks. As work on incorporating endogenous networks into peer effect models continues to develop, it would be interesting to see if similar approaches can be applied to our model. 

\singlespacing
\bibliographystyle{ecta}
\bibliography{mylit}
\newpage 
\clearpage 

\onehalfspacing
\appendix
\begin{center}
    {\LARGE APPENDIX}
\end{center}

\section{Proofs}

\subsection{Proof for Proposition \ref{prop:uniqueNE}}\label{app:uniqueNE}
We want to show that the game described by the utility function \eqref{eq:utility} has a unique Nash equilibrium $\mathbf y^{\ast} = (y_1^{\ast}, ~\dots,   y_n^{\ast})^{\prime}$ such that $y_i^{\ast} = BR_i(\mathbf{y}^{\ast}_{-i})$. To achieve this result, it is sufficient to show that the mapping $BR(\mathbf y) = (BR_1(\mathbf y_{-1}),~\dots,~BR_n(\mathbf y_{-n}))^{\prime}$ is a contraction. In other words, there exists some constant $\bar\beta < 1$ such that for any two strategy profiles $\mathbf y, \mathbf y^{\prime}\in\mathbb R^n$, $\lvert BR_i(\mathbf y_{-i}) - BR(\mathbf y^{\prime}_{-i})\rvert \leq \bar\beta \lVert \mathbf y - \mathbf y^{\prime}\rVert_{\infty}$ for every $i=1, \cdots, n$. Let $\bar{\beta} = \max_{d \leq \bar{d}} \sum_{k=1}^d | \beta_{k,d} |$. From Assumption \ref{ass:equilibrium}, $0 \leq \bar{\beta} < 1$.

Let $\pi$ be the ordering of $\mathbf{y}$, that is, a function that maps $i$ to the rank of $y_i$ in $\mathbf{y}$ under some tie-breaking rule. For example if $y_i = \max_{k =1,...n} y_k$ then $\pi(i) = n$. Note that $\pi$, together with the network $\mathbf A$, therefore fully determines the mapping from $\mathbf{y}$ to each $\tilde{y}_{i,k}$. We define $\pi'$ equivalently as the ordering of $\mathbf{y}'$. 

Consider first the case where $\pi = \pi'$. The inequality holds trivially: for each $i$, 
\begin{align*}
    | BR_i(\mathbf y_{-i}) - BR_i({\mathbf{y}_{-i}}') | &= \left| \sum_{k=1}^{d_i} \left( \tilde{y}_{i,k} - {\tilde{y}_{i,k}}' \right) \beta_{k,d_i}\right| \\
    &\leq \sum_{k=1}^{d_i} | \beta_{k, d_i} | \cdot \left\|\mathbf y - \mathbf y' \right\|_\infty \leq \bar{\beta} \left\|\mathbf y - \mathbf y' \right\|_\infty
\end{align*}
with $\bar{\beta} = \max_d \sum_{k=1}^d | \beta_{k,d} | < 1$. The first inequality holds since $\mathbf y$ and $\mathbf y'$ have the same order and therefore $\tilde{y}_{i,k}$ and ${\tilde{y}_{i,k}}'$ are the outcome of the same individual: $\big| \tilde{y}_{i,k} - {\tilde{y}_{i,k}}' \big|$ is bounded by $\|\mathbf y - \mathbf y' \|_\infty$. The second equality is from Assumption \ref{ass:equilibrium}. 

The non-trivial case is when $\pi \neq \pi'$. To proceed, fix $i$ and $k$ and let $j$ be the identity of person $i$'s $k$th lowest friend under $\pi$, and $j'$ be the identity under $\pi'$. Consider the object 
\begin{align*}
    \left| \left( \tilde{y}_{i,k} - {\tilde{y}_{i,k}}' \right) \beta_{k,d_i}\right|.
\end{align*}
If $j = j'$, this quantity is bounded by $|\beta_{k,d_i}| ||\mathbf{y} - \mathbf{y}'||_\infty$ by a similar argument as above. If not, we have 
\begin{align*}
    \left| \left( \tilde{y}_{i,k} - {\tilde{y}_{i,k}}' \right) \beta_{k,d_i}\right| \leq |\beta_{k,d_i} | | y_j - y_{j'}' | .
\end{align*}
Let $k'$ be the ranking of $j$ among $i$'s friends under $\pi'$. Then there are two possible cases, either $k' > k$ or $k' < k$. 

\textbf{Case 1:} $k' > k$. Then we have that 
\begin{align*}
    y'_j = \tilde{y}_{i,k'}' \geq \tilde{y}_{i,k}' = y'_{j'}
\end{align*}
meaning $y'_j \geq y'_{j'}$. Therefore 
\begin{align*}
    y_j - y'_{j'} &\geq y'_{j} - ||\mathbf{y} - \mathbf{y'}||_\infty    - y'_{j'}
   \geq - ||\mathbf{y} - \mathbf{y'}||_\infty \
\end{align*}
Where the first inequality uses $y'_{j} - y_j \leq || \mathbf{y} - \mathbf{y'}||_\infty $ and the second that $y'_j - y'_{j'} \geq 0 $.

To get the inequality for the other sign, note that, by the pigeonhole principle, there must be another friend of $i$, $p$, whose rank among $i$'s friends was above $k$ under $\pi$ and is lower than or equal to $k$ under $\pi'$. That is 
\begin{align*}
    y'_p \leq \tilde{y}'_{i,k} = y'_{j'} \quad \text{and} \quad y_p \geq \tilde{y}_{i,k} = y_j,
\end{align*}
and therefore: 
\begin{align*}
    y_j - y_{j'}' &\leq y_p - y'_{j'} \\ 
    &\leq y'_{p} + ||\mathbf{y} - \mathbf{y'} ||_\infty - y_{j'}' \\
    &\leq ||\mathbf{y} - \mathbf{y'} ||_\infty
\end{align*}
Combining this with the result above gives the desired inequality: $|y_j - y'_{j'} | \leq ||\mathbf{y} - \mathbf{y'}||_\infty $

\textbf{Case 2:} $k' < k$. As before, this implies 
\begin{align*}
    y_j' = \tilde{y}'_{i,k'} \leq \tilde{y}'_{i,k} = y'_{j'}
\end{align*}
and therefore 
\begin{align*}
    y_{j} - y'_{j'} \leq y'_j + ||\mathbf{y} - \mathbf{y'} ||_\infty - y'_{j'} \leq || \mathbf{y} - \mathbf{y'}||_\infty. 
\end{align*}
By the pigeonhole principle, we know that as $k' < <k$, there must be some other individual $p$ whose rank is higher than or equal to $k$ under $\pi'$ but lower than $k$ under $\pi$. That is 
\begin{align*}
    y'_{p} \geq \tilde{y}'_{i,k} = y'_{j'} \quad \text{ and } \quad y_p \leq \tilde{y}_{i,k} = y_j 
\end{align*}
and therefore 
\begin{align*}
    y_j - y'_{j'} \geq y_p - y'_{j'} \geq y_{p}' - ||\mathbf{y} - \mathbf{y}' ||_\infty - y'_{j'} \geq - ||\mathbf{y} - \mathbf{y'}||_\infty.
\end{align*}
Combining the two inequalities we have our desired result: $| y_j - y'_{j'} | \geq ||\mathbf{y} - \mathbf{y'}||_\infty$.

Finally, by aggregating across $k$ for the fixed $i$ and then aggregating across $i$, we get 
\begin{align*}
    \left| BR_i(\mathbf{y}_{-i}) - BR_i(\mathbf{y}_{-i}') \right| &\leq \left| \sum_{k=1}^{d_i} \left( \tilde{y}_{i,k} - {\tilde{y}_{i,k}}' \right) \beta_{k,d_i}\right| \leq \sum_{k=1}^{d_i} | \beta_{k, d_i} | \cdot \left\| \mathbf{y} - \mathbf{y}' \right\|_\infty \leq \bar{\beta} \left\| \mathbf{y} - \mathbf{y}' \right\|_\infty  \\
    \left\| BR(\mathbf{y}) - BR(\mathbf{y}') \right\|_\infty &\leq \bar{\beta} \left\| \mathbf{y} - \mathbf{y}' \right\|_\infty. 
\end{align*}
Meaning $BR(\mathbf{y})$ is a contraction mapping for any $\mathbf{y},\mathbf{y'}$. \qed

\subsection{Proof for Theorem \ref{theorem:id}}\label{app:theoremidproof} 
From Assumption \ref{ass:exogeneity}, we have
$$
\E[\mathbf{y}_g] = \E[\mathbf{W}_g]\theta.
$$
Premultiplying by $\E[\mathbf{W}_g]^\intercal$ and averaging over $g$ implies that
$$
\plim_{m \to \infty} \frac{1}{m} \sum_{g=1}^m \E[\mathbf{W}_g]^\intercal \E[\mathbf{W}_g]\theta = \plim_{m \to \infty} \frac{1}{m} \sum_{g=1}^m \E[\mathbf{W}_g]^\intercal \E[ \mathbf{y}_g ].
$$
Under Assumption \ref{ass:id}, the matrix $\plim_{m\to\infty}\frac{1}{m}\sum_{g=1}^m \E[\mathbf{W}_g]^\intercal \E[\mathbf{W}_g]$
is nonsingular. Thus, $\theta$ is uniquely determined, and identification follows. \qed

\subsection{Proof for Proposition \ref{prop:misspecification}}\label{app:missspecproof}

Using Assumption \ref{ass:miss}.\ref{ass:lim} and plugging in
\begin{align*}
\E \left[ y_{i,g} \right] &= \sum_{k=1}^{d_{i,g}} \beta_{k,d} \E \left[ \tilde{y}_{i,k,g} \right] + {\boldsymbol{x}_{i,g}}^\intercal \gamma + {\bar{\boldsymbol{x}}_{i,g}}^\intercal \delta,
\end{align*}
we get \begin{align*}
\beta^{\mathrm{LiM}} &:= \left( \frac{1}{m} \sum_{g=1}^m \left( \tilde{\mathbf{Z}}_g^{\mathrm{LiM}} \right)^\intercal G_g \E[\mathbf{y}_g] \right)^{-1} \frac{1}{m} \sum_{g=1}^m {\tilde{\mathbf{Z}}_g}^\intercal \E [\mathbf{y}_g] \\
&= \left( \frac{1}{m} \sum_{g=1}^m {\tilde{\mathbf{Z}}_g}^\intercal G_g \E[\mathbf{y}_g] \right)^{-1} \frac{1}{m} \sum_{g=1}^m {\tilde{\mathbf{Z}}_g}^\intercal \begin{pmatrix} \sum_{k=1}^{d_{1,g}} \beta_{k,d} \E \left[ \tilde{y}_{1,k,g} \right] \\ \vdots \\ \sum_{k=1}^{d_{n_g,g}} \beta_{k,d} \E \left[ \tilde{y}_{n_g,k,g} \right] \end{pmatrix} + o_p(1). 
\end{align*}
$o_p(1)$ term appears since \begin{align*}
    \frac{1}{m} \sum_{g=1}^m \left( \tilde{\mathbf{Z}}_g^{\mathrm{LiM}} \right)^\intercal \tilde{\mathbf{X}}_g^{\mathrm{LiM}} &= \frac{1}{m} \sum_{g=1}^m \left( {{\mathbf{Z}}_g}^\intercal - {\mathbf{P}^{\mathrm{LiM}}}^\intercal \left( \tilde{\mathbf{X}}_g^{\mathrm{LiM}}\right)^\intercal \right) \tilde{\mathbf{X}}_g^{\mathrm{LiM}} \\
    &= \frac{1}{m} \sum_{g=1}^m {{\mathbf{Z}}_g}^\intercal \tilde{\mathbf{X}}_g^{\mathrm{LiM}} - {\mathbf{P}^{\mathrm{LiM}}}^\intercal \cdot \frac{1}{m} \sum_{g=1}^m \left( \tilde{\mathbf{X}}_g^{\mathrm{LiM}}\right)^\intercal \tilde{\mathbf{X}}_g^{\mathrm{LiM}} = o_p(1). 
\end{align*}
Suppose that $\boldsymbol{z}_{i,g}$ and therefore its residual $\tilde{\boldsymbol{z}}_{i,g}^{\mathrm{LiM}}$ are scalar variables. Then, 
\begin{align*}
\beta^{\text{LiM}} &= \frac{\frac{1}{m}\sum_{g=1}^m \sum_{i=1}^{n_g} \tilde{\boldsymbol{z}}_{i,g}^{\mathrm{LiM}} \sum_{k=1}^{d_{i,g}}\E \left[ \tilde{y}_{i,k,g} \right] \beta_{k,d}}{\frac{1}{m}\sum_{g=1}^m \sum_{i=1}^{n_g} \tilde{\boldsymbol{z}}_{i,g}^{\mathrm{LiM}} \frac{1}{d_{i,g}} \sum_{k=1}^{d_{i,g}} \E \left[ \tilde{y}_{i,k,g} \right]} + o_p(1) \\
&= \frac{\sum_{d=1}^{\bar{d}} \sum_{k=1}^{d}\frac{1}{m}\sum_{g=1}^m \sum_{i=1}^{n_g} \tilde{\boldsymbol{z}}_{i,g}^{\mathrm{LiM}} \E \left[ \tilde{y}_{i,k,g}\mathbf{1}\{d_{i,g}=d\} \right] \beta_{k,d}}{\sum_{d=1}^{\bar{d}} \frac{1}{m}\sum_{g=1}^m \sum_{i=1}^{n_g} \frac{1}{d} \sum_{k=1}^{d} \tilde{\boldsymbol{z}}_{i,g}^{\mathrm{LiM}} \E \left[ \tilde{y}_{i,k,g} \mathbf{1}\{d_{i,g}=d\}\right]} + o_p(1). 
\end{align*}
By letting
\begin{align*}
w_{k,d}^{\text{LiM}} &= \frac{ \frac{1}{m}\sum_{g=1}^m \sum_{i=1}^{n_g} \frac{1}{d} \tilde{\boldsymbol{z}}_{i,g}^{\mathrm{LiM}} \E \left[ \tilde{y}_{i,k,g} \mathbf{1}\{d_{i,g}=d\} \right] }{\sum_{d=1}^{\bar{d}} \sum_{k=1}^{d} \frac{1}{m}\sum_{g=1}^m \sum_{i=1}^{n_g} \frac{1}{d} \tilde{\boldsymbol{z}}_{i,g}^{\mathrm{LiM}} \E \left[ \tilde{y}_{i,k,g} \mathbf{1}\{d_{i,g}=d\} \right]},
\end{align*}
$\beta^\text{LiM}$ becomes $$
\beta^{\text{LiM}} = \sum_{d=1}^{\bar{d}} \sum_{k=1}^d w_{k,d}^\text{LiM} d \beta_{k,d} + o_p(1). 
$$
The weights satisfy the sum-to-one constraint: $\sum_{d=1}^{\bar{d}} \sum_{k=1}^d w_{k,d}^\text{LiM} = 1$. 

Likewise, for the LiS peer effect parameter, we get
\begin{align*}
\beta^{\text{LiS}} &= \frac{\sum_{d=1}^{\bar{d}} \sum_{k=1}^{d} \frac{1}{m} \sum_{g=1}^m \sum_{i=1}^{n_g} \tilde{\boldsymbol{z}}_{i,g}^{\text{LiS}} \E \left[\tilde{y}_{i,k,g} \mathbf{1}\{d_{i,g}=d\} \right] \beta_{k,d} }{\sum_{d=1}^{\bar{d}} \frac{1}{m} \sum_{g=1}^m \sum_{i=1}^{n_g} \tilde{\boldsymbol{z}}_{i,g}^{\text{LiS}} \E \left[ \sum_{k=1}^{d} \tilde{y}_{i,k,g} \mathbf{1}\{d_{i,g}=d\} \right]} + o_p(1) \\
&= \sum_{d=1}^{\bar{d}} \sum_{k=1}^d w_{k,d}^{\text{LiS}} \beta_{k,d} + o_p(1),
\end{align*}
where
$$
w_{k,d}^{\text{LiS}} = \frac{\frac{1}{m}\sum_{g=1}^m \sum_{i=1}^{n_g} \tilde{\boldsymbol{z}}_{i,g}^{\text{LiS}} \E \left[ \tilde{y}_{i,k,g} \mathbf{1}\{d_{i,g}=d\}\right]}{\sum_{d=1}^{\bar{d}} \sum_{k=1}^{d}\frac{1}{m}\sum_{g=1}^m \sum_{i=1}^{n_g} \tilde{\boldsymbol{z}}_{i,g}^{\text{LiS}} \E \left[ \tilde{y}_{i,k,g} \mathbf{1}\{d_{i,g}=d\}\right]}
$$
and $\tilde{\boldsymbol{z}}_{i,g}^{\text{LiS}}$ is the residual from the linear-in-sum projection $\mathbf{P}^{\mathrm{LiS}}$. Again, the weights sum to one: $\sum_{d=1}^{\bar{d}} \sum_{k=1}^d w_{k,d}^{\text{LiS}} = 1$.

When $\boldsymbol{z}_{i,g}$ is not a scalar, we can modify the argument above and get similar expressions by replacing $\tilde{\boldsymbol{z}}_{i,g}$ with $c^\intercal \tilde{\boldsymbol{z}}_{i,g}$ with some weighting $c$. The weighting depends on how we weight over-identifying moments. For example, the TSLS principle uses the weighting vector $$
c^\intercal = \frac{1}{m} \sum_{g=1}^m \E \left[ \mathbf{y}_g^\intercal \right]\mathbf G_g^\intercal \tilde{\mathbf{Z}}_g \left( \frac{1}{m} \sum_{g=1}^m { \tilde{\mathbf{Z}}_g}^\intercal \tilde{\mathbf{Z}}_g \right)^{-1} ;
$$
misspecification weights depend on the weighting matrix when the LiM or LiS is overidentified.

\qed

\section{Supplementary Theoretical Results}\label{app:supp_theory}

\subsection{Spillover and Conformity in Preferences}\label{app:conformity}
We adopt the approach that was proposed by \cite{boucher2024toward}, which incorporates both spillover and conformity into the model, allowing their relative importance to be empirically identified from the data. The utility function is given by:
\begin{equation}\label{eq:utility:conf}
    U_i(y_i, ~ \mathbf y_{-i}) = \underbrace{\alpha_i y_i - \frac{1}{2}y_i^2}_{\text{private benefit}} + \underbrace{\sum_{k = 1}^{d_i}\beta_{k,d_i}^s y_i\tilde y_{i,k} - \frac{1}{2}\sum_{k = 1}^{d_i}\beta_{k,d_i}^c(y_i - \tilde y_{i,k})^2}_{\text{social benefit}},
\end{equation}

The social benefit reflects both spillover effects and conformity between an agent’s effort and the efforts of their peers.\footnote{When $i$ has no friends (i.e., no outgoing links or $d_i = 0$), the utility function excludes the social benefit term and is given by $U_i(y_i, \mathbf{y}_{-i}) = \alpha_i y_i - \frac{1}{2} y_i^2$. The corresponding optimal strategy is therefore $y_i = \alpha_i$. An individual without friends is referred to as \textit{isolated}.} The term $\beta_{k,d_i}^s y_i\tilde y_{i,k}$ implies strategic complementarity between agent $i$’s effort and the effort of the peer with the $k$-th smallest effort when $\beta_{k,d_i}^s > 0$. Conversely, if $\beta_{k,d_i}^s < 0$, then $\beta_{k,d_i}^s y_i\tilde y_{i,k}$ reflects strategic substitution (i.e., negative spillover) between efforts. 
The term $\beta_{k,d_i}^c(y_i - \tilde y_{i,k})^2$ captures conformist preferences. If $\beta_{k,d_i}^c$ is nonnegative, agent $i$ incurs a social cost when their effort deviates from the $k$-th smallest effort among their peers. The case where $\beta_{k,d_i}^c$ is negative (indicating anticonformity) is rare and is not considered in this paper, as it introduces complications in the analysis.\footnote{Specifically, when $\beta_{k,d_i}^c < 0$, the utility function may become convex and lack a maximizer unless the strategy space is compact}.

For ease of exposition, we introduce the following notation:
\begin{align*}
    &\lambda_{k,d_i}^s = \dfrac{\beta_{k,d_i}^s}{1 + \sum_{k = 1}^{d_i} \beta_{k,d_i}^c}, \quad \lambda_{k,d_i}^c = \dfrac{\beta_{k,d_i}^c}{1 + \sum_{k = 1}^{d_i} \beta_{k,d_i}^c}, \quad \text{and} \quad
    \lambda_{k,d_i} = \lambda_{k,d_i}^s + \lambda_{k, d_i}^c \text{ for all } k.\\
    &\lambda_{d_i}^s = \sum_{k = 1}^{d_i}\lambda_{k,d_i}^s\quad \text{and} \quad \lambda_{d_i}^c = \sum_{k = 1}^{d_i}\lambda_{k,d_i}^c.
\end{align*}
Agent $i$ chooses their effort $y_i$ by maximizing the utility function $U_i(y_i, \mathbf y_{-i})$. By solving the first-order condition of this maximization problem, I obtain the best response function:
\begin{equation} \label{BRF:conf}
    BR_i(\mathbf y_{-i}) = (1 - \lambda_{d_i}^c)\alpha_i + \sum_{k = 1}^{d_i} \lambda_{k,d_i} \tilde y_{i,k}.
\end{equation}

The parameter $\lambda_{k,d_i}$ measures the total peer effect at the $k$-th smallest peer effort. This total effect is decomposed as $\lambda_{k,d_i} = \lambda_{k,d_i}^s + \lambda_{k,d_i}^c$, where $\lambda_{k,d_i}^s$ captures the spillover effect and $\lambda_{k,d_i}^c$ captures the conformity component. The total spillover effect across all peer efforts is given by $\lambda_{d_i}^s = \sum_{k = 1}^{d_i} \lambda_{k, d_i}^s$, while the total conformity parameter is $\lambda_{d_i}^c = \sum_{k = 1}^{d_i} \lambda_{k, d_i}^c$. 

By defining $\tilde{\alpha}_i = (1 - \lambda_{d_i}^c)\alpha_i$, the best response function given by Equation~\eqref{BRF:conf} is similar to that in Equation~\eqref{BRF}, where $\tilde{\alpha}_i$ represents the new productivity level. Therefore, Proposition~\ref{prop:uniqueNE}, which establishes the existence and uniqueness of the Nash equilibrium, readily extends to the game featuring both spillover and conformity under a condition analogous to Assumption~\ref{ass:equilibrium}, namely that $\sum_{k = 1}^{d_i} \lvert \lambda_{k,d_i} \rvert < 1$.

\subsection{Identification under Spillover and Conformity}
In this section, we discuss the identification of the model in the case where it includes both spillover and conformity effects. As $\alpha_i = \boldsymbol{x}_i^{\intercal}\boldsymbol{\gamma} + \varepsilon_i$, the reduced-form specification of the model for a non-isolated individual (i.e., one who has friends) is:
\begin{equation*}
    y_i = \sum_{k = 1}^{d_i} \lambda_{k,d_i} \tilde y_{i,k} + (1 - \lambda^c_{d_i})\boldsymbol{x}_i^{\intercal}\boldsymbol{\gamma} + (1 - \lambda^c_{d_i})\varepsilon_i.
\end{equation*}
Let $\boldsymbol{x}_i^{iso}$ be the vector including the components of $\boldsymbol{x}_i$ that are defined for isolated agents (who has no friends). Specifically, $\boldsymbol{x}_i^{iso}$ excludes the contextual variables. Let  $\boldsymbol{\gamma}^{iso}$ denotes the coefficient associated with $\boldsymbol{x}_i^{iso}$ in $\boldsymbol{\gamma}$. For isolated agents, the outcome is simply given by:
\begin{equation*}
    y_i = (\boldsymbol{x}_i^{iso})^\intercal\boldsymbol{\gamma}^{iso} + \varepsilon_i.
\end{equation*}

As in \cite{boucher2024toward}, identifying the model's parameters requires the presence of isolated nodes. Given that our inference assumes many independent subnetworks, this implies that the share of subnetworks with isolated nodes is strictly positive asymptotically. Under the exogeneity condition in Assumption \ref{ass:exogeneity}, we have the moment condition $\E(\varepsilon_i) = 0$, conditional on whether $i$ is isolated or not, where $\E$ denotes the conditional expectation given the network and exogenous characteristics.

Let $n^{iso}$ be the number of isolated agents. The moment condition for isolated agents helps identify:
$$
\boldsymbol{\gamma} = \left(\plim \dfrac{1}{n}\sum_{i = 1}^n \mathbf{1}\{d_i = 0\}\boldsymbol{x}_i \boldsymbol{x}_i^{\intercal}\right)^{-1}
\left(\plim \dfrac{1}{n}\sum_{i = 1}^n \mathbf{1}\{d_i = 0\}\boldsymbol{x}_i y_i\right).
$$

Given the identification of $\boldsymbol{\gamma}$, we can also identify $\lambda^c_{d_i}$ and $\lambda_{k,d_i}$ for all $k$ and $d_i$ using the moment conditions for non-isolated agents. Indeed, this identification problem is similar to the one studied in the main text. The only difference is that it focuses on non-isolated agents.

Overall, we can then identify $\lambda_{k,d_i}$, which represent the total peer effect at each rank, and the total conformity effect $\lambda^c_{d_i}$. The total spillover effect is therefore given by $\lambda^s_{d_i} = \sum_{k = 1}^{d_i} \lambda_{k,d_i} - \lambda^c_{d_i}$. However, we cannot disentangle spillover and conformity effects at each rank separately; we can only identify their total effects across all ranks.

\subsection{Consistency and Asymptotic Normality}\label{append:inference}

Let $\mathbf{Z}_g$ be the matrix of instruments in the $g$th subnetwork. As discussed in Section \ref{sec:ident}, $\mathbf{Z}_g$ includes the quantiles of $\boldsymbol{x}_{j,g}$ and ${\bar{\boldsymbol{x}}_{j,g}}$ among friends, which we use to instrument $\tilde{\boldsymbol{y}}_{i,g}$. As usual, $\mathbf{Z}_g$ also includes exogenous regressors $\boldsymbol{x}_{j,g}$ and ${\bar{\boldsymbol{x}}_{j,g}}$. To establish consistency and asymptotic normality, we impose the following regularity conditions.
\begin{assumption}\label{ass:asymptotics} \quad
\begin{enumerate}[label=\textbf{\alph*.}]
\item The matrix $\displaystyle \plim_{m \to \infty} \frac{1}{m} \sum_{g=1}^m {\mathbf{W}_g}^\intercal \mathbf{Z}_g$ and $\displaystyle \plim_{m \to \infty} \frac{1}{m} \sum_{g=1}^m {\mathbf{Z}_g}^\intercal \mathbf{Z}_g$ are of column full rank.  
\item $\displaystyle \plim_{m \to \infty} \frac{1}{m} \sum_{g=1}^m {\mathbf{Z}_g}^\intercal \boldsymbol{\varepsilon}_g = 0$. 
\item $\displaystyle \frac{1}{\sqrt{m}} \sum_{g=1}^m {\mathbf{Z}_g}^\intercal \boldsymbol{\varepsilon}_g \xrightarrow{d} \mathcal{N} \left( 0, \boldsymbol{\Omega}_{\varepsilon} \right)$. 
\end{enumerate}
\end{assumption}

We establish the following result.

\begin{proposition}\label{prop:TSLS}
Under Assumption \ref{ass:asymptotics}, the TSLS estimator $\hat{\theta}^{\mathrm{TSLS}}$ using $\{\mathbf{Z}_g\}_{g=1}^m$ as instruments converges in probability to $\theta$ and \[
\sqrt{m} \left( \hat{\theta}^{\mathrm{TSLS}} - \theta \right) \xrightarrow{d} \mathcal{N} \left( 0, \boldsymbol{\Omega} \right)
\] 
with some consistently estimable covariance matrix $\boldsymbol{\Omega}$, as $m \to \infty$. 
\end{proposition}
\begin{proof}
Define the sample moments
$$
\hat{\mathbf{Q}}_{WZ} = \frac{1}{m}\sum_{g=1}^m \mathbf{W}_g^\intercal \mathbf{Z}_g,
\quad
\hat{\mathbf{Q}}_{ZZ} = \frac{1}{m}\sum_{g=1}^m \mathbf{Z}_g^\intercal \mathbf{Z}_g,
\quad
\hat{\mathbf{Q}}_{ZY} = \frac{1}{m}\sum_{g=1}^m \mathbf{Z}_g^\intercal \boldsymbol{y}_g.
$$
The TSLS estimator is
$$
\hat{\theta}^{\mathrm{TSLS}}
=
\left( \hat{\mathbf{Q}}_{WZ} \hat{\mathbf{Q}}_{ZZ}^{-1} \hat{\mathbf{Q}}_{ZW} \right)^{-1}
\left( \hat{\mathbf{Q}}_{WZ} \hat{\mathbf{Q}}_{ZZ}^{-1} \hat{\mathbf{Q}}_{ZY} \right),
$$
where $\hat{\mathbf{Q}}_{ZW} = \hat{\mathbf{Q}}_{WZ}^\intercal$. Since $\boldsymbol{y}_g = \mathbf{W}_g \theta + \boldsymbol{\varepsilon}_g$, we have
$$
\hat{\theta}^{\mathrm{TSLS}} - \theta
=
\left( \hat{\mathbf{Q}}_{WZ} \hat{\mathbf{Q}}_{ZZ}^{-1} \hat{\mathbf{Q}}_{ZW} \right)^{-1}
 \hat{\mathbf{Q}}_{WZ} \hat{\mathbf{Q}}_{ZZ}^{-1} 
\left( \frac{1}{m}\sum_{g=1}^m \mathbf{Z}_g^\intercal \boldsymbol{\varepsilon}_g \right).
$$
Under Part a of Assumption \ref{ass:asymptotics}, $\plim \hat{\mathbf{Q}}_{WZ} = \mathbf{Q}_{WZ}$,~ and ~ $\plim \hat{\mathbf{Q}}_{ZZ} =\mathbf{Q}_{ZZ}$ with $\mathbf{Q}_{ZZ}$ not singular. Consequently, $\plim \hat{\mathbf{Q}}_{WZ} \hat{\mathbf{Q}}_{ZZ}^{-1} \hat{\mathbf{Q}}_{ZW} = \mathbf{Q}_{WZ} \mathbf{Q}_{ZZ}^{-1} \mathbf{Q}_{ZW}$. By Part b of Assumption \ref{ass:asymptotics}, $\plim \frac{1}{m}\sum_{g=1}^m \mathbf{Z}_g^\intercal \boldsymbol{\varepsilon}_g = 0$. Hence $\hat{\theta}^{\mathrm{TSLS}}$ converges in probability to $\theta$, as $m\to \infty$.

Moreover,
$$
\sqrt{m}(\hat{\theta}^{\mathrm{TSLS}} - \theta)
=
\left( \hat{\mathbf{Q}}_{WZ} \hat{\mathbf{Q}}_{ZZ}^{-1} \hat{\mathbf{Q}}_{ZW} \right)^{-1}
\hat{\mathbf{Q}}_{WZ} \hat{\mathbf{Q}}_{ZZ}^{-1}
\left( \frac{1}{\sqrt{m}}\sum_{g=1}^m \mathbf{Z}_g^\intercal \boldsymbol{\varepsilon}_g \right).
$$
Using Assumption \ref{ass:asymptotics}(a)--(c),
$\sqrt{m}(\hat{\theta}^{\mathrm{TSLS}} - \theta)
\xrightarrow{d} \mathcal{N}(0,\boldsymbol{\Omega})
$, where 
$$
\boldsymbol{\Omega}
=
\left( \mathbf{Q}_{WZ}\mathbf{Q}_{ZZ}^{-1}\mathbf{Q}_{ZW} \right)^{-1}
\mathbf{Q}_{WZ}\mathbf{Q}_{ZZ}^{-1}
\boldsymbol{\Omega}_{\varepsilon}
\mathbf{Q}_{ZZ}^{-1}\mathbf{Q}_{ZW}
\left( \mathbf{Q}_{WZ}\mathbf{Q}_{ZZ}^{-1}\mathbf{Q}_{ZW} \right)^{-1}.
$$\end{proof}

\subsection{Reduced-form Representation of the Model}\label{app:reduced_form}

To illustrate why it is not possible to find lower-level conditions for Part~\ref{ass:id:nonsingular} of Assumption~\ref{ass:id}, such as those found for the LiM model, as well as motivate our choice of instruments, we will next derive the reduced form of our model. To begin, note that  Proposition \ref{prop:uniqueNE} guarantees there is a unique equilibrium $\mathbf{y}_g$ given a realization of $\boldsymbol{\varepsilon}_g$. Let $\pi_g$ denote an ordering on individuals in network $g$, $\{1,\ldots, n_g\}$, in terms of their outcomes. For example, $\pi_g(1)=3$ means that individual $1$'s outcome $y_{1,g}$ is the third smallest among $\mathbf{y}_g$. Let $\mathbb{B}_g(\pi_g)$ denote a $n_g \times n_g$ peer effect coefficient matrix such that its $i$-th row $j$-th column component corresponds to the peer effect coefficient for individual $j$'s outcome on individual $i$'s outcome in network $g$, given the ordering $\pi_g$. Then, Proposition \ref{prop:uniqueNE} implies that there is a unique $\pi$ such that 
$$
\mathbf{y}_g = \mathbb{B}_g(\pi_g) \mathbf{y}_g + \mathbf{X}_g \gamma + \mathbf{\bar{X}}_g\delta  + \boldsymbol{\epsilon}_g
$$
where $\boldsymbol{\epsilon}_g = \begin{pmatrix} \epsilon_{1,g} & \cdots & \epsilon_{n_g,g} \end{pmatrix}^\intercal$, and therefore 
$$
\mathbf{y}_g = \left( I_{n_g} - \mathbb{B}_g(\pi_g) \right)^{-1} \left( \mathbf{X}_g \gamma + \bar{\mathbf{X}}_g \delta + \boldsymbol{\epsilon}_g \right). 
$$
By taking expectation over the set of every possible ordering, $\Pi$, we obtain a reduced-form representation of $\E [\mathbf{y}_g]$. The following corollary gives the result of this procedure.

\begin{corollary}\label{cor:reduced_form}
Let Assumptions \ref{ass:equilibrium} and \ref{ass:exogeneity} hold. Then, the peer effects model \eqref{eq:peer_effect_model} admits a reduced-form representation with residual terms $\eta_{i}$: $$
\E \left[ y_{i}  \right] = \sum_{j=1}^n \theta_{i,j} ({\boldsymbol{x}_j}^\intercal \gamma + \boldsymbol{\bar{x}}_j^\intercal\delta ) + \eta_{i}
$$
where \begin{align*}
\theta_{i,j} &= \sum_{\pi \in \Pi} \theta_{i,j} (\pi) \textup{Pr}_n \left\lbrace \pi \right\rbrace \\ 
\eta_{i} &= \sum_{\pi \in \Pi} \sum_{j=1}^n \theta_{i,j} (\pi) \E \left[ \varepsilon_j | \pi \right] \cdot \textup{Pr}_n \left\lbrace \pi \right\rbrace
\end{align*}
and $\Big( \theta_{i,1}(\pi), \ldots, \theta_{i,n}(\pi) \Big)$ is the $i$-th row of the $n \times n$ matrix $\left( I_n - \mathbb{B}(\pi) \right)^{-1}$. Moreover, there also exists a reduced-form relationship between the ordered peer outcome $\tilde{y}_{i,k}$ and $\{\boldsymbol{x}_j\}_{j=1}^n$, with residual terms $\{\tilde{\eta}_{j,k}\}_{j=1}^n$: \begin{align*}
    \E \left[ \tilde{y}_{i,k}  \right] &= \sum_{j=1}^n \tilde{\theta}_{i,k,j}  ({\boldsymbol{x}_j}^\intercal \gamma + \boldsymbol{\bar{x}}_j^\intercal\delta )+ \tilde{\eta}_{i,k}
\end{align*}
where \begin{align*}
\tilde{\theta}_{i,k,j} &= \sum_{\pi \in \Pi} \tilde{\theta}_{i,k,j} (\pi) \textup{Pr}_n \left\lbrace \pi \right\rbrace \\
\tilde{\eta}_{i,k} &= \sum_{\pi \in \Pi} \sum_{j=1}^n \tilde{\theta}_{i,k,j} (\pi) \E \left[ \varepsilon_j | \pi \right] \cdot \textup{Pr}_n \left\lbrace \pi \right\rbrace. 
\end{align*} 
and $\Big( \tilde{\theta}_{i,k,1}(\pi), \ldots, \tilde{\theta}_{i,k,n}(\pi) \Big)$ is one row of the $n \times n$ matrix $\left( I_n - \mathbb{B}(\pi) \right)^{-1}$, which corresponds to the $k$-th lowest performing peer of individual $i$. 
\end{corollary}

\begin{proof}
    
Let $\pi$ denote an ordering on $\{1,\dots,n\}$ with some tiebreaking rule, where the ordering comes from $\mathbf{y}$: $$
\pi: \{1,\ldots, n\} \mapsto \{1,\ldots,n\}
$$ 
and $\pi$ satisfies $y_{\pi(1)} \leq \ldots \leq y_{\pi(n)}$. Note that $\pi$ is a function of $\{ \varepsilon_i\}_{i=1}^n$; from Proposition \ref{prop:uniqueNE}, we have shown that there exists a unique equilibrium and thus a unique ordering for each realization $\{\epsilon_i \}_{i=1}^n$. To simply notation, let $\boldsymbol{x}_i $ denote both the covariates and the contextual effects $\boldsymbol{\bar{x}}_i$, and equivalently redefine $\gamma$ to contain the contextual effect parameters $\delta$. Let $\Pi$ denote the set of all possible orderings on $\{y_1, \ldots, y_n\}$. $\Pi$ is a function of $\{\boldsymbol{x}_i, a_{ij}\}_{i,j}$ and the distribution of $\{\varepsilon_{i}\}_{i}$. 

For any realization of $\{\epsilon_i\}_{i=1}^n$ and the corresponding ordering $\pi$, there is a reduced-form linear relationship between $\{y_i\}_{i=1}^n$ and $\{\boldsymbol{x}_i\}_{i=1}^n$: construct a $n \times n$ matrix $\mathbb{B}(\pi)$ such that 

\begin{align*}
\mathbb{B}(\pi) &= \Big( \beta_{ij} (\pi) \Big)_{i,j} \\
\beta_{ij}(\pi) &= \begin{cases}
    0 & \text{if } a_{ij} = 0 \\
    \beta_{k, d_i} \text{ for some } k \text{ s.t. } \sum_{j'=1}^{\pi^{-1}(j)} a_{i \pi(j')} = k & \text{if } a_{ij} = 1
\end{cases}
\end{align*}
$\mathbb{B}(\pi)$ takes the ordering $\pi$ as fixed and finds the corresponding rank-dependent coefficient $\beta_{k,d}$ for each of the peer outcomes. To show that $(I_n - \mathbb{B}(\pi))$ is invertible for any $\pi$: suppose that there is some nonzero $\boldsymbol{x} \in \mathbb{R}^n$ such that $\boldsymbol{x} = \mathbb{B}(\pi) \boldsymbol{x}$. Then, for some $i$ such that $|x_i| = \| \boldsymbol{x}\|_\infty$, $$
| x_i | = \left| \sum_{j=1}^n \beta_{ij} (\pi) x_j \right| \leq \sum_{j=1}^j | \beta_{ij} (\pi) | \cdot | x_j | \leq \bar{\beta} | x_i | < | x_i |,
$$
leading to a contradiction; $I - \mathbb{B}(\pi)$ is full rank. Then, 
\begin{align*}
\mathbf{Y} &= \mathbb{B}(\pi) \mathbf{Y} + \mathbf{X} \gamma + \mathbb{E} \\
\mathbf{Y} &= \left( \mathbf I - \mathbb{B}(\pi) \right)^{-1} \mathbf{X} \gamma + \left( I_n - \mathbb{B}(\pi) \right)^{-1} \mathbb{E}. 
\end{align*}
We have a reduced-form linear relationship between $\{y_i\}_{i=1}^n$ and $\{\boldsymbol{x}_i\}_{i=1}^n$. 

Since the reduced-form relationship holds for every realization of $\{\epsilon_i\}_{i=1}^n$ such that the ordering $\pi$ stays the same, we can consider a conditional expectation of the linear relationship given the event that $\{ \varepsilon_i\}_{i=1}^n$ induces the ordering $\pi$. Then 
\begin{align*}
    \E \left[ y_{i} \right] &= \sum_{\pi \in \Pi} \sum_{j=1}^n \theta_{ij} (\pi) \big( {\boldsymbol{x}_j}^\intercal \gamma + \E \left[ \varepsilon_j | \pi \right] \big) \cdot \textup{Pr}_n \left\lbrace \pi \right\rbrace \\
    &= \sum_{j=1}^n \theta_{ij} {\boldsymbol{x}_j}^\intercal \gamma + \eta_{i}
\end{align*}
where $\Big( \theta_{i1}(\pi), \ldots, \theta_{in}(\pi) \Big)$ is the $i$-th row of the $n \times n$ matrix $\left( I - \mathbb{B}(\pi) \right)^{-1}$ and \begin{align*}
\theta_{ij} &= \sum_{\pi \in \Pi} \theta_{ij} (\pi) \textup{Pr}_n \left\lbrace \pi \right\rbrace \\ 
\eta_{i} &= \sum_{\pi \in \Pi} \sum_{j=1}^n \theta_{ij} (\pi) \E \left[ \varepsilon_j | \pi \right] \cdot \textup{Pr}_n \left\lbrace \pi \right\rbrace. 
\end{align*}
Likewise, by taking a different student while summing over the ordering $\pi$, we get \begin{align*}
    \E \left[ \tilde{y}_{i,k} \right] &= \sum_{\pi \in \Pi} \sum_{j=1}^n \tilde{\theta}_{i,k,j} (\pi) \big( {\boldsymbol{x}_j}^\intercal \gamma + \E \left[ \varepsilon_j | \pi \right] \big) \cdot \textup{Pr}_n \left\lbrace \pi \right\rbrace \\
    &= \sum_{j=1}^n \tilde{\theta}_{i,k,j} {\boldsymbol{x}_j}^\intercal \gamma + \tilde{\eta}_{i,k}
\end{align*}
where $\Big( \tilde{\theta}_{i,k,1}(\pi), \ldots, \tilde{\theta}_{i,k,n}(\pi) \Big)$ is one row of the $n \times n$ matrix $\left( I_n - \mathbb{B}(\pi) \right)^{-1}$, which corresponds to the $k$-th lowest performing peer of student $i$, and \begin{align*}
\tilde{\theta}_{i,k,j} &= \sum_{\pi \in \Pi} \tilde{\theta}_{i,k,j} (\pi) \textup{Pr}_n \left\lbrace \pi \right\rbrace \\
\tilde{\eta}_{i,k} &= \sum_{\pi \in \Pi} \sum_{j=1}^n \tilde{\theta}_{i,k,j} (\pi) \E \left[ \varepsilon_j | \pi \right] \cdot \textup{Pr}_n \left\lbrace \pi \right\rbrace. 
\end{align*}
\end{proof}

\subsection{Encompassing Test on Empirical Specifications}\label{app:test}

The quantile specification discussed in Subsection \ref{sec:implementation:restriction} has the merits of parsimony and tractability. However, an important prerequisite for using such a restricted model specification is determining whether the model restriction sufficiently captures the heterogeneous pattern in the construction of peer effects, since misspecifying the model restriction can lead to incorrect counterfactual analyses and misguided policy recommendations. To check this empirically, we make use of the encompassing test proposed by \citet{smith1992non}, which compares two possibly misspecified model restrictions. Thus, the encompassing test provides us with data-driven guidance on the choice of the model specification.

For notational brevity, let $h$ denote a true model, which potentially follows a tractable specification from Subsection \ref{sec:implementation:restriction} \begin{equation}
    y_{i,g} = \sum_{k=1}^M \beta^k \tilde{y}_{i,g}^k + {\boldsymbol{x}_{i,g}}^\intercal \gamma + {\bar{\boldsymbol{x}}_{i,g}}^\intercal \delta + \varepsilon_i. \notag
\end{equation}
By stacking at the network level, we can rewrite the model as \begin{equation}\label{eq:y:T}
    \mathbf{y}_g = \mathbf{W}_g \theta + \boldsymbol{\varepsilon}_g
\end{equation}
where $\mathbf{W}_g$ is the subnetwork-level matrix stacking of $\boldsymbol{w}_{i,g}$ and $\theta$ is the model parameters $\big( \beta^0, \dots, \beta^M, \gamma, \delta \big)^\intercal$. 

Suppose that we are given two working model restrictions $h_1$ and $h_2$ and are given two sets of instruments specific to model restrictions $\{\mathbf Z_g^1\}_{g=1}^m$ and $\{\mathbf{Z}_g^2 \}_{g=1}^m$, while allowing (true) $h$ to potentially differ from both $h_1$ and $h_2$. The encompassing test is constructed for a pair of model restrictions. In practice, we consider a sequence of model restrictions with growing complexity. For example, in the quantile peer effect model, we gradually increase the number of quantile levels to include in the model restriction whenever a given specification is rejected in favor of another specification that includes more quantile levels. 

The TSLS estimators from the respective model restrictions are $$
\hat{\theta}^s = \big(\mathbf H^s \big)^{-1} \cdot \frac{1}{m} \sum_{g=1}^m {\mathbf{W}_g^s}^{\intercal} \mathbf{Z}_g^s \left(\frac{1}{m} \sum_{g=1}^m{\mathbf Z_g^s}^{\intercal} \mathbf{Z}_g^s \right)^{-1} \frac{1}{m} \sum_{g=1}^m{\mathbf{Z}_g^s}^{\intercal} \mathbf y_g
$$
where $\mathbf H^s = \frac{1}{m} \sum_{g=1}^m {\mathbf{W}_g^s}^{\intercal} \mathbf{Z}_g^s \big( \frac{1}{m} \sum_{g=1}^m {\mathbf Z_g^s}^{\intercal} \mathbf{Z}_g^s \big)^{-1} \frac{1}{m} \sum_{g=1}^m {\mathbf{Z}_g^s}^{\intercal} \mathbf W_g^s$ for $s=1,2$. To discuss the behaviors of the two TSLS estimators under misspecification, we assume regularity on the instruments.
\begin{assumption} \label{ass:encompas} For any $s\in\{1, \, 2\}$, 
\begin{enumerate}[label=\textbf{\alph*.}]
\item $\plim \frac{1}{m} \sum_{g=1}^m {\mathbf Z_g^s}^{\intercal} \mathbf{Z}_g^s$ is a nonstochastic full-rank matrix 
\item $\plim \frac{1}{m} \sum_{g=1}^m {\mathbf Z_g^s}^{\intercal} \mathbf{W}_g^s$ is a nonstochastic full-column rank matrix.
\item $\plim \frac{1}{m} \sum_{g=1}^m {\mathbf Z_g^s}^{\intercal} \boldsymbol{\varepsilon}_g = \boldsymbol{0}$.
\end{enumerate}
\end{assumption}
\noindent Then, we can define pseudo-true coefficient parameters $$
\theta^s = \plim_{m \to \infty} \big(\mathbf H^s \big)^{-1} \cdot \frac{1}{m} \sum_{g=1}^m {\mathbf{W}_g^s}^{\intercal} \mathbf{Z}_g^s \left( \frac{1}{m} \sum_{g=1}^m {\mathbf Z_g^s}^{\intercal} \mathbf{Z}_g^s \right)^{-1} \frac{1}{m} \sum_{g=1}^m {\mathbf{Z}_g^s}^{\intercal} \mathbf{W}_g \theta \quad \forall s = 1,2.
$$
Importantly, whenever $\mathbf{W}_g \theta$ lies in the column space of $\mathbf{W}_g^s$ for every $g$, $\mathbf{W}_g^s \theta^s = \mathbf{W}_g \theta$; even though the working model $h_s$ differs from the true model $h$, the misspecified model still retrieves the systemic variation of the true model $h$. Using $\theta^1$ and $\theta^2$, we rewrite Equation \eqref{eq:y:T} as:
\begin{equation}\label{eq:y:Ts}
    \mathbf{y}_g = \mathbf{W}_g^s \theta^s + \boldsymbol{\varepsilon}_g^s \quad \forall s = 1,2
\end{equation}
where $\boldsymbol{\varepsilon}_g^s = \boldsymbol{\varepsilon}_g + \mathbf{W}_g \theta - \mathbf{W}_g^s \theta^s$. 

The encompassing test \citep{smith1992non} assesses whether the first model specified with $h_1$ and therefore $\theta^1$ replicates the features of the second model specified with $h_2$. Let $$
\psi = \plim_{m \to \infty} \big(\mathbf H^2 \big)^{-1} \cdot \frac{1}{m} \sum_{g=1}^m {\mathbf{W}_g^2}^{\intercal} \mathbf{Z}_g^2 \left( \frac{1}{m} \sum_{g=1}^m {\mathbf Z_g^2}^{\intercal}\mathbf{Z}_g^2 \right)^{-1} \frac{1}{m} \sum_{g=1}^m {\mathbf{Z}_g^2}^{\intercal} {\boldsymbol{\varepsilon}}_g^1.
$$
$\psi$ is the TSLS estimand of regressing $\{\boldsymbol{\varepsilon}_g^1\}_{g=1}^m$, the residual vectors under the first model restriction, on the second model restriction regressors $\{\mathbf{W}_g^2\}_{g=1}^m$ using the second model restriction instruments $\{\mathbf{Z}_g^2\}_{g=1}^m$. $\psi \neq 0$ suggests that features of the observed data relevant for the specification with $h_2$ are not completely captured by the model with $h_1$, leading to instrumented variation of $\mathbf{W}_g^2$ being correlated with $\boldsymbol{\varepsilon}_g^1$. In that case, the first model restriction must be rejected in favor of the second.

The test statistic of the encompassing test is $$
\hat{\psi} = \big( {\mathbf H^2} \big)^{-1} \cdot \frac{1}{m} \sum_{g=1}^m {\mathbf{W}_g^2}^{\intercal} \mathbf{Z}_g^2 \left( \frac{1}{m} \sum_{g=1}^m {\mathbf Z_g^2}^{\intercal}\mathbf{Z}_g^2 \right)^{-1} \frac{1}{m} \sum_{g=1}^m {\mathbf{Z}_g^2}^{\intercal} \hat{\boldsymbol{\varepsilon}}_g^1
$$
where $\hat{\boldsymbol\varepsilon}_g^1 = \mathbf{y}_g - \mathbf{W}_g^1 \hat{\theta}^1$. Proposition \ref{prop:encompassing} is the formal asymptotic normality result for the test statistic. 
\begin{proposition}\label{prop:encompassing}
 Under Assumptions~\ref{ass:exogeneity} and~\ref{ass:encompas}, and the regularity conditions of Assumption~\ref{ass:encompas:append} (see Appendix \ref{app:encompassing}):
\begin{enumerate}[label=\textbf{\alph*.}]
    \item $\hat{\psi}$ is a consistent estimator of $\psi$.
    \item $\sqrt{m}(\hat{\psi} - \psi)$ converges weakly to $N(\mathbf{0}, \boldsymbol\Omega_{\psi})$. See Appendix~\ref{app:encompassing} for details on the asymptotic variance $\boldsymbol\Omega_{\psi}$.
\end{enumerate}
\end{proposition}
\begin{proof}
    See Appendix \ref{app:encompassing}.
\end{proof}
Since the $\hat{\boldsymbol{\varepsilon}}_g^1$ is obtained from an initial regression, the asymptotic variance of $\sqrt{m}(\hat{\psi} - \psi)$  depends on the sampling variability from this initial regression as well; a naive IV standard error obtained from treating $\hat{\boldsymbol{\varepsilon}}_g^1$ as $\boldsymbol{\varepsilon}_g^1$ would be invalid. Another important complication in the asymptotic theory is that both $h_1$ and $h_2$ may differ from $h$, neither being correctly specified. We therefore use an approach similar to those employed under misspecified moment functions to derive the asymptotic distribution  \citep{hall2003large}.

A merit of the encompassing test is that it remains consistent even when both models are misspecified, under some conditions. Note that the test statistic is asymptotically centered around $\psi$. Thus, the test is consistent as long as $\psi \neq 0$ under an alternative. Since \begin{align*}
    \psi &=({\mathbf H^2})^{-1} \cdot \frac{1}{m} \sum_{g=1}^m {\mathbf{W}_g^2}^{\intercal} \mathbf{Z}_g^2 \left( \frac{1}{m} \sum_{g=1}^m {\mathbf Z_g^2}^{\intercal}\mathbf{Z}_g^2 \right)^{-1} \frac{1}{m} \sum_{g=1}^m {\mathbf{Z}_g^2}^{\intercal} \boldsymbol{\varepsilon}_g^1 \\
    &= ({\mathbf H^2})^{-1} \cdot \frac{1}{m} \sum_{g=1}^m {\mathbf{W}_g^2}^{\intercal} \mathbf{Z}_g^2 \left( \frac{1}{m} \sum_{g=1}^m {\mathbf Z_g^2}^{\intercal}\mathbf{Z}_g^2 \right)^{-1} \frac{1}{m} \sum_{g=1}^m {\mathbf{Z}_g^2}^{\intercal} \left( \mathbf{W}_g \theta - \mathbf{W}_g^1 \theta^1 \right), 
\end{align*}
$\psi$ equals to a zero vector if and only if the structural part of the true model, $\mathbf{W}_g \theta$, and its approximation under the first model restriction, $\mathbf{W}_g^1 \theta^1$, are equivalent from the persepctive of $$(\mathbf{H}^2)^{-1} \cdot \frac{1}{m} \sum_{g=1}^m {\mathbf{W}_g^2}^\intercal \mathbf{Z}_g^2 \left( \frac{1}{m} \sum_{g=1}^m{\mathbf{Z}_g^2}^\intercal \mathbf{Z}_g^2 \right)^{-1} {\mathbf{Z}_g^2}^\intercal.
$$
Thus, even though the test seemingly tests the null hypothesis of the first model restriction against the alternative hypothesis of the second model restriction, the test still has power against an alternative hypothesis which is not the second model restriction, as long as $(\mathbf{H}^2)^{-1} \cdot \frac{1}{m} \sum_{g=1}^m {\mathbf{W}_g^2}^\intercal \mathbf{Z}_g^2 \left( \frac{1}{m} \sum_{g=1}^m {\mathbf{Z}_g^2}^\intercal \mathbf{Z}_g^2 \right)^{-1} {\mathbf{Z}_g^2}^\intercal$ preserves the nonzero directions of $$
\{\mathbf{W}_g \theta - \mathbf{W}_g^1 \theta^1 \}_{g=1}^m.
$$

\subsection{Proof for Proposition \ref{prop:encompassing}}\label{app:encompassing}

We start by showing that $\hat{\psi}$ is consistent for $\psi$:
\begin{align*}
\plim_{m \to \infty} \hat{\psi} - \psi &= \plim_{m \to \infty} \big(\mathbf{H}^2 \big)^{-1} \frac{1}{m} \sum_{g=1}^m {\mathbf{W}_g^2}^{\intercal} \mathbf{Z}_g^2 \left( \frac{1}{m} \sum_{g=1}^m {\mathbf{Z}_g^2}^{\intercal} \mathbf{Z}_g^2 \right)^{-1} \frac{1}{m} \sum_{g=1}^m {\mathbf{Z}_g^2}^{\intercal} \left( \hat{\boldsymbol{\varepsilon}}_g^1 - \boldsymbol{\varepsilon}_g^1 \right) \\
&= \plim_{m \to \infty} \big(\mathbf{H}^2 \big)^{-1} \frac{1}{m} \sum_{g=1}^m {\mathbf{W}_g^2}^{\intercal} \mathbf{Z}_g^2 \left( \frac{1}{m} \sum_{g=1}^m {\mathbf{Z}_g^2}^{\intercal} \mathbf{Z}_g^2 \right)^{-1} \frac{1}{m} \sum_{g=1}^m {\mathbf{Z}_g^2}^{\intercal} \left( \mathbf y_g - \mathbf W_g^1 \hat{\theta}^1 - \boldsymbol{\varepsilon}_g^1 \right) \\
&= \plim_{m \to \infty} \big(\mathbf{H}^2 \big)^{-1} \frac{1}{m} \sum_{g=1}^m {\mathbf{W}_g^2}^{\intercal} \mathbf{Z}_g^2 \left( \frac{1}{m} \sum_{g=1}^m {\mathbf{Z}_g^2}^{\intercal} \mathbf{Z}_g^2 \right)^{-1} \frac{1}{m} \sum_{g=1}^m {\mathbf{Z}_g^2}^{\intercal}  \mathbf W_g^1 \left( \theta^1 - \hat{\theta}^1 \right) = 0. 
\end{align*}
The second and third equalities use $\mathbf{y}_g = {\mathbf{W}_g^1} \theta^1 + \boldsymbol{\varepsilon}_g^1$. The last equality holds from Assumption \ref{ass:encompas}. This completes the proof of the first statement of the proposition.

To show that $\sqrt{m}(\hat{\psi} -  \psi)$ is asymptotically normally distributed, we define $$
\tilde{\boldsymbol\varepsilon}_g : = \boldsymbol\varepsilon_g^1 - \mathbf W_g^2 \psi,
$$
where $\tilde{\boldsymbol\varepsilon}_g$ is the error term of the linear approximation of $\boldsymbol\varepsilon_1$ onto $\mathbf{W}_g^2$. Recall that $\psi= \textbf{0}$ when $h_1$ is true. The main challenge in deriving the asymptotic distribution is that $h_1$ may be misspecified, in which case both $\E[ \boldsymbol{\varepsilon}_g^1]$ and $\E [ \tilde{\boldsymbol{\varepsilon}}_g]$ may differ from zero. Let $\displaystyle \boldsymbol u_1 = \frac{1}{m} \sum_{g=1}^m {\mathbf{Z}_g^1}^{\intercal} \boldsymbol\varepsilon_g^1$, $\displaystyle \boldsymbol u_2 = \frac{1}{m} \sum_{g=1}^m {\mathbf{Z}_g^2}^{\intercal} \tilde{\boldsymbol\varepsilon}_g$. For $s=1,2$, let $\displaystyle \boldsymbol u_{s,3} = \frac{1}{m} \sum_{g=1}^m \operatorname{vec}\big({\mathbf{W}_g^s}^{\intercal} \mathbf{Z}_g^s \big)$ and $\displaystyle \boldsymbol u_{s,4} = \frac{1}{m} \sum_{g=1}^m \operatorname{vec} \big( {\mathbf{Z}_g^s}^{\intercal} \mathbf{Z}_g^s \big)$, where the operator $\operatorname{vec}$ vectorizes a matrix by stacking its columns sequentially. Lastly, let $\boldsymbol{u} = \left(\boldsymbol u_1^{\intercal}, \, \boldsymbol u_2^{\intercal}, \, \boldsymbol u_{1,3}^{\intercal}, \, \boldsymbol u_{1,4}^{\intercal}, \, \boldsymbol u_{2,3}^{\intercal}, \, \boldsymbol u_{2,4}^{\intercal}\right)^{\intercal}$. We introduce the following conditions.

\begin{assumption}\label{ass:encompas:append}
\hfill
\begin{enumerate}[label=\textbf{\alph*.}]
    \item For any $s\in\{1, \, 2\}$, $\displaystyle \plim_{m \to \infty} \left(\boldsymbol u_{s,3} - \mathbb E[\boldsymbol u_{s,3}]\right) = 0$ and $\displaystyle \plim_{m \to \infty} \left(\boldsymbol u_{s,4} - \mathbb E[\boldsymbol u_{s,4}]\right) = 0$. \label{ass:encompas:append:a}
    
    \item $\displaystyle \plim_{m \to \infty} \left(\boldsymbol u_1 - \mathbb E[\boldsymbol u_1]\right) = 0$ and $\plim \left(\boldsymbol u_2 - \mathbb E[\boldsymbol u_2]\right) = 0$. \label{ass:encompas:append:b}
    
    \item $\sqrt{m}\left(\boldsymbol{u} - \mathbb E[\boldsymbol{u}]\right)$ converges in distribution to $\mathcal{N}(\mathbf 0, \,\boldsymbol\Omega_{\boldsymbol u})$. \label{ass:encompas:append:c}
\end{enumerate}
\end{assumption}

Assumption~\ref{ass:encompas:append} imposes weak regularity conditions to establish the asymptotic distribution of $\sqrt{m}(\hat{\psi} - \psi)$. Conditions~\ref{ass:encompas:append:a} and~\ref{ass:encompas:append:b} follow from the weak law of large numbers applied to $\boldsymbol u_1$, $\boldsymbol u_2$, $\boldsymbol u_{s,3}$, and $\boldsymbol u_{s,4}$, which are sample means. Condition~\ref{ass:encompas:append:c} follows from standard central limit theorems (CLT) because $\boldsymbol{u}$ is a sample mean. The reason for also requiring the CLT to apply to $\boldsymbol u_{s,3}$ and $\boldsymbol u_{s,4}$ is that $\mathbb{E}[\boldsymbol u_1]$ and $\mathbb{E} [\boldsymbol u_2]$ may differ from zero when both $h_1$ and $h_2$ are misspecified \citep{hall2003large}. We have 
\allowdisplaybreaks
\begin{align*} 
    &\sqrt{m}(\hat{\psi} -  \psi) \\
    &= \sqrt{m} \big( \mathbf H^2 \big)^{-1} \frac{1}{m} \sum_{g=1}^m {\mathbf{W}_g^2}^{\intercal} \mathbf{Z}_g^2 \left( \frac{1}{m} \sum_{g=1}^m {\mathbf Z_g^2}^{\intercal} \mathbf{Z}_g^2 \right)^{-1} \frac{1}{m} \sum_{g=1}^m {\mathbf{Z}_g^2}^{\intercal} \hat{\boldsymbol{\varepsilon}}_g^1 - \sqrt{m}\psi \\
    &= \sqrt{m} \big( \mathbf H^2 \big)^{-1} \frac{1}{m} \sum_{g=1}^m {\mathbf{W}_g^2}^{\intercal} \mathbf{Z}_g^2 \left( \frac{1}{m} \sum_{g=1}^m {\mathbf Z_g^2}^{\intercal} \mathbf{Z}_g^2 \right)^{-1} \frac{1}{m} \sum_{g=1}^m {\mathbf{Z}_g^2}^{\intercal} \left( \mathbf{y}_g - \mathbf{W}_g^1 \hat{\theta}^1 \right) - \sqrt{m}\psi \\
    &= \sqrt{m} \big( \mathbf H^2 \big)^{-1} \frac{1}{m} \sum_{g=1}^m {\mathbf{W}_g^2}^{\intercal} \mathbf{Z}_g^2 \left( \frac{1}{m} \sum_{g=1}^m {\mathbf Z_g^2}^{\intercal} \mathbf{Z}_g^2 \right)^{-1} \frac{1}{m} \sum_{g=1}^m {\mathbf{Z}_g^2}^{\intercal} \left( \mathbf{W}_g^1 \big(\theta^1 -\hat{\theta}^1 \big) + \boldsymbol{\varepsilon}_g^1  \right) - \sqrt{m}\psi \\
    &= \sqrt{m} \big( \mathbf H^2 \big)^{-1} \frac{1}{m} \sum_{g=1}^m {\mathbf{W}_g^2}^{\intercal} \mathbf{Z}_g^2 \left( \frac{1}{m} \sum_{g=1}^m {\mathbf Z_g^2}^{\intercal} \mathbf{Z}_g^2 \right)^{-1} \frac{1}{m} \sum_{g=1}^m {\mathbf{Z}_g^2}^{\intercal} \left( \mathbf{W}_g^1 \big(\theta^1 -\hat{\theta}^1 \big) + \tilde{\boldsymbol{\varepsilon}}_g \right) \\
    &= \sqrt{m} \big( \mathbf H^2 \big)^{-1} \frac{1}{m} \sum_{g=1}^m {\mathbf{W}_g^2}^{\intercal} \mathbf{Z}_g^2 \left( \frac{1}{m} \sum_{g=1}^m {\mathbf Z_g^2}^{\intercal} \mathbf{Z}_g^2 \right)^{-1} \frac{1}{m} \sum_{g=1}^m {\mathbf{Z}_g^2}^{\intercal} \tilde{\boldsymbol{\varepsilon}}_g \\
    & \qquad - \sqrt{m} \big( \mathbf H^2 \big)^{-1} \frac{1}{m} \sum_{g=1}^m {\mathbf{W}_g^2}^{\intercal} \mathbf{Z}_g^2 \left( \frac{1}{m} \sum_{g=1}^m {\mathbf Z_g^2}^{\intercal} \mathbf{Z}_g^2 \right)^{-1} \frac{1}{m} \sum_{g=1}^m {\mathbf{Z}_g^2}^{\intercal} \mathbf{W}_g^1 \\
    & \qquad \qquad \cdot \big(\mathbf H^1 \big)^{-1} \frac{1}{m} \sum_{g=1}^m {\mathbf{W}_g^1}^{\intercal} \mathbf{Z}_g^1 \left(\frac{1}{m} \sum_{g=1}^m{\mathbf Z_g^1}^{\intercal} \mathbf{Z}_g^1 \right)^{-1} \frac{1}{m} \sum_{g=1}^m{\mathbf{Z}_g^1}^{\intercal} \boldsymbol{\varepsilon}_g^1 
\end{align*}

From the last equation, even though $\mathbb{E}[{\mathbf{Z}_g^1}^{\intercal} \boldsymbol\varepsilon_g^1]$ and $\mathbb{E} [{\mathbf{Z}_g^2}^{\intercal} \tilde{\boldsymbol\varepsilon}_g ]$ are not zero, we can adapt results from the misspecified moment condition framework \citep[see][]{hall2003large}. Specifically, Assumption~\ref{ass:encompas:append} guarantees that $\sqrt{m}(\hat{\psi} -  \psi)$ weakly converges to a normal distribution with mean zero. The exact expression of the asymptotic variance depends on $\boldsymbol\Omega_{\boldsymbol u}$ and is given in \citep{hall2003large}. In practice, we estimate this asymptotic variance using a paired bootstrap method.

\section{Supplementary Simulation Results}\label{app:sim}
\subsection{Comparing OLS and TSLS Estimates}

\begin{table}[htpb]
\vspace{3mm}
    \centering
    \small
    \renewcommand{\arraystretch}{1.8}
    \caption{Comparison between the OLS estimator and the TSLS estimator}
    \begin{tabular}{l cccc c cccc c}
        \hline\hline
         & \multicolumn{4}{c}{Bias} & & \multicolumn{4}{c}{MSE} & \\
         \cline{2-5} \cline{7-10}       
         & (1) & (2) & (3) & (4) & & (5) & (6) & (7) & (8) \\
         \hline \multicolumn{4}{l}{Panel A: Rank Models} \\ 
        $\gamma_{1}^{OLS}$ & -0.021 & -0.026 & -0.022 & -0.038 &  & 0.009 & 0.010 & 0.005 & 0.007 \\ 
        $\beta_{1,2}^{OLS}$ & 0.044 & 0.021 & 0.033 & 0.017 &  & 0.006 & 0.002 & 0.003 & 0.001 \\ 
        $\beta_{2,2}^{OLS}$ & -0.013 & -0.007 & 0.006 & 0.004 &  & 0.004 & 0.001 & 0.002 & 0.001 \\ 
        $\gamma_{1}^{IV}$ & 0.001 & 0.001 & 0.001 & -0.002 &  & 0.010 & 0.010 & 0.005 & 0.006 \\ 
        $\beta_{1,2}^{IV}$ & 0.012 & 0.003 & -0.000 & 0.001 &  & 0.012 & 0.002 & 0.007 & 0.002 \\ 
        $\beta_{2,2}^{IV}$ & -0.008 & -0.002 & 0.002 & 0.000 &  & 0.011 & 0.002 & 0.007 & 0.001 \\ 
        \multicolumn{4}{l}{Panel A: Quantile Models} \\  
        $\gamma_{1}^{OLS}$ & -0.023 & -0.027 & -0.023 & -0.038 &  & 0.009 & 0.010 & 0.005 & 0.007 \\ 
        $\beta_{min}^{OLS}$ & 0.034 & 0.016 & 0.027 & 0.015 &  & 0.004 & 0.001 & 0.002 & 0.001 \\ 
        $\beta_{max}^{OLS}$ & -0.004 & -0.002 & 0.012 & 0.006 &  & 0.003 & 0.001 & 0.001 & 0.000 \\ 
        $\gamma_{1}^{IV}$ & 0.001 & 0.001 & 0.001 & -0.002 &  & 0.009 & 0.010 & 0.005 & 0.006 \\ 
        $\beta_{min}^{IV}$ & 0.003 & 0.001 & -0.001 & 0.000 &  & 0.006 & 0.001 & 0.002 & 0.001 \\ 
        $\beta_{max}^{IV}$ & -0.001 & -0.001 & 0.001 & 0.000 &  & 0.005 & 0.001 & 0.002 & 0.001 \\ 
        \hline
        $\gamma$ & 1 & 2 & 1 & 2 & & 1 & 2 & 1 & 2 \\
        $networks$ & 2 & 2 & 5 & 5 & & 2 & 2 & 5 & 5  \\
        \hline
    \end{tabular}
    \label{tab:sim_OLS} \begin{minipage}{0.97\textwidth}
    \footnotesize \vspace{3mm} \textit{Notes:} The values of the coefficients used in DGP are as follows: $\gamma_0 = \gamma \times \big(1 ,0.25\big)$ and $\beta = \big(0.2, 0.5 \big)$. The instruments used in TSLS estimation for the rank models are peers' covariates ordered by their own values. The quantile models use the quantiles of the distribution of each covariate amongst a persons peers. \end{minipage}
    \vspace{3mm}
\end{table}

Table \ref{tab:sim_OLS} compares the OLS estimator and the TSLS estimator using ordered peer covariates as instruments in a sparse network setting, setting $\bar{d} = 2$. We estimate the models with small samples, consisting of either 2 or 5 networks, implying either 100 or 250 total observations. All estimators include network fixed effects. 

As expected, OLS is biased, and the bias is unaffected by the sample size. However, when we increase the strength of the covariates by increasing their coefficients in absolute value, the bias in the OLS estimator decreases. This is caused by the increased covariate strength reducing the relative share of endogeneity in the outcomes, which then reduces the impact of the reflection bias. 

\subsection{Simulations Without Contextual Effects}
For completeness, we here present our simulation results from the same DGP as the main text, except we have removed the contextual effects. The results are qualitatively similar, with any differences likely stemming from the relative increase in $\varepsilon$'s share of the variance in the outcomes. 

\begin{table}[!t]
\vspace{3mm}
    \centering
    \small
    \renewcommand{\arraystretch}{1.8}
    \caption{Comparison between the OLS estimator and the TSLS estimator}
    \begin{tabular}{l cccc c cccc c}
        \hline\hline
         & \multicolumn{4}{c}{Bias} & & \multicolumn{4}{c}{MSE} & \\
         \cline{2-5} \cline{7-10}       
         & (1) & (2) & (3) & (4) & & (5) & (6) & (7) & (8) \\
         \hline \multicolumn{4}{l}{Panel A: Rank Models} \\ 
        $\gamma_{1}^{OLS}$ & -0.005 & -0.003 & -0.001 & -0.001 &  & 0.009 & 0.009 & 0.005 & 0.005 \\ 
        $\beta_{1,2}^{OLS}$ & 0.027 & 0.008 & 0.030 & 0.010 &  & 0.003 & 0.001 & 0.002 & 0.001 \\ 
        $\beta_{2,2}^{OLS}$ & -0.025 & -0.007 & -0.027 & -0.009 &  & 0.003 & 0.001 & 0.002 & 0.001 \\ 
        $\gamma_{1}^{IV}$ & 0.003 & 0.003 & -0.001 & -0.001 &  & 0.009 & 0.009 & 0.005 & 0.005 \\ 
        $\beta_{1,2}^{IV}$ & 0.005 & 0.001 & 0.002 & 0.001 &  & 0.010 & 0.002 & 0.009 & 0.002 \\ 
        $\beta_{2,2}^{IV}$ & -0.005 & -0.001 & -0.002 & -0.001 &  & 0.009 & 0.002 & 0.008 & 0.002 \\ 
        \multicolumn{4}{l}{Panel A: Quantile Models} \\  
        $\gamma_{1}^{OLS}$ & -0.005 & -0.002 & -0.003 & -0.002 &  & 0.009 & 0.009 & 0.005 & 0.005 \\ 
        $\beta_{min}^{OLS}$ & 0.017 & 0.005 & 0.018 & 0.006 &  & 0.002 & 0.001 & 0.001 & 0.000 \\ 
        $\beta_{max}^{OLS}$ & -0.016 & -0.004 & -0.016 & -0.005 &  & 0.002 & 0.000 & 0.001 & 0.000 \\ 
        $\gamma_{1}^{IV}$ & 0.002 & 0.002 & -0.001 & -0.001 &  & 0.009 & 0.009 & 0.005 & 0.005 \\ 
        $\beta_{min}^{IV}$ & 0.000 & -0.000 & 0.001 & 0.000 &  & 0.003 & 0.001 & 0.002 & 0.001 \\ 
        $\beta_{max}^{IV}$ & -0.000 & 0.000 & -0.000 & -0.000 &  & 0.003 & 0.001 & 0.002 & 0.001 \\ 
         \hline
        $\gamma$ & 1 & 2 & 1 & 2 & & 1 & 2 & 1 & 2 \\
        $networks$ & 2 & 2 & 5 & 5 & & 2 & 2 & 5 & 5  \\
        \hline
    \end{tabular}
    \label{tab:sim_OLS_without_contextual} \begin{minipage}{0.97\textwidth}
    \footnotesize \vspace{3mm} \textit{Notes:} The values of the coefficients used in DGP are as follows: $\gamma_0 = \gamma \times \big(1 ,0.25\big)$ and $\beta = \big(0.2, 0.5 \big)$. The instruments used in TSLS estimation for the rank models are peers' covariates ordered by their own values. The quantile models use the quantiles of the distribution of each covariate amongst a persons peers. \end{minipage}
    \vspace{3mm}
\end{table}

\begin{table}[!t]
\vspace{3mm}
    \centering
    \small
    \renewcommand{\arraystretch}{1.8}
    \caption{Finite sample performance of Rank and Quantile models} \label{tab:sim_change_ng_noxbar}
    \begin{tabular}{l ccccc c ccccc c}
        \hline \hline
         & \multicolumn{5}{c}{Bias } & & \multicolumn{5}{c}{MSE} \\
         \cline{2-6} \cline{8-12}       
         & (1) & (2) & (3) & (4) &(5) &  & (6) & (7) & (8)& (9) & (10) \\
        \multicolumn{5}{l}{\textbf{Panel A: Saturated model}} \\ 
        \hline 
        $\beta_{1,2}$ & -0.001 & -0.001 & -0.001 & -0.000 & 0.000 &  & 0.007 & 0.007 & 0.004 & 0.002 & 0.001 \\ 
        $\beta_{2,2}$ & 0.001 & 0.001 & 0.001 & 0.000 & -0.000 &  & 0.006 & 0.006 & 0.004 & 0.002 & 0.001 \\ 
        $\beta_{1,5}$ & -0.007 & 0.012 & -0.001 & 0.000 & 0.000 &  & 0.088 & 0.067 & 0.060 & 0.015 & 0.015 \\ 
        $\beta_{2,5}$ & 0.012 & -0.022 & 0.003 & 0.003 & -0.003 &  & 0.323 & 0.288 & 0.260 & 0.092 & 0.135 \\ 
        $\beta_{3,5}$ & -0.008 & 0.005 & -0.004 & -0.011 & 0.005 &  & 0.352 & 0.312 & 0.281 & 0.240 & 0.271 \\ 
        $\beta_{4,5}$ & 0.014 & 0.010 & 0.007 & 0.012 & -0.003 &  & 0.253 & 0.271 & 0.161 & 0.169 & 0.147 \\ 
        $\beta_{5,5}$ & -0.010 & -0.004 & -0.005 & -0.005 & 0.001 &  & 0.084 & 0.072 & 0.045 & 0.027 & 0.015 \\ 
        \multicolumn{5}{l}{\textbf{Panel B: Restricted model } } \\
        \hline
        $\beta_{\tau_1}$ & 0.008 & 0.001 & -0.003 & -0.000 & -0.001 &  & 0.035 & 0.047 & 0.022 & 0.005 & 0.003 \\ 
        $\beta_{\tau_2}$ & -0.024 & 0.001 & 0.008 & -0.002 & 0.001 &  & 0.227 & 0.335 & 0.160 & 0.047 & 0.027 \\ 
        $\beta_{\tau_3}$ & 0.022 & -0.002 & -0.005 & 0.004 & -0.000 &  & 0.224 & 0.289 & 0.138 & 0.056 & 0.030 \\ 
        $\beta_{\tau_4}$ & -0.006 & 0.001 & 0.000 & -0.002 & -0.000 &  & 0.034 & 0.033 & 0.016 & 0.008 & 0.004 \\ 
        \hline
        $n_g$ &  5 & 10 & 20 & 50& 100 & &  5 & 10 & 20 & 50& 100 \\
        $\bar{d}$ &  5 & 5 & 5 & 5& 5 & & 5 & 5 & 5 & 5& 5 \\
        \hline
    \end{tabular}
    \begin{minipage}{0.97\textwidth}   \footnotesize \vspace{3mm} \textit{Notes:}  The data generating process is the quantile model with $\beta_{\tau} = \left(-0.05,0.35,0.15,0.1\right)$. $n_g$ gives the number of networks used in the simulation and $\bar{d}$ is the maximum number of peers for each individual. All the simulations include network fixed effects. The instruments used is the distribution of the peers covariate values, as measured either by the ranked peer covariate values or the quantiles of peer covariate values. The results are based on 10{,}000 draws.
 \end{minipage}
\end{table}

\begin{table}[t!]
\centering
\caption{Monte Carlo Simulations --- Encompassing tests}
\footnotesize
\begin{threeparttable}
    \begin{tabular}{P{1.6cm}P{1.6cm}lP{1.6cm}P{1.6cm}lP{1.6cm}P{1.6cm}}
    \toprule
    \multicolumn{2}{c}{2 qtls. vs. 3 qtls.} & &\multicolumn{2}{c}{3 qtls. vs. 4 qtls.} & &\multicolumn{2}{c}{4 qtls. vs. 5 qtls.}\\[0.5ex]
    \cline{1-2} \cline{4-5} \cline{7-8} \addlinespace[0.5ex] 5\% & 10\% && 5\% & 10\% && 5\% & 10\% \\
    \midrule
    \multicolumn{8}{c}{DGP A:   $\beta = (0, 0.05, 0.2, 0.3)$}      \\[0.5ex]
    1.000    & 1.000   &    & 0.201   & 0.307  &   & 0.008  & 0.020 \\[1.5ex]
    \multicolumn{8}{c}{DGP B: $\beta =   (0.3, 0.2, 0.05, 0)$}      \\[0.5ex]
    1.000    & 1.000   &    & 0.987   & 0.994  &   & 0.013  & 0.027 \\[1.5ex]
    \multicolumn{8}{c}{DGP C: $\beta = (0,   0.275, 0.275, 0)$}     \\[0.5ex]
    1.000    & 1.000   &    & 0.999   & 1.000  &   & 0.008  & 0.021 \\[1.5ex]
    \multicolumn{8}{c}{DGP D: $\beta =   (0.275, 0, 0, 0.275)$}     \\[0.5ex]
    0.011    & 0.023   &    & 0.008   & 0.016  &   & 0.012  & 0.026 \\[1.5ex]
    \multicolumn{8}{c}{DGP E: $\beta =   (-0.05, 0.35, 0.15, 0.1)$} \\[0.5ex]
    1.000    & 1.000   &    & 1.000   & 1.000  &   & 0.010  & 0.023 \\[1.5ex]
    \multicolumn{8}{c}{DGP F (LIM model):   $\beta = 0.55$}         \\[0.5ex]
    1.000    & 1.000   &    & 0.991   & 0.997  &   & 0.703  & 0.815 \\\bottomrule
    \end{tabular}
\begin{tablenotes}[para,flushleft]
The columns labeled ``$a$ qtls. vs. $b$ qtls.``, for integers $a$ and $b$, report the share of rejections of the null hypothesis that the model with $a$ quantile levels does not perform worse than the model with $b$ quantile levels, at the significance levels indicated in the second row.
\end{tablenotes}
\end{threeparttable}
\end{table}

\begin{table}[t!]
\centering
\small
\caption{Monte Carlo Simulations --- Estimation of the Quantile Specification}
\begin{threeparttable}
\begin{tabular}{cccclclcc}
    \toprule
    \multicolumn{4}{c}{Quantiles} & & \multicolumn{1}{c}{LiM} & & \multicolumn{2}{c}{CES} \\
    \cline{1-4} \cline{6-6} \cline{8-9} \addlinespace[0.5ex] $\beta^1$ & $\beta^2$ & $\beta^3$ & $\beta^4$ & & $\beta$ & & $\rho$ & $\beta$ \\
    \midrule
    \multicolumn{9}{c}{DGP A:   $\beta = (0, 0.05, 0.2, 0.3)$}            \\[0.5ex]
    -0.000    & 0.050     & 0.200     & 0.300     &  & 0.544   &  & 34.759   & 0.554   \\
    (0.004)   & (0.018)   & (0.037)   & (0.025)   &  & (0.003) &  & (12.301) & (0.004) \\[2ex]
    \multicolumn{9}{c}{DGP B: $\beta =   (0.3, 0.2, 0.05, 0)$}                         \\[0.5ex]
    0.300     & 0.200     & 0.050     & 0.000     &  & 0.544   &  & -3.165   & 0.530   \\
    (0.006)   & (0.027)   & (0.049)   & (0.030)   &  & (0.004) &  & (0.165)  & (0.003) \\[2ex]
    \multicolumn{9}{c}{DGP C: $\beta = (0,   0.275, 0.275, 0)$}                        \\[0.5ex]
    0.000     & 0.275     & 0.275     & 0.000     &  & 0.544   &  & 2.918    & 0.552   \\
    (0.004)   & (0.020)   & (0.040)   & (0.027)   &  & (0.003) &  & (0.243)  & (0.003) \\[2ex]
    \multicolumn{9}{c}{DGP D: $\beta =   (0.275, 0, 0, 0.275)$}                        \\[0.5ex]
    0.275     & -0.000    & 0.000     & 0.275     &  & 0.555   &  & -1.102   & 0.545   \\
    (0.005)   & (0.023)   & (0.044)   & (0.028)   &  & (0.003) &  & (0.132)  & (0.003) \\[2ex]
    \multicolumn{9}{c}{DGP E: $\beta =   (-0.05, 0.35, 0.15, 0.1)$}                    \\[0.5ex]
    -0.050    & 0.350     & 0.150     & 0.100     &  & 0.543   &  & 4.696    & 0.556   \\
    (0.004)   & (0.020)   & (0.039)   & (0.026)   &  & (0.003) &  & (0.553)  & (0.003) \\[2ex]
    \multicolumn{9}{c}{DGP F (LIM model):   $\beta = 0.55$}                            \\[0.5ex]
    0.112     & 0.198     & 0.093     & 0.147     &  & 0.550   &  & 1.001    & 0.550   \\
    (0.005)   & (0.024)   & (0.046)   & (0.029)   &  & (0.003) &  & (0.078)  & (0.003) \\\bottomrule
    \end{tabular}
\begin{tablenotes}[para,flushleft] 
\footnotesize
The models are simulated and estimated 1{,}000 times. Values without parentheses represent average peer effect estimates, while those in parentheses correspond to standard errors. The instrument matrix includes the quantiles of $\boldsymbol{x}$ and $\bar{\boldsymbol{x}}$ among friends, computed at ten levels uniformly spaced between 0 and 1. DGPs A--E are generated from the proposed quantile-based model, with four quantiles at $\{0, ~1/3, ~ 2/3,~ 1\}$, and $\beta = (\beta^0, ~\beta^1, ~\beta^2, ~\beta^3)$ is the vector of peer effects at each quantile. DGP F follows the standard LIM model with only spillover effects, where $\beta = 0.55$. All estimations account for unobserved subnetwork heterogeneity using fixed effects.
\end{tablenotes}
\end{threeparttable}
\end{table}

\clearpage
\newpage 

\section{Supplemental Results on the Empirical Application}\label{append:addhealth}

\begin{table}[!h]
\centering
\caption{Sample size}
\label{tab:sjze}
\begin{tabular}{lc}
\toprule
Outcome                     & $n$  \\
\midrule
Academic achievements (GPA) & 70{ }022  \\
Academic effort             & 76{ }261  \\
Extracurricular activities  & 79{ }694  \\
Future perception           & 75{ }822  \\
Trouble at school           & 76{ }206  \\
Smoking                     & 74{ }904  \\
Drinking                    & 74{ }741  \\
Risky behaviors             & 75{ }545  \\
Self-esteem                 & 71{ }649  \\
Physical exercise           & 71{ }683  \\
Fighting                    & 71{ }606 \\\bottomrule
\end{tabular}
\end{table}

\begin{table}[!htbp]
\centering
\caption{Empirical Results with five quantile levels or without isolated students (IS)}
\label{tab:append:addhealth}
\begin{threeparttable}
    \begin{tabular}{cccccd{1}cccc}
    \toprule
    \multicolumn{5}{c}{Five quantile levels} && \multicolumn{4}{c}{Four quantile levels without IS}\\
    \cline{1-5} \cline{7-10}  \addlinespace[0.5ex] \multicolumn{1}{c}{$\beta^1$} & \multicolumn{1}{c}{$\beta^2$} & \multicolumn{1}{c}{$\beta^3$} & \multicolumn{1}{c}{$\beta^4$} & \multicolumn{1}{c}{$\beta^5$} & & \multicolumn{1}{c}{$\beta^1$} & \multicolumn{1}{c}{$\beta^2$} & \multicolumn{1}{c}{$\beta^3$} & \multicolumn{1}{c}{$\beta^4$}\\
    \midrule
    \multicolumn{10}{c}{Academic achievements (GPA)}                                         \\[0.25ex]
    0.011   & 0.225   & 0.174   & 0.464   & -0.129  &  & 0.061   & 0.176   & 0.641   & -0.115  \\
    (0.054) & (0.104) & (0.139) & (0.135) & (0.073) &  & (0.045) & (0.082) & (0.090) & (0.054) \\[0.5ex]
    \multicolumn{10}{c}{Academic effort}                                                       \\[0.25ex]
    0.110   & 0.157   & 0.110   & 0.073   & 0.106   &  & 0.144   & 0.151   & 0.147   & 0.099   \\
    (0.038) & (0.068) & (0.075) & (0.065) & (0.051) &  & (0.032) & (0.061) & (0.055) & (0.047) \\[0.5ex]
    \multicolumn{10}{c}{Extracurricular   activities}                                          \\[0.25ex]
    -0.192  & 0.649   & 0.046   & 0.209   & -0.015  &  & -0.085  & 0.552   & 0.249   & -0.004  \\
    (0.097) & (0.155) & (0.139) & (0.076) & (0.022) &  & (0.080) & (0.118) & (0.079) & (0.021) \\[0.5ex]
    \multicolumn{10}{c}{Future perception}                                                     \\[0.25ex]
    0.157   & 0.030   & 0.166   & 0.073   & 0.068   &  & 0.149   & 0.138   & 0.167   & 0.073   \\
    (0.035) & (0.085) & (0.131) & (0.139) & (0.082) &  & (0.033) & (0.086) & (0.115) & (0.068) \\[0.5ex]
    \multicolumn{10}{c}{Trouble at school}                                                     \\[0.25ex]
    0.122   & -0.147  & 0.501   & 0.043   & 0.037   &  & 0.035   & 0.242   & 0.275   & 0.028   \\
    (0.097) & (0.153) & (0.122) & (0.088) & (0.046) &  & (0.075) & (0.113) & (0.074) & (0.041) \\[0.5ex]
    \multicolumn{10}{c}{Smoking}                                                               \\[0.25ex]
    -0.082  & 0.227   & 0.245   & 0.279   & 0.085   &  & -0.107  & 0.365   & 0.373   & 0.122   \\
    (0.107) & (0.139) & (0.092) & (0.048) & (0.022) &  & (0.083) & (0.095) & (0.055) & (0.019) \\[0.5ex]
    \multicolumn{10}{c}{Drinking}                                                              \\[0.25ex]
    0.018   & 0.205   & 0.045   & 0.120   & 0.081   &  & 0.127   & 0.043   & 0.235   & 0.080   \\
    (0.204) & (0.333) & (0.195) & (0.068) & (0.016) &  & (0.137) & (0.180) & (0.083) & (0.015) \\[0.5ex]
    \multicolumn{10}{c}{Risky behaviors}                                                       \\[0.25ex]
    -0.089  & 0.441   & -0.065  & 0.321   & 0.085   &  & -0.082  & 0.383   & 0.251   & 0.123   \\
    (0.130) & (0.219) & (0.158) & (0.083) & (0.024) &  & (0.095) & (0.150) & (0.082) & (0.021) \\[0.5ex]
    \multicolumn{10}{c}{Self-esteem}                                                           \\[0.25ex]
    0.126   & 0.038   & 0.202   & 0.121   & -0.026  &  & 0.112   & 0.157   & 0.239   & -0.021  \\
    (0.058) & (0.134) & (0.181) & (0.106) & (0.026) &  & (0.050) & (0.101) & (0.082) & (0.025) \\[0.5ex]
    \multicolumn{10}{c}{Physical exercise}                                                     \\[0.25ex]
    0.113   & 0.001   & 0.232   & 0.130   & -0.040  &  & 0.092   & 0.164   & 0.193   & -0.003  \\
    (0.055) & (0.077) & (0.085) & (0.096) & (0.059) &  & (0.048) & (0.070) & (0.081) & (0.047) \\[0.5ex]
    \multicolumn{10}{c}{Fighting}                                                              \\[0.25ex]
    0.287   & -0.088  & 0.067   & 0.160   & 0.165   &  & 0.236   & -0.012  & 0.203   & 0.183   \\
    (0.104) & (0.157) & (0.112) & (0.067) & (0.030) &  & (0.079) & (0.112) & (0.068) & (0.027)\\\bottomrule
    \end{tabular}
\begin{tablenotes}[para,flushleft]
This table reports supplemental results from the empirical application using a model with five quantile levels (from 0 to 1, spaced by $\frac{1}{5}$) and a model with four quantile levels, as in the paper, but excluding isolated students (i.e., those with no friends). 
\end{tablenotes}
\end{threeparttable}
\end{table}

\begin{table}[!htbp]
\centering
\caption{Comparison of OLS and IV Estimates}
\label{tab:append:OLS}
\begin{threeparttable}
    \begin{tabular}{ccccd{1}ccccc}
    \toprule
    \multicolumn{4}{c}{OLS} && \multicolumn{4}{c}{IV} & \multicolumn{1}{c}{Hausman}\\
    \cline{1-4} \cline{6-9} \addlinespace[0.5ex] \multicolumn{1}{c}{$\beta^1$} & \multicolumn{1}{c}{$\beta^2$} & \multicolumn{1}{c}{$\beta^3$} & \multicolumn{1}{c}{$\beta^4$} & & \multicolumn{1}{c}{$\beta^1$} & \multicolumn{1}{c}{$\beta^2$} & \multicolumn{1}{c}{$\beta^3$} & \multicolumn{1}{c}{$\beta^4$} &\multicolumn{1}{c}{($p$-value)}\\
    \midrule
    \multicolumn{10}{c}{Academic   achievements (GPA)}                                               \\
    0.051   & 0.144   & 0.220   & -0.027  &  & 0.061   & 0.176   & 0.641   & -0.115  &   0.000      \\
    (0.009) & (0.016) & (0.017) & (0.012) &  & (0.045) & (0.082) & (0.091) & (0.055) &              \\
    \multicolumn{10}{c}{Academic effort}                                                             \\
    0.027   & 0.045   & 0.075   & 0.004   &  & 0.144   & 0.151   & 0.147   & 0.099   &   0.000      \\
    (0.005) & (0.010) & (0.008) & (0.008) &  & (0.032) & (0.062) & (0.055) & (0.047) &              \\
    \multicolumn{10}{c}{Extracurricular   activities}                                                \\
    0.008   & 0.205   & 0.083   & 0.004   &  & -0.085  & 0.552   & 0.249   & -0.004  &   0.000      \\
    (0.019) & (0.019) & (0.016) & (0.005) &  & (0.081) & (0.118) & (0.079) & (0.021) &              \\
    \multicolumn{10}{c}{Future perception}                                                           \\
    0.028   & 0.064   & 0.065   & -0.032  &  & 0.149   & 0.138   & 0.167   & 0.073   &   0.000      \\
    (0.006) & (0.015) & (0.019) & (0.014) &  & (0.032) & (0.087) & (0.114) & (0.067) &              \\
    \multicolumn{10}{c}{Trouble at school}                                                           \\
    -0.050  & 0.083   & 0.048   & -0.007  &  & 0.035   & 0.242   & 0.275   & 0.028   &   0.000      \\
    (0.016) & (0.017) & (0.012) & (0.006) &  & (0.075) & (0.114) & (0.074) & (0.042) &              \\
    \multicolumn{10}{c}{Smoking}                                                                     \\
    -0.106  & 0.239   & 0.299   & 0.055   &  & -0.107  & 0.365   & 0.373   & 0.122   &   0.000      \\
    (0.023) & (0.023) & (0.012) & (0.004) &  & (0.084) & (0.095) & (0.055) & (0.019) &              \\
    \multicolumn{10}{c}{Drinking}                                                                    \\
    -0.141  & 0.126   & 0.173   & 0.020   &  & 0.127   & 0.043   & 0.235   & 0.080   &   0.000      \\
    (0.040) & (0.057) & (0.023) & (0.004) &  & (0.135) & (0.178) & (0.084) & (0.015) &              \\
    \multicolumn{10}{c}{Risky behaviors}                                                             \\
    -0.176  & 0.290   & 0.125   & 0.030   &  & -0.082  & 0.383   & 0.251   & 0.123   &   0.000      \\
    (0.026) & (0.033) & (0.016) & (0.005) &  & (0.094) & (0.149) & (0.082) & (0.021) &              \\
    \multicolumn{10}{c}{Self-esteem}                                                                 \\
    0.038   & 0.063   & 0.018   & -0.015  &  & 0.112   & 0.157   & 0.239   & -0.021  &   0.000      \\
    (0.008) & (0.017) & (0.014) & (0.003) &  & (0.050) & (0.101) & (0.082) & (0.025) &              \\
    \multicolumn{10}{c}{Physical exercise}                                                           \\
    0.038   & 0.088   & 0.080   & -0.029  &  & 0.092   & 0.164   & 0.193   & -0.003  &   0.000      \\
    (0.007) & (0.011) & (0.013) & (0.009) &  & (0.048) & (0.068) & (0.081) & (0.048) &              \\
    \multicolumn{10}{c}{Fighting}                                                                    \\
    -0.045  & 0.098   & 0.069   & 0.031   &  & 0.236   & -0.012  & 0.203   & 0.183   &   0.000      \\
    (0.018) & (0.020) & (0.010) & (0.005) &  & (0.079) & (0.112) & (0.068) & (0.027) &           \\\bottomrule
    \end{tabular}
\begin{tablenotes}[para,flushleft]
This table reports supplemental results from the empirical application where the model is estimated by OLS. The IV results correspond to the main estimates reported in the paper.
\end{tablenotes}
\end{threeparttable}
\end{table}

\clearpage
\newpage

\section{Constructing Sample Quantiles}\label{app:samp_quant}

When $d_i = 0$, we set $\tilde{y}_i^k = 0$ for $k=1, \dots, M$. This decision is innocuous if the model includes separate intercepts for isolated and non-isolated individuals. When $d_i=M > 0$, we simply let $\tilde{y}_{i}^k = \tilde{y}_{i,k}$ for $k=1,\dots,M$. 

When $d_i > 0$ and $d_i \neq M$, we compute sample quantiles using linear interpolation. For a given quantile index $k$ and sample size $d_i$, we first compute the (possibly non-integer) position $h(k, d_i) = \frac{k-1}{M-1}(d_i - 1) + 1$, which maps the quantile level $(k-1)/(M-1)$ to a location within the ordered peer outcomes $\{\tilde{y}_{i,k}\}_{k=1}^{d_i}$. When $h(k, d_i)$ is not an integer, the $k$-th sample quantile $\tilde{y}_i^k$ is obtained by linearly interpolating between the two peer outcomes $\lfloor h(k,d_i) \rfloor$ and $\lfloor h(k,d_i) \rfloor + 1$, weighted by their respective distances from $h(k, d_i)$: \begin{align*}
\tilde{y}_{i}^k = \begin{cases}
\tilde{y}_{i, \lfloor h(k,d_i) \rfloor} & \text{ if } h(k,d_i) = \lfloor h(k,d_i) \rfloor \\
\bigl(h(k,d_i) - \lfloor h(k,d_i)\rfloor\bigr)\,
\tilde{y}_{i,\lfloor h(k,d_i)\rfloor + 1} 
\\
\qquad + \bigl(\lfloor h(k,d_i)\rfloor + 1 - h(k,d_i)\bigr)\,
\tilde{y}_{i,\lfloor h(k,d_i)\rfloor} & \text{ if } h(k,d_i) \neq \lfloor h(k,d_i) \rfloor \\
\end{cases}
\end{align*}
This definition of the sample quantile is popular and is known as the Type 7 sample quantile \citep[see][]{hyndman1996sample}. 

\end{document}